\newtheorem{theorem}{Theorem}[section]
\newtheorem{corollary}[theorem]{Corollary}
\newtheorem{lemma}[theorem]{Lemma}
\newcommand\D{\ensuremath{\mathcal{D}}}
\newcommand{\RR}{\ensuremath{\mathbb R}}  
\newcommand\CR{\hbox{\tt cr}}		  
\newcommand{\CRA}{\hbox{\tt acr}}		  
\newcommand{\enf}{{\rm enf}}
\newcommand{\unw}{{\rm unw}}
\newcommand{\positive}{{\rm pos}}
\newcommand{\negative}{{\rm neg}}
\def\DEF#1{\textbf{\emph{#1}}}
\begin{document}

\title{Adding one edge to planar graphs makes crossing number\\
		and 1-planarity hard\thanks{A preliminary version of this work
		was presented at the 26th Annual Symposium on Computational Geometry~\cite{cm-socg-2010}.}}

\author{Sergio Cabello\thanks{Supported in part by 
        the Slovenian Research Agency, program P1-0297, projects J1-7218 and J1-4106,
		and within the EUROCORES Programme EUROGIGA (project GReGAS) of the European Science Foundation.}\\[2mm]
	Department of Mathematics\\
	FMF, University of Ljubljana\\
	Slovenia\\
	email: {\tt sergio.cabello@fmf.uni-lj.si}
\and
        Bojan Mohar\thanks{Supported in part by the ARRS, 
        Research Program P1-0297, by an NSERC Discovery Grant,
        and by the Canada Research Chair Program.}~\thanks{On leave 
        from IMFM \& FMF, Department of Mathematics,
        University of Ljubljana,
        1000 Ljubljana, Slovenia.}\\[2mm]
        Department of Mathematics \\
        Simon Fraser University \\
        Burnaby, B.C. \ V5A 1S6 \\
        email: {\tt mohar@sfu.ca}}

\date{\today}

\maketitle

\begin{abstract}
A graph is \emph{near-planar} if it can be obtained from a planar graph by adding an edge.
We show the surprising fact that it is NP-hard to compute the crossing number of near-planar graphs.
A graph is \emph{1-planar} if it has a drawing where every edge is crossed by at most one other edge.
We show that it is NP-hard to decide whether a given near-planar graph is 1-planar.
The main idea in both reductions is to consider the problem of simultaneously drawing
two planar graphs inside a disk, with some of its vertices fixed at the boundary of the disk. 
This leads to the concept of anchored embedding, which is of independent interest.
As an interesting consequence we obtain a new, geometric 
proof of NP-completeness of the crossing number problem,
even when restricted to cubic graphs. This resolves 
a question of Hlin\v{e}n\'y.
\end{abstract}

\section{Introduction}

A \DEF{drawing} of a graph $G$ in the plane is a representation of $G$ 
where vertices are represented by distinct points of $\RR^2$, edges are represented 
by simple polygonal arcs in $\RR^2$ joining points that correspond to their endvertices,
and the interior of every arc representing an edge contains no points representing the vertices of $G$. 
A \DEF{crossing} of a drawing $\D$ is 
a pair $(\{e,e'\},p)$, where $e$ and $e'$ are distinct edges and $p\in \RR^2$ is 
a point that belongs to the interiors of both arcs representing $e$ and $e'$ in the drawing $\D$.
The number of crossings of a drawing $\D$ is denoted by $\CR(\D)$ and is called the
\DEF{crossing number} of the drawing.
The \DEF{crossing number} $\CR(G)$ of a graph $G$ is the minimum $\CR(\D)$
taken over all drawings $\D$ of $G$. 
A \DEF{planar graph} is a graph whose crossing number is $0$. 
A drawing $\D$ with $\CR(\D)=0$ is called an \DEF{embedding} of $G$ (in the plane).
A drawing $\D$ is a \DEF{$1$-drawing} if each edge participates in at most $1$ crossing.
A \DEF{$1$-planar graph} is a graph that has some $1$-drawing.

A graph is \DEF{near-planar} if it can be obtained from a planar graph $G$ by adding an extra edge $xy$
between vertices $x$ and $y$ of $G$. We denote such near-planar graph by $G+xy$.
(The term \emph{almost planar} has also been used for the same concept~\cite{HS,M}.)
Near-planarity is a very weak relaxation of planarity, and hence it is natural
to study properties of near-planar graphs. Graphs embeddable in the torus and
apex graphs are superfamilies of near-planar graphs.

We show that it is NP-hard to compute the crossing number of near-planar graphs. 
We also show that it is NP-hard to decide whether a given near-planar graph
is $1$-planar, even when the graph has bounded degree. 
These results are not only surprising but also fundamental. They 
provide evidence that computing crossing numbers is an extremely challenging task,
even for the simplest families of non-planar graphs.

In the course of developing our NP-hardness reductions, we introduce a new notion of
anchored drawings and anchored embeddings, whose study is of independent interest.
We also prove various related hardness results for rectilinear crossing number, anchored
crossing number, and crossing number with rotations.

We show that these problems are NP-hard using a reduction from satisfiability ({\sc SAT}).
Our reductions are based on considering drawings of two planar graphs inside 
a disk with some of its vertices at prescribed positions of the boundary. 
The reductions are inspired by the work of Werner~\cite{W}, although the details in our proofs
are essentially different.
We can then use a technique from \cite{M} to relate these drawings to drawings
of near-planar graphs.

Our approach is geometric, and in particular
we obtain a new, geometric proof of NP-completeness of the crossing number problem,
even when restricted to cubic graphs. Hardness of the crossing
number problem for cubic graphs was established by
Hlin\v{e}n\'y \cite{Hli}, who asked if one can prove this result 
by a reduction from an NP-complete geometric problem instead 
of the {\sc Linear Arrangement} problem used in his proof.

\paragraph*{Related work.}
It has been known for quite some time that it is NP-hard to compute crossing numbers of graphs.
Previous proofs involved reductions from the problem {\sc Linear Arrangement}~\cite{GJ,Hli,PSS}. 
The spirit of our reduction is completely different from previous proofs
and hence of interest in its own right. 
In particular, we provide an alternative proof that computing crossing numbers is NP-hard (even when restricted to cubic graphs).
Our NP-hardness proof is more complicated, 
but it provides the additional bonus of having control over the structure
of the graph and henceforth working for near-planar graphs.

The study of crossing numbers for near-planar graphs was initiated by Riskin~\cite{Ri},
who showed that if $G$ is a planar 3-connected cubic graph, then the 
crossing number of $G+xy$ is equal to the length of a shortest path in the geometric
dual graph of the planar subgraph $G-x-y$. 
A consequence of his result it that the crossing number of a 3-connected cubic near-planar graph can
be computed in polynomial time.
Riskin asked if a similar result holds in more general situations. 
This was disproved by Mohar~\cite{M} and Gutwenger, Mutzel, and Weiskircher~\cite{GMW}.
In fact, the result cannot be extended even assuming 5-connectivity.

For near-planar graphs of maximum degree $\Delta$, 
Hlin\v{e}n\'y and Salazar~\cite{HS} provided a $\Delta$-approximation algorithm for the crossing number.
Later, we~\cite{CM} improved the approximation factor of this algorithm 
to $\lfloor\Delta/2\rfloor$ 
using combinatorial bounds that relate the crossing number
of $G+xy$ to the number of vertex-disjoint and edge-disjoint cycles in $G$ that separate $x$ and $y$.
This separation has to be defined in a certain strong sense over all planar embeddings of $G$.
Approximation algorithms for the crossing number have been provided
for some superfamilies of near-planar graphs~\cite{CH,CHM-12,HS2}. However, it should
be noted that it was not known if computing the crossing number in 
any of those families is NP-hard. 
Combinatorial bounds have also been studied in~\cite{BPT,DV}.

Our previous paper~\cite{CM} contained a closely related result: 
namely, we showed that computing the crossing number of near-planar graphs is NP-hard
for \emph{weighted} graphs. Unfortunately, our reduction was from {\sc Partition},
and hence required weights that are not polynomially bounded in the size of the graph.
Moreover, the planarizing edge $xy$ needed large weight, so $G+xy$ could not
be transformed into an unweighted near-planar graph.
See the discussion below.
In this paper we use completely different techniques.

Kawarabayashi and Reed~\cite{KR}, improving upon a result of Grohe~\cite{G},
have shown that for each constant $k_0$ there is a linear-time algorithm 
that decides if the crossing number of an input graph is at most $k_0$. 
Hence, it is clear that in our reduction the crossing number has to be an increasing
function in the number of vertices.
The currently best approximation to the crossing number of general graphs is 
by Chuzhoy~\cite{chuzhoy}.

The concept of $1$-planar graphs was introduced by Ringel~\cite{ringel}.
The concept of $1$-planarity is more subtle and because of this, the results are scarcer.
There is a lack of fundamental basic tools needed to tackle problems about 1-planarity.
Some of them are provided by Korzhik and Mohar in~\cite{km-1-planar-appear} (preliminary version in~\cite{km-1-planar-08}), 
where they studied minimal non-1-planar graphs
and, in particular, showed that recognizing $1$-planar graphs is NP-complete. 
Our proof that recognizing $1$-planar graphs is hard even for near-planar graphs
is completely different from their proof and is also much more transparent.

\paragraph*{Weighted vs.\ unweighted edges.}
Our discussion for crossing numbers will be simplified by using weighted edges.
When each edge $e$ of $G$ has a weight $w_e\in \mathbb{N}$, the crossing number 
of a drawing $\D$ is defined as $\sum w_{e}\cdot w_{e'}$, the sum taken over all 
crossings $(\{e,e'\},p)$ in $\D$. 
The crossing number of $G$ is then defined again as the minimum $\CR(\D)$
taken over all drawings $\D$ of $G$.

Let $G$ be a weighted graph.
Consider the unweighted graph $H_G$ with $V(H_G)=V(G)$, in which there are
$w_{uv}$ subdivided ``parallel" edges between $u$ and $v$ in $H_G$, for each edge $uv\in E(G)$.
It is easy to see that $\CR(H_G)=\CR(G)$. 
If the weights of $G$ are polynomially bounded in $|V(G)|$, then
$H_G$ can be constructed in polynomial time. 
Hence, in our reduction it will be enough to describe a weighted planar graph $G$ whose weights
are polynomially bounded, and then describe which extra edge $xy$ we add. The resulting
unweighted near-planar graph is $H_G+xy$.
The additional edge $xy$ that we add must have
unit weight, as otherwise the resulting graph $H_{G+xy}$ would not be near-planar.

\paragraph*{Anchored graphs.} 
The main idea in our proof is considering a concept of anchored graphs,
and studying the crossing number and $1$-planarity of such objects. 
Although this seems to be a fundamental notion, we are not aware of any previous work
considering anchored graphs.

An \DEF{anchored graph} is a triple $(G,A_G,\pi_G)$, where $G$ is a graph, $A_G$ is a subset
of vertices of $G$, and $\pi_G$ is a cyclic ordering of $A_G$. 
For reasons that will become evident soon, we call the vertices $A_G$ \DEF{anchors}.
With a slight abuse of notation, we will sometimes use $G$ to denote an anchored graph when
the anchor set $A_G$ and the ordering $\pi_G$ are implicit.

Let $\Omega\subseteq \mathbb{R}^2$ be a topological disk whose boundary is a closed polygonal line. 
An \DEF{anchored drawing} of an anchored graph $(G,A_G,\pi_G)$
is a drawing of $G$ in $\Omega$ such that the vertices of $A_G$ are represented by points on the boundary
of the disk $\Omega$, and the cyclic ordering of the anchors $A_G$ along the boundary of $\Omega$ is $\pi_G$.
An anchored drawing without crossings is an \DEF{anchored embedding}.
An \DEF{anchored planar graph} is an anchored graph that has an anchored embedding.
The \DEF{anchored crossing number} $\CRA(G,A_G,\pi_G)$, or simply $\CRA(G)$, 
of an anchored graph $G$ is the minimum number of crossings over all anchored drawings of $G$.
An \DEF{anchored $1$-drawing} is an anchored drawing where each edge participates in at most one crossing.
An \DEF{anchored $1$-planar graph} is an anchored graph that has an anchored $1$-drawing. 

Let $(G,A_G,\pi_G)$ be an anchored graph. Any subgraph $H$ of $G$ naturally defines 
the \DEF{anchored subgraph} $(H,A_H,\pi_H)$, where $A_H=A_G\cap V(H)$
and $\pi_H$ is the restriction of $\pi_G$ to the vertices in $A_H$.
We say that an anchored graph $(G,A_G,\pi_G)$ can be \DEF{decomposed into two 
anchored graphs} if there are two anchored subgraphs $(R,A_R,\pi_R)$ and 
$(B,A_B,\pi_B)$ such that $R\cup B = G$ and $E(R)\cap E(B)=\emptyset$.
The decomposition is \DEF{vertex-disjoint} when $V(R)\cap V(B)=\emptyset$.
For helping with the exposition we will refer to $R$ as ``red" graph and to $B$ as ``blue" graph.

We will use the notation $[m]=\{0,1,\dots,m\}$.

\section{Crossing number}

In this section we consider the crossing number of anchored graphs and near-planar graphs.
In Section~\ref{sec:hard1} we show that computing the crossing number of anchored graphs is NP-hard.
In Section~\ref{sec:hard2} we show that computing the crossing number of near-planar graphs is NP-hard.
In Section~\ref{sec:extensions} we provide some extensions implied by our construction.

\subsection{Crossing number of anchored graphs}
\label{sec:hard1}

The problem of minimizing the number of crossings in drawings of anchored graphs
is of independent interest.
In this section we show that computing the crossing number of anchored graphs is hard
even in a very special case when the anchored graph is decomposed
into two vertex-disjoint planar anchored subgraphs.

\begin{theorem}
\label{thm:hard1} 
	Computing the anchored crossing number of anchored graphs 
	is NP-hard, even if the input graph is decomposed
	into two vertex-disjoint planar anchored subgraphs
	(and the decomposition is part of the input).
\end{theorem}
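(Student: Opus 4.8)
The plan is to reduce from {\sc SAT}. Given a formula $\varphi$ with variables $x_1,\dots,x_n$ and clauses $c_1,\dots,c_m$, I would construct an anchored graph $G_\varphi$, decomposed into a ``red'' graph $R$ and a vertex-disjoint ``blue'' graph $B$, together with a target value $K$, so that $\CRA(G_\varphi)\le K$ if and only if $\varphi$ is satisfiable. The essential mechanism is that in any anchored drawing, the red and blue parts must share the interior of the disk $\Omega$ while their anchors are interleaved on the boundary in a prescribed cyclic pattern; since each of $R$ and $B$ is individually anchored-planar, \emph{all} crossings in an optimal drawing are red--blue crossings, and the forced topology of how the red pieces must ``route around'' the blue pieces encodes the combinatorics of a truth assignment. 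Concretely, I would design a ``frame'' gadget (a red cycle or tree whose anchors alternate with blue anchors around $\partial\Omega$) that partitions the disk into regions, one region per variable and connecting corridors per clause. A variable gadget is a small blue subgraph with two anchored-planar embeddings inside its region — ``true'' and ``false'' — differing in which of two boundary arcs is left free; choosing an embedding costs $0$ crossings with the frame but commits a side. A clause gadget is a red subgraph that must send three ``tentacle'' paths from the boundary into the three variable regions of its literals; a tentacle can reach its variable gadget with no extra crossing precisely when that variable is set to satisfy the literal, and otherwise it is forced to cross a heavy blue edge, incurring a penalty.

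The key steps, in order, are: (i) define the anchored graphs $R$ and $B$ and the cyclic order $\pi$ on the union of their anchors, with edge weights (polynomially bounded, as allowed by the weighted--vs.--unweighted discussion, so that the final unweighted instance $H_{G_\varphi}$ is polynomial in size); (ii) \emph{soundness / upper bound}: from a satisfying assignment, exhibit an explicit anchored drawing achieving exactly $K$ crossings, by embedding each variable gadget in the assignment-consistent way and routing each clause's tentacles through the literal it satisfies; (iii) \emph{completeness / lower bound}: show any anchored drawing with at most $K$ crossings must, after cleanup, be ``combinatorial'' — each variable gadget sits in one of its two intended embeddings and each clause tentacle stays within its corridor — and then read off a satisfying assignment. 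Step (iii) is where the topological work lies: one argues that because $R$ alone and $B$ alone are anchored-planar, an optimal drawing has only red--blue crossings; then a charging/redrawing argument shows that any deviation from the intended combinatorial structure (a variable gadget ``half in, half out,'' a tentacle wandering into a foreign region, two tentacles swapping) can only increase the crossing count, using the interleaving of anchors on $\partial\Omega$ and a Jordan-curve argument to force the separations. Calibrating the weights so that one ``illegal'' crossing strictly exceeds the total budget $K$ — while the $K$ legal crossings are unavoidable for \emph{every} assignment — is the delicate bookkeeping.

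The main obstacle I expect is the lower bound (step (iii)): ruling out clever non-combinatorial anchored drawings that might, say, let a tentacle loop around the outside of several regions, or let a variable gadget's blue edges cross each other to ``cheat'' the frame. The standard remedy is to make the frame and the variable gadgets sufficiently rigid — e.g. 3-connected planar pieces whose anchored embeddings are essentially unique, and heavy (high-weight) edges wherever a shortcut must be blocked — so that any crossing not of the intended type either violates anchored-planarity of a monochromatic piece or pays more than $K$. Once the gadget weights and the budget $K=K(\varphi)$ are fixed, correctness reduces to a finite case analysis of how a single red path can interact with a single variable gadget, which I would isolate as a self-contained lemma and then combine over all clauses. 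Finally, since $\CRA$ is trivially in NP (guess the drawing's combinatorial type — rotation system plus crossing pattern — and count), NP-hardness gives NP-completeness, and the reduction visibly outputs the required decomposition into two vertex-disjoint planar anchored subgraphs.
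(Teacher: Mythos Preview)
Your high-level strategy---reduce from {\sc SAT}, build $G_\varphi=R\cup B$ with $R,B$ vertex-disjoint anchored-planar, prove an upper bound from a satisfying assignment and a lower bound by showing any drawing with $\le K$ crossings is ``combinatorial''---matches the paper exactly. But your proposed \emph{mechanism} is different, and as written it has a gap.

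The paper does not use blue variable gadgets with two embeddings. Instead $B$ is a rigid grid with a \emph{unique} anchored embedding (enforced by $3$-connectivity and weight-$w^4$ edges at the anchors); the truth value of $x_i$ is encoded by whether a pair of \emph{red} vertical paths $V_i$ is routed through the left or right ``column'' $C_i^T$ or $C_i^F$ of that grid. Clauses are horizontal red paths $H_j$ that must cross one horizontal blue line; certain blue edges on that line carry weight $w^2-1$ instead of $w^2$ exactly at positions corresponding to literals of $C_j$, so crossing at a satisfying literal saves a precisely controlled amount. The whole argument is driven by a weight hierarchy $w,w^2,w^4$ (with $w=30nm$) and explicit arithmetic (Lemmas~\ref{le:property1}--\ref{le:harddirection}). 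Your scheme---blue gadgets with two embeddings, red tentacles probing them---might be made to work, but it is a genuinely different design, and you have not constructed it; ``I would design a frame gadget\ldots'' is a plan, not a proof.

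The concrete gap is your claim that ``since each of $R$ and $B$ is individually anchored-planar, all crossings in an optimal drawing are red--blue crossings.'' That does not follow: anchored-planarity of $R$ alone says nothing about how $R$ is drawn in an optimal drawing of $R\cup B$; a monochromatic crossing could in principle save several bichromatic ones. The paper does not get this for free. Blue--blue crossings are excluded because every blue edge has weight $\ge w^2-1$, so one such crossing already costs $(w^2-1)^2>k$. Red--red crossings are \emph{not} excluded a priori; the paper shows only at the very end (Lemma~\ref{le:harddirection}) that an optimal drawing already spends exactly $k$ on red--blue crossings, leaving no budget for red--red ones. Your later mention of ``heavy edges wherever a shortcut must be blocked'' is in the right spirit, but you must actually set the weights and do the arithmetic; this is where the real content of the proof lies, and your sketch has not supplied it.
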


The rest of this section is devoted the proof of Theorem~\ref{thm:hard1}.
The reduction will be from the decision problem of satisfiability:
\begin{quote}
	{\sc SAT}.\\
	\emph{Input:} A set of $n$ variables $x_1,\dots, x_n$ and a set of $m$ disjunctive
		clauses $C_1,\dots, C_m$.
	\emph{Output:} Can we assign boolean values $T/F$ to the variables such that
		the formula $C_1\wedge \dots\wedge C_m$ is satisfied?
\end{quote}

Consider an instance $I$ to {\sc SAT}.
Henceforth, we will use $n$ to denote the number of variables 
and use $m$ to denote the number of clauses.
Let $w= 30nm$. 
It is convenient to think of $w$ as a sufficiently large weight to make the reduction work.
Let $k=(6nm+6n+2m+1) w^3 - m(w^2+w-1)$. 
The upper bound 
\begin{equation*}
	k< (6nm+6n+2m+1) w^3 \le  15nm w^3 < (w^2-1)^2
\end{equation*}
will be useful in our discussion.

\paragraph{Overview.}
We next provide an overview of our reduction. 
Our aim at this point is to provide intuition.
An example showing how the whole reduction works is given in Figure~\ref{fig:example-reduction}.
It may help getting the global picture through the discussion.
We will describe an anchored \emph{blue} planar graph $B=B(I)$ 
and an anchored \emph{red} planar graph $R=R(I)$. The graphs $B$ and $R$ will
be vertex-disjoint. 
We will then construct an anchored graph $G=G(I)$ that has a decomposition into $R$
and $B$; the graph $G$ is determined by $R,B$ and by specifying the circular ordering
of the anchors $A_B\cup A_R$. 
The weights of the edges are controlled by a parameter $w=w(n,m)$.
It will turn out from the construction that, for a certain value $k=k(n,m)$,
the anchored crossing number of $G$ is at least $k$, and that it is equal to $k$
if and only if the instance $I$ can be satisfied.

The blue graph $B$ has a grid-like structure. In an optimal drawing there
are no blue-blue crossings. The weights of the blue edges
are used to encode the clauses of the instance $I$. 
The red graph has the following structure.
For each variable $x_i$, there is a pair of `vertical paths' in the red graph;
they connect anchor $r(x_i)$ to anchor $r'(x_i)$ in Figure~\ref{fig:example-reduction}. 
The construction will enforce that in an optimal drawing such a pair will be drawn 
either to the left or to the right of the middle line, 
that is, in the lighter shaded or the darker shaded region shown in the left part of Figure~\ref{fig:columnsrows}. 
Each such option corresponds to an assignment of the variable $x_i$ as $T$ or $F$. 
For each clause $C_j$, there is a `horizontal path';
it connects anchor $r_{(0,j)}$ to anchor $r_{(2n+1,j)}$ in Figure~\ref{fig:example-reduction}. 
Such path must cross a `horizontal line' of the blue grid once.
The number of crossings with such horizontal path depends on where it crosses the `blue line',
and tells if the clause is satisfied with the assignment of the variables or not.

\begin{figure*}
	\centering
	\includegraphics[width=.98\textwidth]{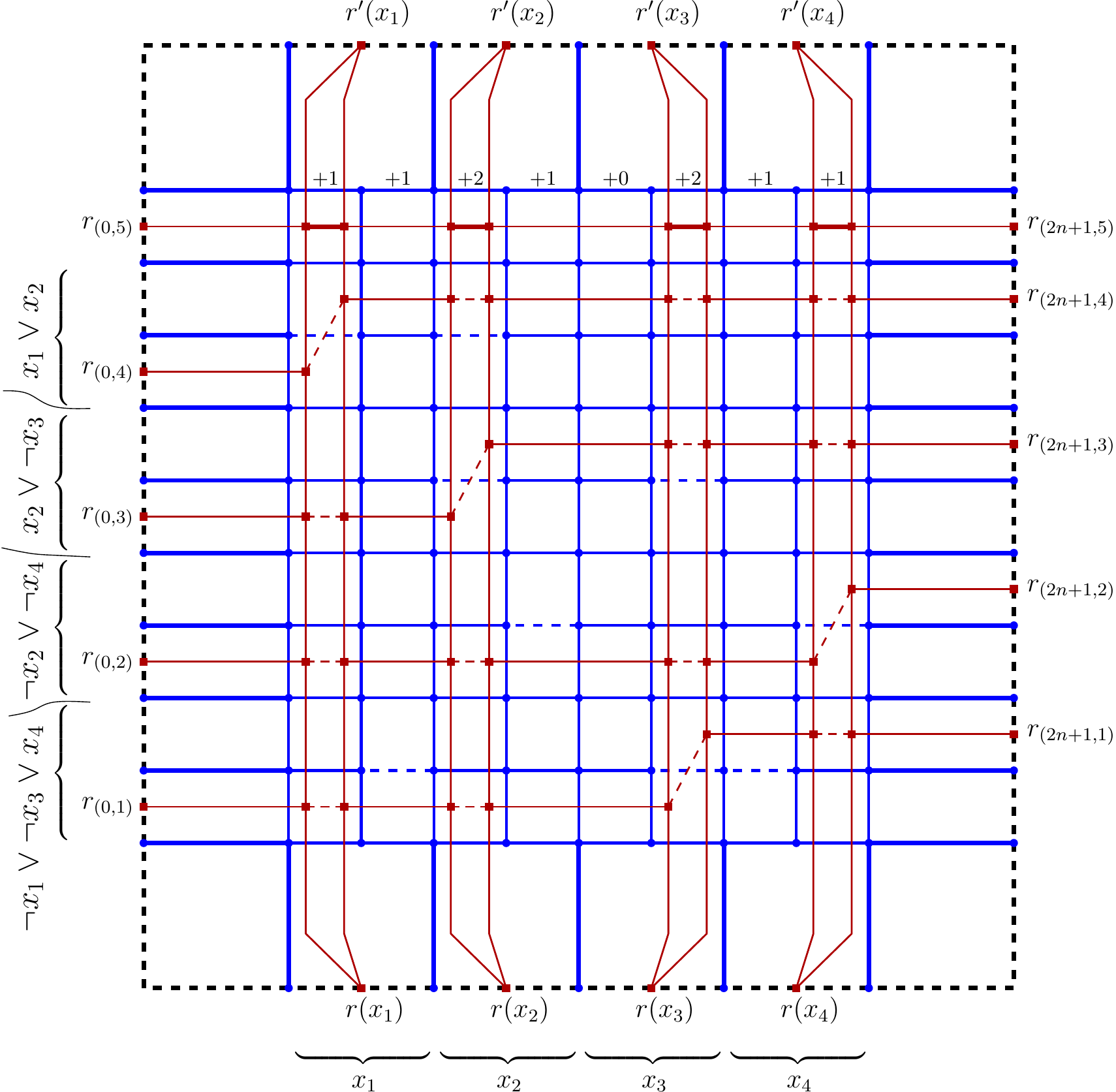}
	\caption{Example of the resulting reduction for the formula on 4 variables $x_1,x_2,x_3,x_4$
		and clauses $\neg x_1\vee \neg x_3\vee x_4$, $\neg x_2 \vee \neg x_4$,
		$x_2\vee \neg x_3$, $x_1\vee x_2$.
		The optimal drawing in the figure corresponds to the boolean assignment $x_1=x_2=T$ and $x_3=x_4=F$.
		The thickest (red or blue) edges have weight $w^4$; 
		the (blue) edges of middle thickness without annotation 
		have weight $w^2$ if solid and $w^2-1$ if dashed;
		each solid (blue) edge of middle thickness with annotation $+t$ has weight $w^2+t$;
		the thinnest (red) edges have weight $w$ if solid and $w-1$ if dashed.}
		\label{fig:example-reduction}
\end{figure*} 

\paragraph{Formal proof.}
We next proceed with the formal proof.
The blue graph $B=B(I)$ is constructed as follows (see Figure~\ref{fig:notation-blue}):
\begin{itemize}
\item[(i)] Take a grid-like graph with vertices 
	$b_{(\alpha,\beta)}$, $(\alpha,\beta)\in [2n+2]\times [2m+3]$,
	and an edge between vertices $b_{(\alpha,\beta)}$ and $b_{(\alpha',\beta')}$ 
	if and only if $|\alpha-\alpha'|+|\beta-\beta'|=1$.
\item[(ii)] Remove the vertices $b_{(2i,0)}$ and $b_{(2i,2m+3)}$ for each $i\in [n+1]$.
\item[(iii)] Define as anchors the vertices $b_{(2i+1,0)}$ and $b_{(2i+1,2m+3)}$ for each $i\in [n]$,
	and the vertices $b_{(0,\beta)}$ and $b_{(2n+2,\beta)}$ for each $\beta\in [2m+2]\setminus \{ 0\}$.
\item[(iv)] Remove the edges between any two anchors.
\item[(v)] The weights of the edges are defined as follows:
	\begin{itemize}
		\item Each edge adjacent to an anchor has weight $w^4$; 
			in Figure~\ref{fig:example-reduction} these edges are thicker.
		\item If the literal $x_i$ appears in clause $C_j$, then the edge $b_{(2i-1,2j)}b_{(2i,2j)}$ has weight $w^2-1$;
			in Figure~\ref{fig:example-reduction} these edges are dashed.
		\item If the literal $\neg x_i$ appears in clause $C_j$, then the edge $b_{(2i,2j)}b_{(2i+1,2j)}$ 
				has weight $w^2-1$; in Figure~\ref{fig:example-reduction} these edges are dashed.
		\item The edge $b_{(2i-1,2m+2)}b_{(2i,2m+2)}$ has weight $w^2+|\{ j\mid \mbox{literal $x_i$ appears in }C_j\}|$;
			in Figure~\ref{fig:example-reduction} these edges are annotated.
		\item The edge $b_{(2i,2m+2)}b_{(2i+1,2m+2)}$ has weight 
				$w^2+|\{ j\mid \mbox{literal $\neg x_i$ appears in }C_j\}|$;
			in Figure~\ref{fig:example-reduction} these edges are annotated.
		\item all other edges have weight $w^2$. 
	\end{itemize}
\end{itemize}

\begin{figure}
	\centering
	\includegraphics[width=.95\textwidth]{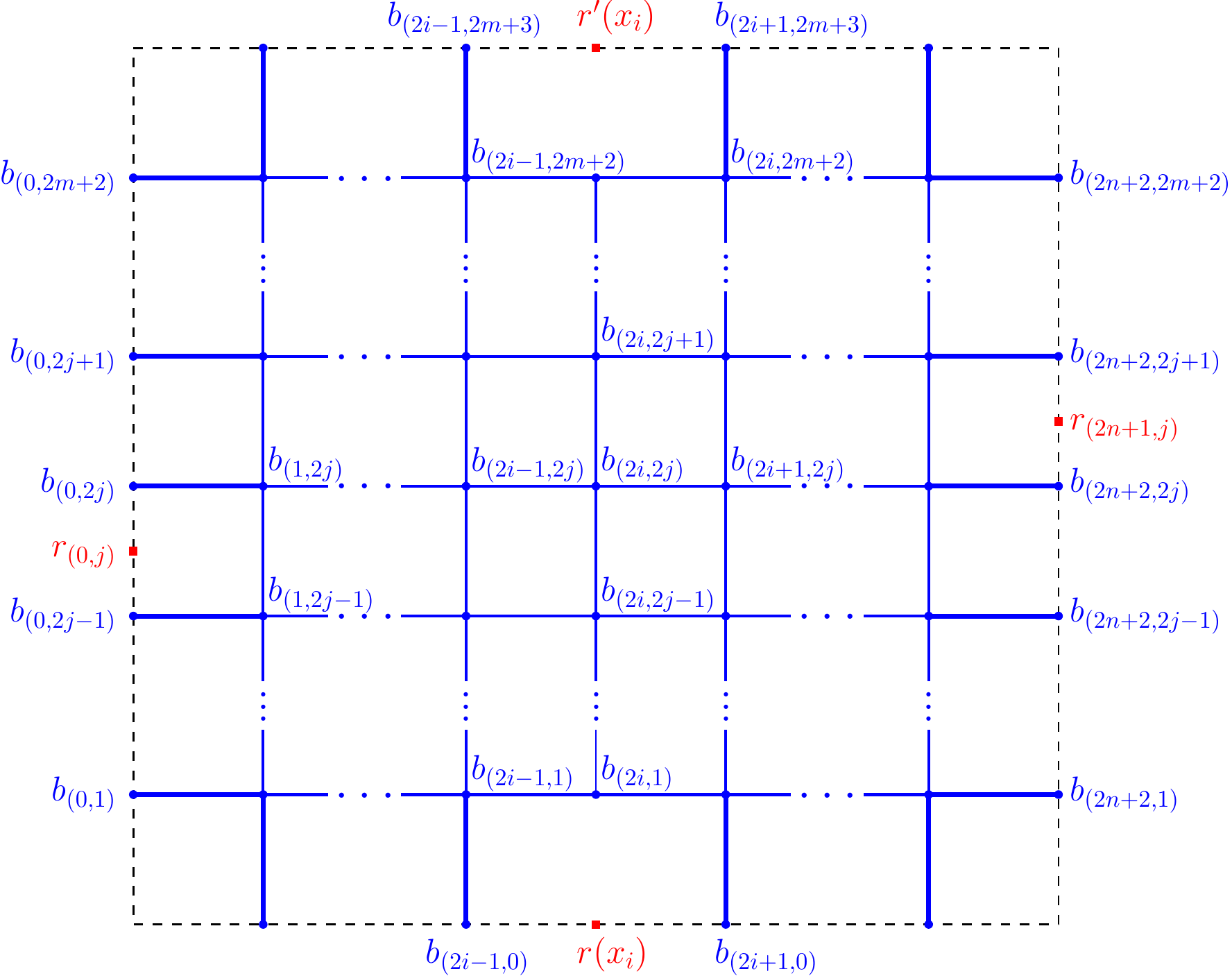}
	\caption{The graph $B=B(I)$. Anchors of the red graph are included to show the cyclic ordering of $A_B\cup A_R$.
		The thick edges have weight $w^4$. The other edges have weights between $w^2$ and $w^2-m$ depending on the instance $I$.}
		\label{fig:notation-blue}
\end{figure}

Note that each edge in the blue graph $B$ has weight at least $w^2-1$.
Hence, independently of the red graph $R$ to be defined below,
a drawing of $B$ with crossing number at most $k< (w^2-1)^2$ has to be an embedding of $B$.
Henceforth, we will assume that $B$ is anchored embedded. 
Note that the graph $B$ has a unique combinatorial embedding with anchors because of 3-connectivity.
For each variable $x_i$, we define two \DEF{columns}; see Figure~\ref{fig:columnsrows}.
The column $C_i^T$ is the region of the disk enclosed between the paths
\begin{align*}
	& b_{(2i-1,0)}b_{(2i-1,1)}\dots b_{(2i-1,2m+3)} \quad \mbox{and}  \\
	&b_{(2i+1,0)}b_{(2i+1,1)}b_{(2i,1)}b_{(2i,2)}\dots b_{(2i,2m+2)}b_{(2i+1,2m+2)}b_{(2i+1,2m+3)},
\end{align*}
and the column $C_i^F$ is the region of the disk enclosed between the paths
\begin{align*}
	& b_{(2i-1,0)}b_{(2i-1,1)}b_{(2i,1)}b_{(2i,2)}\dots b_{(2i,2m+2)}b_{(2i-1,2m+2)}b_{(2i-1,2m+3)} \quad \mbox{and}\\
	& b_{(2i+1,0)}b_{(2i+1,1)}\dots b_{(2i+1,2m+3)}.
\end{align*}

The blue edges of the form $b_{(\alpha,\beta)}b_{(\alpha+1,\beta)}$ are called \DEF{horizontal}.
The blue edges of the form $b_{(\alpha,\beta)}b_{(\alpha,\beta+1)}$ are called \DEF{vertical}.
The weights of the horizontal edges $b_{(2i-1,\beta)}b_{(2i,\beta)}$
contained in the column $C_i^T$ have been chosen
so that they add up to $2(m+1)w^2$: each time we have a $-1$ in the weight of $b_{(2i-1,2j)}b_{(2i,2j)}$
we have a $+1$ in the weight of $b_{(2i-1,2m+2)}b_{(2i,2m+2)}$. 
A similar statement holds for the column $C_i^F$: 
the weights of the horizontal edges $b_{(2i,\beta)}b_{(2i+1,\beta)}$
contained in the column $C_i^F$ add up to $2(m+1)w^2$.

For each clause $C_j$, we define two \DEF{rows}; see Figure~\ref{fig:columnsrows}.
The \DEF{upper row} $U_j$ is the region of the disk enclosed between the paths
$$
   b_{(0,2j)}b_{(1,2j)}\dots b_{(2n+2,2j)} \quad \mbox{and} \quad 
   b_{(0,2j+1)}b_{(1,2j+1)}\dots b_{(2n+2,2j+ 1)},
$$
and the \DEF{lower row} $L_j$ is the region of the disk enclosed between the paths
$$
   b_{(0,2j-1)}b_{(1,2j-1)}\dots b_{(2n+2,2j-1)} \quad \mbox{and} \quad
   b_{(0,2j)}b_{(1,2j)}\dots b_{(2n+2,2j)}.
$$

There is an additional row, called \DEF{enforcing row} and denoted by $R_\enf$, which is the
region of the disk enclosed between the paths
$$
	b_{(0,2m+1)}b_{(1,2m+1)}\dots b_{(2n+2,2m+1)}\quad \mbox{and} \quad
	b_{(0,2m+2)}b_{(1,2m+2)}\dots b_{(2n+2,2m+2)}.
$$
The role of the enforcing row $R_\enf$ is to reduce the number of possible drawings
to a well-structured subset of drawings.
We will use the columns and the rows as some sort of coordinate system to
tell where some red vertices go. For example, we may refer to
the face $C_i^T \cap L_j$.

\begin{figure}
	\centering
	\includegraphics[width=\textwidth]{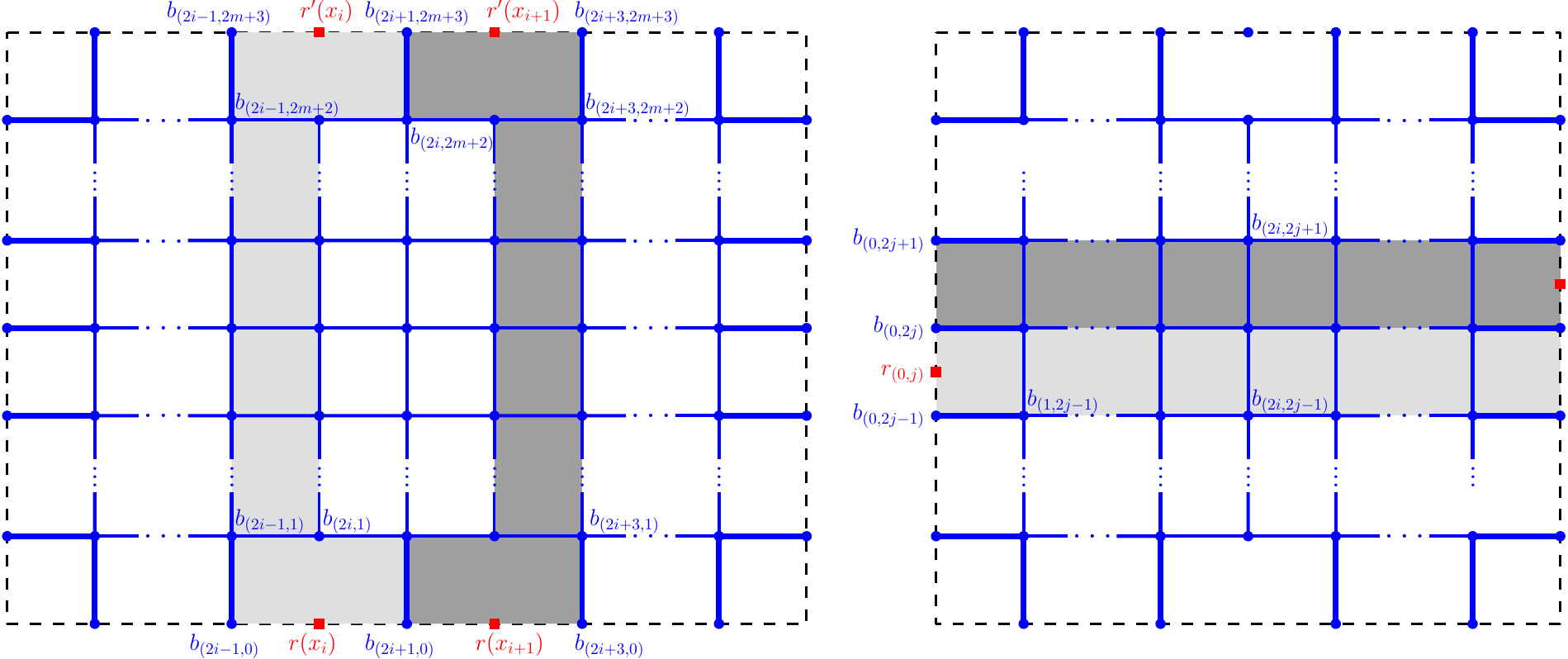}
	\caption{Left: columns $C_i^T$ (lighter shading) and $C_{i+1}^F$ (darker shading).
			Right: the upper row $U_j$ (darker shading) and the lower row $L_j$ (lighter shading).}
		\label{fig:columnsrows}
\end{figure}

The red graph $R=R(I)$ is constructed as follows (see Figure~\ref{fig:notation-red}):
\begin{itemize}
\item[(i)] Take a grid-like graph with vertices 
	$r_{(\alpha,\beta)}$, $(\alpha,\beta)\in [2n+1]\times [m+2]$,
	and an edge between vertices $r_{(\alpha,\beta)}$ and $r_{(\alpha',\beta')}$ 
	if and only if $|\alpha-\alpha'|+|\beta-\beta'|=1$.
\item[(ii)] Remove the four vertices $r_{(0,0)}$, $r_{(0,m+2)}$, $r_{(2n+1,0)}$, and $r_{(2n+1,m+2)}$.
\item[(iii)] For each variable $x_i$, 
	identify the vertices $r_{(2i-1,0)}$ and $r_{(2i,0)}$ 
	into a new vertex called $r(x_i)$, 
	and identify the vertices $r_{(2i-1,m+2)}$ and $r_{(2i,m+2)}$ 
	into a new vertex called $r'(x_i)$.
	For each variable $x_i$, the vertices $r(x_i),r'(x_i)$ are anchors for $R$.
	The vertices $r_{(0,j)}$ and $r_{(2n+1,j)}$ are also anchors for $R$, for every $j\in [m+1]\setminus \{0\}$.
\item[(iv)] Remove the edges between any two anchors.
\item[(v)] The weights of the edges are defined as follows:
	\begin{itemize}
		\item For each variable $x_i$ the edge $r_{(2i-1,m+1)}r_{(2i,m+1)}$ has weight $w^4$;
			in Figures~\ref{fig:example-reduction} and~\ref{fig:notation-red} these edges are thicker.
		\item For each variable $x_i$ and each clause $C_j$ the edge $r_{(2i-1,j)}r_{(2i,j)}$ has weight $w-1$;
			in Figures~\ref{fig:example-reduction} and~\ref{fig:notation-red} these edges are dashed.
		\item All other edges have weight $w$. 
	\end{itemize}
\end{itemize}

\begin{figure}
	\centering
	\includegraphics[width=.95\textwidth]{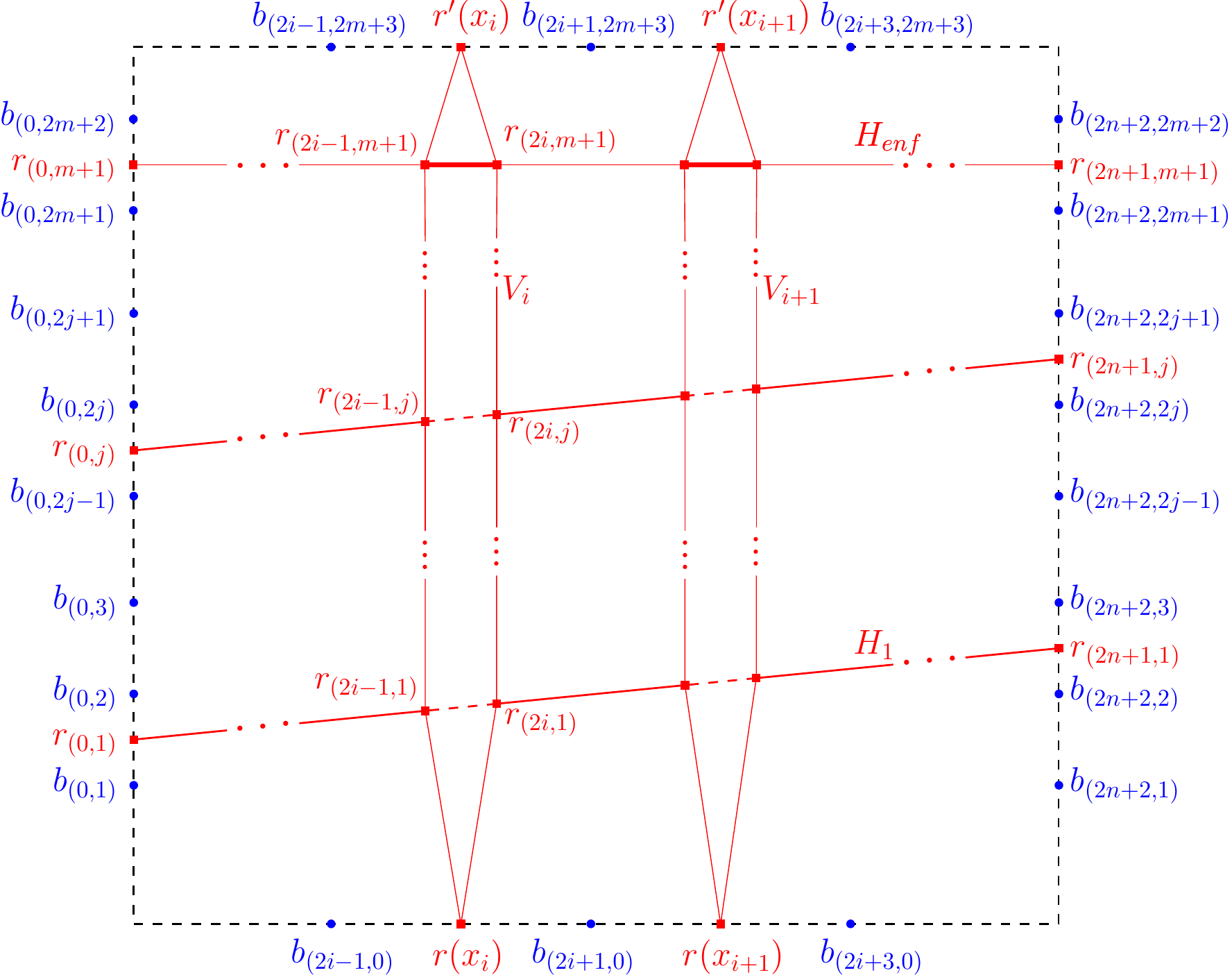}
	\caption{The graph $R=R(I)$. Anchors of the blue graph are also included to show the cyclic order of $A_B\cup A_R$.
		Thick edges have weight $w^4$; dashed edges have weight $w-1$; the remaining edges have weight $w$.}
	\label{fig:notation-red}
\end{figure}

For each variable $x_i$ the following two \DEF{vertical paths} are important: 
\begin{align*} 
	& r(x_i)r_{(2i-1,1)}r_{(2i-1,2)}\dots r_{(2i-1,m+1)} r'(x_i) \quad \mbox{and} \\
	& r(x_i)r_{(2i,1)}r_{(2i,2)}\dots r_{(2i,m+1)} r'(x_i).
\end{align*}
We will use $V_i$ to denote their union.
For each clause $C_j$, we will consider the \DEF{horizontal path} $H_j$ defined by 
\[
	r_{(0,j)} r_{(1,j)}\dots r_{(2n+1,j)}.
\]
We also define the \DEF{horizontal enforcing path} $H_\enf$ as
\[
	r_{(0,m+1)} r_{(1,m+1)}\dots r_{(2n+1,m+1)}.
\] 
The role of the horizontal enforcing path $H_\enf$ will be to reduce the number of possible drawings
to a well-structured subset of drawings.

It is important to note that the paths
$V_1,V_2,\dots ,V_n, H_1,H_2\dots, H_m, H_\enf$ form a partition of the
edge set of the red graph $R$. Hence, we can add the number of crossings that
each of them contributes separately to obtain the crossing number of a drawing.
Note also, that the intersection of a vertical pair of paths $V_i$ 
and a horizontal path $H_j$ (or $H_{\enf}$) always consists of two vertices.

Let $G=G(I)$ be the anchored graph obtained by joining the red graph $R$ and the blue graph $B$.
The (clockwise) circular ordering of the anchors along the boundary of the disk is as follows:
\begin{itemize}
	\item For each clause $C_j$ we have the sequence of anchors $b_{(0,2j-1)}$, $r_{(0,j)}$, $b_{(0,2j)}$, $b_{(0,2j+1)}$,
		and the sequence $b_{(2n+2,2j+1)}$, $r_{(2n+1,j)}$, $b_{(2n+2,2j)}$, $b_{(2n+2,2j-1)}$.
		Hence, the anchor $r_{(0,j)}$ is in the lower row $L_j$ and anchor $r_{(2n+1,j)}$ is in the upper row $U_j$.
	\item For each variable $x_i$, the anchor $r(x_i)$ is between 
		$b_{(2i-1,0)}$ and $b_{(2i+1,0)}$,
		and the anchor $r'(x_i)$ is between $b_{(2i-1,2m+3)}$ and $b_{(2i+1,2m+3)}$. 
		Hence, $r(x_i)$ and $r'(x_i)$ are in both columns $C_i^T$ and $C_i^F$.
	\item Anchor $r_{(0,m+1)}$ is between $b_{(0,2m+1)}$ and $b_{(0,2m+2)}$,
		anchor $r_{(2n+1,m+1)}$ is between $b_{(2n+2,2m+2)}$ and $b_{(2n+2,2m+1)}$.
		Hence, anchors $r_{(0,m+1)}$ and $r_{(2n+1,m+1)}$ are in the enforcing row $R_{\enf}$.
	\item Anchor $b_{(0,1)}$ comes immediately after $b_{(1,0)}$,
		anchor $b_{(1,2m+3)}$ comes immediately after $b_{(0,2m+2)}$,
		anchor $b_{(2n+2,2m+2)}$ comes immediately after $b_{(2n+1,2m+3)}$,
		and anchor $b_{(2n+1,0)}$ comes immediately after $b_{(2n+2,1)}$.		
\end{itemize}

This completes the description of the anchored graph $G$. 
We first show the easy direction of the proof, which will also give an idea of how the reduction works.
Recall that we have defined $k=(6nm+6n+2m+1)w^3 - m(w^2+w-1).$

\begin{lemma}
\label{le:easydirection}
	If the instance $I$ is satisfiable, then there is an anchored drawing of $G$ with $k$ crossings.
	Moreover, the restriction of the drawing to $R$ or to $B$ is an embedding.
\end{lemma}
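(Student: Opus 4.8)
The plan is to exhibit an explicit anchored drawing of $G$ given a satisfying assignment, and then carefully count crossings to verify the total is exactly $k$. First I would embed the blue graph $B$ in the unique way compatible with the anchor ordering (this is forced by 3-connectivity, as noted), so that the columns $C_i^T, C_i^F$, the rows $L_j, U_j$ and the enforcing row $R_\enf$ appear as the natural grid regions of the disk. Next, fix a satisfying truth assignment. For each variable $x_i$, route the vertical pair of paths $V_i$ through column $C_i^T$ if $x_i = T$ and through column $C_i^F$ if $x_i = F$; in either case $V_i$ can be drawn as two nearly-parallel arcs running monotonically from the anchor $r(x_i)$ on the bottom of the disk to $r'(x_i)$ on the top, with no blue-blue crossings and no red-red crossings among the $V_i$'s (they live in disjoint columns up to shared grid lines). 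The enforcing path $H_\enf$ is routed horizontally across the enforcing row $R_\enf$, crossing each $V_i$ in the ``$+t$-free'' part so that these crossings are the cheap, intended ones.

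The substantive routing choice is for the horizontal clause paths $H_j$. Each $H_j$ runs from $r_{(0,j)}$ in $L_j$ to $r_{(2n+1,j)}$ in $U_j$, so it must cross from the lower row to the upper row, hence cross the blue horizontal line $b_{(0,2j)}b_{(1,2j)}\cdots b_{(2n+2,2j)}$ exactly once. Since the assignment satisfies clause $C_j$, pick a literal of $C_j$ made true; if that literal is $x_i$ with $x_i = T$, route $H_j$ so that it crosses the blue line precisely at the dashed edge $b_{(2i-1,2j)}b_{(2i,2j)}$ of weight $w^2-1$ (which lies on the boundary between $C_i^T$ and the adjacent region, and is available because $V_i$ sits in $C_i^T$); symmetrically for a true literal $\neg x_i$, cross at the dashed edge $b_{(2i,2j)}b_{(2i+1,2j)}$. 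Away from this one blue crossing, $H_j$ travels horizontally within $L_j$ and then within $U_j$, crossing each vertical pair $V_{i'}$ in two points (as observed, $V_{i'}\cap H_j$ is always two vertices of the red grid, so two edge-crossings), and crossing the other blue vertical edges it must pass, each of weight $w^2$.

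With the drawing fixed, the count is bookkeeping. Using that the red edge set is partitioned into $V_1,\dots,V_n,H_1,\dots,H_m,H_\enf$, I would total the contributions: (a) red-red crossings, coming from $H_j$ (and $H_\enf$) crossing the $V_i$'s, each contributing a product of two red weights ($w$ or $w-1$) — and here the choice of crossing dashed red edges $r_{(2i-1,j)}r_{(2i,j)}$ of weight $w-1$ exactly when convenient is what produces the $-m(w^2+w-1)$ type corrections; (b) blue-blue crossings: none; (c) red-blue crossings, coming from each $H_j$ and $H_\enf$ crossing blue vertical edges (weight $w^2$, times $w$ or $w-1$) plus the one mandatory blue-horizontal crossing per $H_j$, which by construction is at a dashed blue edge of weight $w^2-1$ (times $w$). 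The key design identity — that the horizontal edge weights in each column add to $2(m+1)w^2$, with the $-1$'s on dashed edges compensated by the $+t$ annotations on the top row — is what makes the total independent of which satisfying assignment (and which satisfied literal per clause) was chosen, so the count collapses to $(6nm+6n+2m+1)w^3 - m(w^2+w-1) = k$. The main obstacle is not any single inequality but getting the routing of the $H_j$'s globally consistent: ensuring they can be drawn with only the claimed crossings (no extra red-red crossings among different $H_j$'s, no stray crossings with the $V_i$'s beyond the forced two per pair, and the single blue-horizontal crossing landing on exactly the right dashed edge), which requires checking the cyclic anchor order is compatible with this planar-except-for-intended-crossings layout. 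Once that topological consistency is in hand, the arithmetic is routine, and the ``moreover'' claim (the restriction to $R$ or to $B$ is an embedding) is immediate from the construction, since all crossings in the drawing are red-blue.
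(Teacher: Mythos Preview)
Your overall plan---embed $B$, route $V_i$ into column $C_i^{b_i}$ according to the satisfying assignment, route each $H_j$ so that it crosses from $L_j$ to $U_j$ at a dashed blue edge corresponding to a satisfied literal, then count---matches the paper. But there is a real misunderstanding in the middle that wrecks your crossing count.

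You treat the intersections $V_i\cap H_j$ as \emph{crossings} (``two edge-crossings'', contributing ``a product of two red weights''). They are not. The red graph $R$ is a single grid-like graph: $V_i$ and $H_j$ share the two \emph{vertices} $r_{(2i-1,j)}$ and $r_{(2i,j)}$, and the edge $r_{(2i-1,j)}r_{(2i,j)}$ of weight $w-1$ is an edge of $H_j$, not something $V_i$ crosses. In the intended drawing $R$ is embedded with no red--red crossings at all; you even say this yourself in the last sentence, which contradicts your item (a). Consequently your breakdown of the count is off in two places. First, there are no red--red contributions. Second, the $-m(w^2+w-1)$ correction does not come from red--red crossings; it comes from the single red--blue crossing per clause, where the red edge $r_{(2t-1,j)}r_{(2t,j)}$ of weight $w-1$ (not $w$, as you wrote) crosses the dashed blue edge of weight $w^2-1$, contributing $(w-1)(w^2-1)=w^3-(w^2+w-1)$. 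You also omit from your list the main source of red--blue crossings: each of the two paths of $V_i$ crosses every blue horizontal edge inside its column, and those weights sum to $2(m+1)w^2$ by design, giving $4(m+1)w^3$ per variable. Once you correct these points the arithmetic is exactly
\[
n\cdot 4(m+1)w^3 \;+\; m\bigl((2n+1)w^3+(w-1)(w^2-1)\bigr) \;+\; (2n+1)w^3 \;=\; k,
\]
and the ``moreover'' clause is genuinely immediate.
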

\begin{proof}
	We draw the blue graph $B$ without crossings. The corresponding embedding is unique.
	The red graph $R$ is also going to be drawn without crossings. Hence, it is
	enough to describe in which face of $B$ is each red vertex and (when not obvious) where
	the red edges cross the blue edges. 
	See Figure~\ref{fig:example-reduction} for a particular example.
	Let $b_i\in \{T,F\}$ be an assignment for each variable $x_i$ of $I$ that
	satisfies all clauses. We draw the two red vertical paths of $V_i$ inside the column $C_i^{b_i}$.
	For each clause $C_j$ we proceed as follows. 
	Let $x_{t}$, where $t=t(j)$, be a variable whose value $b_{t}$ makes
	the clause $C_j$ true. We then draw the horizontal path $H_j$ as follows:
	the subpath of $H_j$ between $r_{(0,j)}$ and $r_{(2t -1, j)}$ is drawn in the lower row $L_j$,
	the edge $r_{(2t-1,j)}r_{(2t,j)}$ crosses from $L_j$ to $U_j$ through the blue edge
	in $L_j\cap U_j\cap C_{t}^{b_{t}}$,
	and the subpath of $H_j$ between $r_{(2t, j)}$ and $r_{(2n+1, j)}$ is drawn in the upper row $U_j$.
	The path $H_{\enf}$ is drawn inside the row $R_{\enf}$.
	Note that this description implicitly assigns to each non-anchor vertex of $R$ a face of $B$.
	The drawing can be extended to a planar embedding of $R$ in such a way that
	no red edge crosses twice any blue edge.	
	
	Let us now compute the crossing number of the drawing we have described.
	There are no monochromatic crossings in the construction; hence we only need to count
	the red-blue crossings.
	Each of the two paths in $V_i$ contributes $2(m+1)w^3$ to the crossing number of the drawing:
	edges in $V_i$ have weight $w$, and each path in $V_i$ crosses all the horizontal blue
	edges contained in $C_i^{b_i}$, whose weights add to $2(m+1)w^2$.
	Each horizontal path $H_j$ contributes $(2n+1)w^3+(w-1)(w^2-1)$ to the crossing number of the drawing:
	the edges on the horizontal path $H_j$ connecting $V_i$ to $V_{i+1}$ 
	have weight $w$ and cross $2n+1$ blue vertical edges whose weight is $w^2$;
	there is only one red edge in $H_j$, namely $r_{(2t(j)-1,j)}r_{(2t(j),j)}$ with weight $w-1$,
    that crosses the boundary between rows $L_i$ and $U_i$, namely at the edge
	of $L_j\cap U_j\cap C_{t}^{b_{t}}$ with weight $w^2-1$ because 
	the corresponding literal $x_i$ or $\neg x_i$ makes $C_j$ satisfied. 
	(Note that if the literal $x_i$ or $\neg x_i$ would not satisfy $C_j$, the weight
	of the crossed blue edge would be $w^2$, which would mean an increment of $cr(\D)$ by $w-1$. We will
	use this fact to prove the opposite statement of the Lemma below.)
	The horizontal path $H_{\enf}$ contributes $(2n+1)w^3$ to the crossing number of the drawing:
	the edges of $H_j$ connecting $V_i$ to $V_{i+1}$ 
	have weight $w$ and cross $2n+1$ blue edges whose weight is $w^2$.

	The crossing number of the drawing is thus
	\[
		n\cdot 2\cdot 2(m+1)w^3+ m\cdot \bigl((2n+1)w^3 + (w-1)(w^2-1)\bigr) + (2n+1)w^3 
	\]
	which is 
	\[
		(6nm+6n+2m+1) w^3 - m(w^2+w-1)\,=\, k.\qedhere
	\]
\end{proof}

We next have to show the reverse implication: if the anchored crossing number of $G$
is at most $k$, then the formula $I$ is satisfiable.
Henceforth, let us assume for the rest of this section that $\CRA(G)\le k$,
and let us fix an anchored drawing $\D$ of $G$ with at most $k$ crossings.
As mentioned before, $\D$ cannot have any blue-blue crossing because otherwise $\CR(\D)>k$.
In principle, $\D$ could contain red-red crossings; we will show below 
that in fact this is not possible, and hence all crossings are red-blue.
It will be convenient to look at the number of red-blue crossings without taking into account
the weights. We refer to such crossings as \DEF{unweighted crossings}.
Simple arithmetic shows the following two properties.

\begin{lemma}
\label{le:property1}
	The drawing $\D$ has at most $6nm+6n+2m+1$ unweighted red-blue crossings.
\end{lemma}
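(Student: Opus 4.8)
The plan is to turn the hypothesis $\CRA(G)\le k$ into a bound on the number of red-blue crossings without weights, using the size of the smallest edge weights that can possibly be involved. First I would recall that, by the earlier discussion, $\D$ has no blue-blue crossing, since every blue edge has weight at least $w^2-1$ and $k<(w^2-1)^2$. Hence every crossing in $\D$ is either red-red or red-blue. Each red edge has weight at least $w-1$, so a red-red crossing contributes at least $(w-1)^2$ to $\CR(\D)$, and a red-blue crossing contributes at least $(w-1)\cdot(w^2-1)=(w-1)(w^2-1)$, since the lightest blue edge has weight $w^2-1$. Both of these quantities are at least $(w-1)(w^2-1) > (w-1)w^2 - (w-1) = w^3 - w^2 - w + 1$; in fact $(w-1)(w^2-1) = w^3 - w^2 - w + 1$.

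Now let $c$ denote the total number of crossings in $\D$ (each unweighted, counting red-red and red-blue alike). Then
\begin{equation*}
	k \;\ge\; \CR(\D) \;\ge\; c\cdot (w-1)(w^2-1) \;=\; c\,(w^3 - w^2 - w + 1).
\end{equation*}
Plugging in $k=(6nm+6n+2m+1)w^3 - m(w^2+w-1)$ and using $w = 30nm$ (so that $w$ is large relative to $n$ and $m$), one checks that
\begin{equation*}
	\frac{k}{w^3 - w^2 - w + 1} \;<\; 6nm + 6n + 2m + 2,
\end{equation*}
because the leading term of $k$ is $(6nm+6n+2m+1)w^3$ and the correction terms $m(w^2+w-1)$ and the denominator's $-w^2-w+1$ are of lower order and work in the favorable direction; the ``$+2$'' slack absorbs them comfortably once $w\ge 2(6nm+6n+2m+1)$, which holds since $w=30nm$. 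Hence $c \le 6nm+6n+2m+1$. Since the number of unweighted red-blue crossings is at most the total number $c$ of crossings, the lemma follows.

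The only delicate point is the routine verification of the numerical inequality $k/(w^3-w^2-w+1) < 6nm+6n+2m+2$; the choice $w=30nm$ is deliberately generous so this reduces to comparing $O(nm)\cdot w^2$ against $w^3$, which is immediate. I do not expect any genuine obstacle here — it is a matter of bookkeeping with the explicit values of $k$ and $w$ already fixed in the construction.
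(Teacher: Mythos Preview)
Your overall strategy matches the paper's: bound each red-blue crossing from below by $(w-1)(w^2-1)$ and then check that $(6nm+6n+2m+2)(w-1)(w^2-1)>k$ using $w=30nm$. The paper does exactly this, with an explicit chain of inequalities in place of your informal ``$O(nm)w^2$ versus $w^3$'' remark.

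There is, however, a genuine slip. You write that a red-red crossing contributes at least $(w-1)^2$ and then assert ``Both of these quantities are at least $(w-1)(w^2-1)$.'' That is false: $(w-1)^2$ is much smaller than $(w-1)(w^2-1)$ for $w\ge 3$. Consequently your displayed inequality $\CR(\D)\ge c\cdot(w-1)(w^2-1)$, with $c$ the \emph{total} number of crossings, is not justified. The fix is trivial and is what the paper does: do not try to bound all crossings, only the red-blue ones. If $c_{rb}$ denotes the number of unweighted red-blue crossings, then $\CR(\D)\ge c_{rb}\cdot(w-1)(w^2-1)$ simply because all other contributions are nonnegative, and the rest of your argument goes through verbatim with $c_{rb}$ in place of $c$. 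Your final sentence (``the number of unweighted red-blue crossings is at most the total number $c$'') would then be unnecessary.
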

\begin{proof}
	Each blue edge has weight at least $w^2-1$ and each red edge has weight at least $w-1$.
	Thus each red-blue crossing contributes weight at least $(w-1)(w^2-1)$ towards the crossing number
	of $\D$. If there were strictly more than $6nm+6n+2m+1$ red-blue crossings, then the weighted crossing
	number of the drawing would be at least
	\begin{align*}
		(6nm+6n+2m+2&)(w-1)(w^2-1) \\
			&=\, (6nm+6n+2m+2)(w^3-w^2-w+1) \\
			&= \, (6nm+6n+2m+1)(w^3-w^2-w+1)+ (w^3-w^2-w+1)\\
			&= \, k - (6nm+6n+m+1)(w^2+w-1)+ (w^3-w^2-w+1)\\
			&= \, k + w^3 - (6nm+6n+m+2)(w^2+w-1)\\
			&> \, k + w^3 - 28nmw^2\\
			&> \, k.
	\end{align*}
	Hence there are at most $6nm+6n+2m+1$ red-blue unweighted crossings.
\end{proof}

Using Lemma~\ref{le:property1} and the properties of the enforcing row $R_\enf$
and the horizontal path $H_\enf$
we can show that the drawing $\D$ has the following structure.

\begin{lemma}
\label{le:property2}
	If $\CR(\D) \le k$, then:
	\begin{itemize}
		\item[\textup{(i)}] For each clause $C_j$, the horizontal path $H_j$ is inside the rows $U_j\cup L_j$
		and crosses precisely $2n+2$ blue edges.
		\item[\textup{(ii)}] The horizontal path $H_{\enf}$ is drawn inside the row $R_{\enf}$.
		\item[\textup{(iii)}] For each variable $x_i$, both vertical paths of $V_i$ are inside the column $C_i^T$
		or both are inside the column $C_i^F$.
	\end{itemize}
\end{lemma}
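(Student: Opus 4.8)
The plan is to pin down, by a topological counting argument, exactly how many blue edges each of the red paths $V_1,\dots,V_n,H_1,\dots,H_m,H_\enf$ is forced to cross, and then to invoke the tight bound of Lemma~\ref{le:property1} to make every one of those counts be attained with equality; the three assertions are then read off from the shape of the extremal drawings.

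First I would record two consequences of the weights. Since $k<(w^2-1)^2$ the blue graph $B$ is embedded with its unique combinatorial embedding, and we may treat $\partial\Omega$ as part of the drawing, so the faces of $B$ inside $\Omega$ are well defined. As $w$ is large we have $(w-1)w^4>k$ and $w^4(w^2-1)>k$, so no red edge crosses a blue edge of weight $w^4$ (these are precisely the anchor-incident blue edges), and each red edge $r_{(2i-1,m+1)}r_{(2i,m+1)}$ of weight $w^4$ crosses nothing at all; in particular $r_{(2i-1,m+1)}$ and $r_{(2i,m+1)}$ lie in one face of $B$. Next I would fix the cuts. For $\beta\in\{1,\dots,2m+2\}$ the blue edges at level $\beta$ together with the two anchors $b_{(0,\beta)},b_{(2n+2,\beta)}\in\partial\Omega$ disconnect $\Omega$ into a part below and a part above level $\beta$; for odd $\alpha\in\{1,3,\dots,2n+1\}$ the blue edges at column $\alpha$ together with the anchors $b_{(\alpha,0)},b_{(\alpha,2m+3)}$ disconnect a left from a right part. (The even-column edge sets are \emph{not} cuts, which is exactly where the enforcing row and path enter.) Write $g_i^{-}$ and $g_i^{+}$ for the faces of $B$ between columns $2i-1$ and $2i+1$ lying below level $1$ and above level $2m+2$, so that $r(x_i)\in g_i^{-}$, $r'(x_i)\in g_i^{+}$, and $C_i^T\cap C_i^F=g_i^{-}\cup g_i^{+}$. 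The crucial observation is that the two side walls of $g_i^{+}$, namely $b_{(2i-1,2m+2)}b_{(2i-1,2m+3)}$ and $b_{(2i+1,2m+2)}b_{(2i+1,2m+3)}$, are anchor-incident and hence uncrossable, so any red arc entering $g_i^{+}$ must leave it through one of the two weight-$w^2$ edges on its bottom, and symmetrically for $g_i^{-}$; consequently a red path that needs to pass from the $C_i^T$ side of the column-$2i$ cut to the $C_i^F$ side, inside the region between columns $2i-1$ and $2i+1$, must either cross that cut or detour through $g_i^{+}$ or $g_i^{-}$, the detour costing at least two blue crossings.

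Now I would prove the lower bounds. Each of the $2n$ vertical paths joins $g_i^{-}$ (below level $1$) to $g_i^{+}$ (above level $2m+2$), hence crosses at least one blue edge at each of the $2m+2$ horizontal levels: at least $2(m+1)$ blue edges. For $H_j$: the face of $r_{(0,j)}$ is left of every odd cut and below the level-$2j$ cut, while the face of $r_{(2n+1,j)}$ is right of every odd cut and above it; so $H_j$ crosses at least one blue edge at each of the $n+1$ odd columns, at least one more at level $2j$, and — since in the region between columns $2i-1$ and $2i+1$ it enters from the $C_i^T$ side (across column $2i-1$, at a middle level, the extreme edges there being uncrossable) and leaves from the $C_i^F$ side — at least one further blue crossing ``handling'' each even column $2i$ (a direct crossing of column $2i$, or a detour costing two level-$1$ or level-$(2m+2)$ crossings in that region). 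A short check shows the blue edges counted here are pairwise distinct, so $H_j$ crosses at least $(n+1)+1+n=2n+2$ of them, and the same reasoning gives $H_\enf\ge (n+1)+n=2n+1$. Summing, the red paths cross at least $2n\cdot 2(m+1)+m(2n+2)+(2n+1)=6nm+6n+2m+1$ blue edges, which by Lemma~\ref{le:property1} is also an upper bound, so every inequality is an equality. Equality for $H_\enf$ forces it to cross each vertical cut exactly once and no horizontal blue edge at all, hence to stay inside $R_\enf$ — this is (ii); equality for $H_j$ forces exactly one crossing at each of the $n+1$ odd and $n$ even columns (so no detours) and exactly one at level $2j$ and nothing else, so $H_j$ crosses precisely $2n+2$ blue edges and, crossing no other horizontal cut, stays in $U_j\cup L_j$ — this is (i).

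Finally, (iii). By the equalities, each vertical path of $V_i$ crosses no vertical blue edge and exactly one blue edge at each horizontal level; so within the region between columns $2i-1$ and $2i+1$ it climbs monotonically from band to band, never crossing column $2i$, and since it visits $g_i^{-}$ only before its first level crossing and $g_i^{+}$ only after its last, it cannot switch sides in between — it lies entirely in $C_i^T$ or entirely in $C_i^F$. If the two paths of $V_i$ were in $C_i^T$ and $C_i^F$ respectively, then by (ii) $r_{(2i-1,m+1)}\in C_i^T\cap R_\enf$ and $r_{(2i,m+1)}\in C_i^F\cap R_\enf$, two distinct faces of $B$, contradicting that the weight-$w^4$ edge joining these vertices shows they lie in a common face. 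So both paths of $V_i$ lie on the same side, which is (iii). I expect the delicate part to be the bookkeeping in the lower-bound paragraph — verifying that the ``handling'' crossings for the even columns are genuinely distinct from the odd-column and level-$2j$ crossings, and that the pocket structure of $g_i^{\pm}$ really forces every detour around an even column to cost at least one extra crossing — since it is exactly this that makes the sum of the three lower bounds coincide with the bound of Lemma~\ref{le:property1}, leaving no slack anywhere.
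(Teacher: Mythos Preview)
Your proposal is correct and takes essentially the same approach as the paper: lower-bound the unweighted red--blue crossings of each $V_i$, $H_j$, $H_\enf$ via cut-counting, sum to meet the bound of Lemma~\ref{le:property1}, force every inequality to be an equality, and read off (i)--(iii), using the weight-$w^4$ edge $r_{(2i-1,m+1)}r_{(2i,m+1)}$ on $H_\enf$ together with (ii) to finish (iii). The paper's exposition streamlines your odd/even-column bookkeeping by simply noting that $H_j$ must traverse all $2n$ column regions $C_i^T,C_i^F$ --- hence cross the $2n+1$ column-boundary paths, which are cuts with pairwise disjoint non-anchor-incident edge sets --- plus the $L_j$--$U_j$ boundary; once phrased this way the distinctness you flag as delicate becomes immediate.
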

\begin{proof}
	As mentioned before, $\D$ cannot have any blue-blue crossing because otherwise $\CR(\D)>k$.
	Moreover, no blue edge incident to an anchor
	crosses any other edge because it has weight $w^4$.
	
	Through this proof, we use $\CR_{\unw}(X)$ to denote the number of unweighted
	red-blue crossings of a red subgraph $X$ with the blue graph.
	Each of the two red vertical paths in $V_i$ 
	crosses each of the horizontal rows $R_{\enf}$ and $U_j,L_j$.
	Therefore 
	\[
		\CR_{\unw}(V_i)\ge 2\cdot (2m+2).
	\]
	Any horizontal path $H_j$
	crosses each of the columns $C_i^T,C_i^F$ and crosses
	the boundary between rows $L_j$ and $U_j$. (Here we need that edges incident
	to the anchors do not participate in any crossing, as otherwise $H_1$ could go below $L_1$.)
	Therefore 
	\[
		\CR_{\unw}(H_j)\ge  2n+2.
	\]
	Similarly, for the horizontal path $H_{\enf}$ it holds
	\[
		\CR_{\unw}(H_{\enf})\ge  2n+1.
	\]
	We conclude that  
	\begin{align*}
		\CR_{\unw}(R) &\,= \, \sum_{i=1}^n \CR_{\unw}(V_i) + 
		   \sum_{j=1}^m \CR_{\unw}(H_j) + \CR_{\unw}(H_{\enf})\\
		& \ge\, 2n(2m+2) + m (2n+2) + 2n+1 \\
		& =\, 6nm+6n+2m+1.
	\end{align*}
	Since by Lemma~\ref{le:property1} we have 
	$\CR_{\unw}(R) \le 6nm+6n+2m+1$,
	we conclude that
	\begin{align*}
		\CR_{\unw}(V_i)&= 4m+4,\\
		\CR_{\unw}(H_j)&=  2n+2,\\
		\CR_{\unw}(H_{\enf})&=2n+1.
	\end{align*}
	Since no blue edge adjacent to an anchor can be crossed by any other edge, these
	equalities imply that each one of the paths in $V_i$ is contained in the column
	$C_i^T$ or in $C_i^F$,
	that the path $H_j$ is contained in the rows $L_j\cup U_j$,
	and that the path $H_{\enf}$ is contained in the row $R_{\enf}$.

	We next argue that both paths in $V_i$ are in $C_i^T$ or both are in $C_i^F$.
	For this, consider the edge $e_i$ of the horizontal enforcing path $H_{\enf}$ that connects the two paths of $V_i$.
	Since this edge $e_i$ has weight $w^4$ it cannot cross any other edge in the drawing. In particular
	the endpoints of $e_i$ must be in the same face, say $f$, of the embedding of the blue graph $B$.
	Since this face $f$ has to be in $R_{\enf}$, both endpoints of $e_i$ have to be in $C_i^T$ or
	$C_i^F$, and both paths from $V_i$ are in the same column.	
\end{proof}

\begin{lemma}
	\label{le:harddirection}
	If $\CR(\D) \le k$, then the instance
	$I$ is satisfiable. Moreover, the restriction of the drawing $\D$ to $R$ or to $B$ is an embedding.
\end{lemma}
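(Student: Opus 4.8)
The plan is to run the argument of Lemma~\ref{le:easydirection} in reverse, using the rigid structure provided by Lemma~\ref{le:property2} to read off a satisfying assignment, and then to account carefully for the weights so that any ``extra'' crossing (beyond the combinatorial minimum forced by Lemma~\ref{le:property2}) would push $\CR(\D)$ strictly above $k$. First I would record that, by Lemma~\ref{le:property2}(iii), for each variable $x_i$ both vertical paths of $V_i$ lie in a common column $C_i^{b_i}$ with $b_i\in\{T,F\}$; this $b_i$ is the candidate truth assignment. Next, by Lemma~\ref{le:property2}(i), each horizontal path $H_j$ lies in $U_j\cup L_j$ and crosses exactly $2n+2$ blue edges; since $r_{(0,j)}$ sits in $L_j$ and $r_{(2n+1,j)}$ sits in $U_j$, the path $H_j$ must switch from the lower row to the upper row exactly once, and the switch happens by crossing a horizontal blue edge lying in $L_j\cap U_j$. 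Moreover, the count ``$2n+2$'' is tight: $H_j$ crosses each of the $2n$ columns $C_1^T,\dots,C_n^T$ (equivalently $C_1^F,\dots,C_n^F$) plus the two boundary-between-$L_j$-and-$U_j$ edges that frame the single switch, with no crossing to spare.

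The heart of the matter is then a weight computation analogous to the one in Lemma~\ref{le:easydirection}, but now as an inequality. Using that $V_1,\dots,V_n,H_1,\dots,H_m,H_\enf$ partition $E(R)$ (stated in the excerpt), I would bound $\CR(\D)$ from below by summing the weighted contributions of these pieces. Each of the $2n$ vertical paths crosses exactly the $2(m+1)$ horizontal blue edges of its column, which by the balancing property noted in the excerpt have total weight exactly $2(m+1)w^2$, so each contributes $2(m+1)w^3$ (here I must also check there are no red--red crossings: Lemma~\ref{le:property1} bounds the number of red--blue crossings by $6nm+6n+2m+1$, which by Lemma~\ref{le:property2} is exactly the number forced, leaving literally no budget for red--red crossings — the weighted cost of even one red--red crossing is at least $(w-1)^2$, far exceeding the slack, and similarly any additional red--blue crossing over the forced count costs at least $(w-1)(w^2-1)$). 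The path $H_\enf$ lies in $R_\enf$ and crosses $2n+1$ blue vertical edges of weight $w^2$, contributing $(2n+1)w^3$. For each clause $C_j$, the path $H_j$ crosses $2n+1$ blue \emph{vertical} edges (weight $w^2$ each) — connecting consecutive columns — plus exactly one horizontal blue edge in $L_j\cap U_j$ at which it switches rows; the unique red edge of $H_j$ is $r_{(2t-1,j)}r_{(2t,j)}$ for some variable index $t$, of weight $w-1$, and it is precisely this edge that must perform the switch (all other edges of $H_j$ are ``connecting'' edges of weight $w$ running inside a single row). Thus the switch occurs at the edge in $L_j\cap U_j\cap C_t^{b_t}$, and the weighted contribution of $H_j$ is $(2n+1)w^3 + (w-1)\cdot(\text{weight of that blue edge})$.

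Summing, $\CR(\D)\ge n\cdot 2\cdot 2(m+1)w^3 + (2n+1)w^3 + \sum_{j=1}^m\bigl((2n+1)w^3+(w-1)\,\omega_j\bigr)$, where $\omega_j$ is the weight of the horizontal blue edge in $L_j\cap U_j\cap C_{t(j)}^{b_{t(j)}}$ at which $H_j$ switches rows. By the construction of $B$, $\omega_j=w^2-1$ exactly when the literal of $x_{t(j)}$ whose sign matches $b_{t(j)}$ appears in $C_j$ — i.e., exactly when setting $x_{t(j)}:=b_{t(j)}$ satisfies $C_j$ — and $\omega_j=w^2$ otherwise. Hence $\CR(\D)\ge (6nm+6n+2m+1)w^3 + (w-1)\sum_j\omega_j \ge (6nm+6n+2m+1)w^3 + (w-1)\cdot m(w^2-1) = k + (w-1)\cdot|\{j:\omega_j=w^2\}|$. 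Since $\CR(\D)\le k$ and $w-1>0$, we get $|\{j:\omega_j=w^2\}|=0$, so for every clause $C_j$ the variable $x_{t(j)}$ with its value $b_{t(j)}$ satisfies $C_j$; therefore the assignment $(b_i)_i$ satisfies $I$. Equality throughout also forces $\CR(\D)=k$ with no red--red crossings and, since $B$ was already embedded, the restriction of $\D$ to $R$ is crossing-free; this gives the ``moreover'' clause. The main obstacle I anticipate is being scrupulous about the bookkeeping in the two places where ``no slack remains'': ruling out red--red crossings and pinning down that the single weight-$(w-1)$ red edge of $H_j$ is the one and only edge that can perform the row switch (so that the $-1$ discount in $\omega_j$ is available only when the literal genuinely satisfies the clause); both follow from the tight crossing counts of Lemma~\ref{le:property2} together with the $w$-gap in edge weights, but the inequalities must be assembled with care.
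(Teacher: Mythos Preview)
Your overall strategy is the paper's: read off the assignment from Lemma~\ref{le:property2}(iii), decompose $\CR(\D)$ along the partition $V_1,\dots,V_n,H_1,\dots,H_m,H_{\enf}$, compute the contributions of the $V_i$'s and $H_{\enf}$ exactly, and squeeze each $\CR(H_j)$ between matching bounds so that the crossing with $\partial_j$ is forced to be at a weight-$(w^2-1)$ blue edge encoding a satisfied literal. Two minor slips first: each $H_j$ has $n$ edges of weight $w-1$ (one per variable), not a unique one; and the breakdown of the unweighted count $2n+2$ is $2n+1$ vertical blue edges plus one edge of $\partial_j$, not ``$2n$ columns plus two boundary edges''.

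The real gap is where you assert that the single $\partial_j$-crossing is made by a weight-$(w-1)$ red edge $r_{(2t-1,j)}r_{(2t,j)}$ and then define $\omega_j$ as the weight of the blue edge in $L_j\cap U_j\cap C_t^{b_t}$. Nothing in the unweighted count prevents a weight-$w$ ``connecting'' edge of $H_j$ from being the one that crosses $\partial_j$, and in that case your definition of $t$ (and hence of $\omega_j$) collapses: you cannot conclude the switch lies in any particular column $C_t^{b_t}$, so ``$\omega_j=w^2-1$'' no longer tells you that the assignment satisfies $C_j$. The paper avoids this by not naming the switching edge in advance. It simply lower-bounds the $\partial_j$-contribution by $(w-1)(w^2-1)$ (minimum red weight times minimum blue weight on $\partial_j$), obtains $\CR(H_j)\ge(2n+1)w^3+(w-1)(w^2-1)$, and then uses the tight global budget $\sum_j\CR(H_j)\le m\bigl((2n+2)w^3-(w^2+w-1)\bigr)$ to force equality for every $j$. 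Equality pins the switching red edge to weight exactly $w-1$ (so it is some $r_{(2t-1,j)}r_{(2t,j)}$, and hence the switch occurs inside $C_t^{b_t}$) and the crossed blue edge to weight exactly $w^2-1$ (so the literal of $x_t$ with sign $b_t$ appears in $C_j$). Reordering your argument this way closes the gap; the absence of red--red crossings and the ``moreover'' clause then follow exactly as you say.
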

\begin{proof}
	By Lemma~\ref{le:property2}, in the drawing $\D$
	both paths in $V_i$ are contained either in $C_i^T$ or $C_i^F$.
	Consider the assignment where variable $x_i$ gets value $b_i=T$ if
	the two paths of $V_i$ are contained in $C_i^T$, and $b_i=F$ otherwise.
	We will show that this assignment satisfies the formula $C_1\wedge\dots\wedge C_m$ of the instance $I$.
	
	We will use in our analysis the properties of $\D$ obtained in Lemma~\ref{le:property2}.
	For a red subgraph $X$, let $\CR(X)$ denote the crossing number of the subdrawing of $\D$ induced
	by $X$ and the blue graph $B$. 
	
	All the edges in $V_i$ have weight $w$. The weights of the horizontal blue edges
	contained in $C_i^{b_i}$ add to $2(m+1)w^2$. Furthermore, note that each of those blue horizontal
	edges is crossed by each of the two paths in $V_i$. 
	Therefore we have $\CR(V_i)=2\cdot (2m+2)w^3$,
	and thus 
	\[
		\CR(\cup_i V_i)=(4nm+4n)w^3.
	\]
	All the edges from the path $H_{\enf}$ have weight $w$ or $w^4$, and hence only
	edges from $H_\enf$ with weight $w$ may cross the vertical blue edges contained in the row $R_\enf$.
	Inside the row $R_\enf$ there are $2n+1$ vertical blue edges of weight $w^2$,
	and thus 
	\[
		\CR(H_{\enf})=(2n+1)w^3.
	\]
	
	Since $cr(\D)\le k$ and the paths $V_1,\dots,V_n, H_1,\dots H_m,H_\enf$
	form an edge-disjoint partition of the edges of $R$, we have
	\begin{equation*}
		\CR(\cup_j H_j) + \CR(\cup_i V_i) + \CR(H_\enf) \,\le\, k,
	\end{equation*}
	and therefore
	\begin{equation}\label{eq:1Hj}
		\CR(\cup_j H_j)\le k- (4nm+4n)w^3 - (2n+1)w^3 = m\bigl( (2n+2) w^3 - (w^2+w-1)\bigr).
	\end{equation}
	For each clause $C_j$, the edges of $H_j$ have weight $w$, if they connect a vertex 
	in $V_i\cap H_j$ to a vertex in $V_{i+1}\cap H_j$ for some $i$,
	or weight $w-1$ if they connect both vertices of $V_i\cap H_j$.
	Because of Lemma~\ref{le:property2}, 
	the edges with weight $w-1$ are always within the column $C_i^{b_i}$ for some $i$.
	This means that the boundary of any column $C_i^T$ or $C_i^F$, which has weight $w^2$,
	is always crossed by a red edge
	of $H_j$ with weight $w$. Let $\partial_j$ denote the boundary between the lower row $L_j$ and the upper row $U_j$.
	This boundary $\partial_j$ must also be crossed by an edge of $H_j$, 
	and that crossing contributes weight at least $(w-1)(w^2-1)$ to $\CR(H_j)$. We conclude that 
	\begin{equation}\label{eq:2Hj}
		\CR(H_j)\, \ge\, (2n+1)\cdot w\cdot w^2 + (w-1)(w^2-1) \,=\, (2n+2)w^3 - (w^2+w-1),
	\end{equation}
	with equality if and only if the crossing between $H_j$ and $\partial_j$ contributes
	exactly $(w-1)(w^2-1)$ to the crossing number.
	Combining equations (\ref{eq:1Hj}) and (\ref{eq:2Hj}), we see that 
	\begin{equation}\label{eq:3Hj}
		\CR(H_j) = (2n+2)w^3 - (w^2+w-1)
	\end{equation}
	for each clause $C_j$.
	Therefore, the boundary $\partial_j$ must be crossed at a blue edge $b_{(t,j)}b_{(t+1,j)}$ of weight
	$w^2-1$ by a red edge $r_{(2i-1,j)}r_{(2i,j)}$ of weight $w-1$. It may be that $t=2i-1$ or $t=2i$.
	Consider first the case when $t=2i-1$. By the construction of the blue graph $B$,
	the edge $b_{(t,j)}b_{(t+1,j)}$ has weight $w^2-1$ because the literal $x_i$ appears in the clause $C_j$ of $I$.
	Moreover, the endpoints of $r_{(2i-1,j)}r_{(2i,j)}$ must be in the column $C_i^T$ since they are part of the
	vertical paths $V_i$. 
	Hence, the clause $C_j$ is satisfied by the assignment $x_i=b_i=T$ we defined at the beginning of the proof.
	The case when $t=2i$ is alike, but in this case the literal $\neg x_i$ appears in the clause $C_j$ of $I$,
	and we took the assignment $x_i=b_i=F$.
	
	Finally, note that our analysis shows that in $\D$ there are \emph{exactly} $k$ red-blue crossings,
	and therefore there cannot be any red-red crossings. 
	Hence the restriction of the drawing $\D$ to the red graph 
	is an embedding.
\end{proof}

We can now prove our main result.
\begin{proof}[Proof of Theorem~\ref{thm:hard1}] 
	Given an instance $I$ for SAT, we construct anchored graphs 
	$R=R(I)$, $B=B(I)$, and $G=G(I)$ as
	described in the text. We further construct, for each $X\in \{G,B,R\}$,
	the graph $H_X$ obtained from $X$ by replacing each edge $uv\in E(X)$ 
	of weight $w_{uv}$ by $w_{uv}$ parallel paths of length $2$ connecting $u$ to $v$.
	It is clear that $H_R$ and $H_B$ is a vertex-disjoint decomposition of $H_G$ into planar anchored graphs.
	Since the weights of $X$ are bounded by a polynomial in $n$ and $m$, 
	it follows that the graphs $H_G,H_R,H_B$ can be constructed in polynomial time. 
	As discussed in the introduction, we have $\CRA(H_G)=\CRA(G)$.
	From Lemmas~\ref{le:easydirection} and~\ref{le:harddirection} if follows that
	$\CRA(H_G)=\CRA(G) \le k$ if and only if $I$ is satisfiable.
\end{proof}
	
\subsection{Crossing number for near-planar graphs}
\label{sec:hard2}

We can now show that computing the crossing number of near-planar graphs is NP-hard.
Our reduction is from the problem in Theorem~\ref{thm:hard1}, and we make use
of the fact that the anchored graph is decomposed into two vertex-disjoint planar anchored graphs $R$ and $B$.
The high-level approach is the following: we replace each edge of $R\cup B$
by an edge with heavy weight and replace the boundary of the disk by a cycle $C$ with heavy weights. 
The resulting graph is planar. We make it near-planar adding an arbitrary edge connecting
$R$ to $B$. The heavy weights of $C$ force that, in an optimal drawing, $C$ is embedded.
Moreover, the additional edge forces the graphs $R$ and $B$ 
to be drawn on the same side of $C$, thus resembling an anchored drawing.

\begin{theorem}
\label{thm:hard10} 
	Computing the crossing number of near-planar graphs is NP-hard.
\end{theorem}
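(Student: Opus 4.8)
The plan is to reduce from the anchored crossing number problem of Theorem~\ref{thm:hard1}, exploiting the fact that the input there comes equipped with a vertex-disjoint decomposition into planar anchored graphs $R$ and $B$ with a prescribed cyclic order $\pi$ on the anchors $A_R \cup A_B$. Given such an anchored instance $(G, A_G, \pi_G)$ with weight parameter $w$ and threshold $k$ (as in the previous section), I would build a near-planar graph $G' = G'' + xy$ as follows. First multiply every edge weight of $R \cup B$ by a large factor $W$ (polynomially bounded, e.g. $W$ a suitable power of $w$ times a polynomial in $n,m$), so that the interesting crossing count becomes $kW^2$ but no drawing with few crossings can afford a monochromatic-within-$R\cup B$ crossing that was previously absent. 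Next, add a cycle $C$ whose vertices are exactly the anchors $A_R \cup A_B$, arranged in the cyclic order $\pi_G$, with each edge of $C$ given an even larger weight $\widehat{W} \gg W^2 k$. The graph $G''$ is now planar: embed $C$ as a simple closed curve and place the planar pieces $R$ and $B$ in the two different faces bounded by $C$ (anchors on $C$ in the order $\pi_G$, which is realizable since $R$ and $B$ each embed in a disk respecting $\pi$ restricted to its own anchors). Finally, pick any vertex $u$ of $R$ not on $C$ and any vertex $v$ of $B$ not on $C$, and add the edge $xy = uv$ of unit weight; this destroys planarity, so $G'' + uv$ is genuinely near-planar. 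Then convert the weighted graph $H_{G''+uv}$ into an unweighted near-planar graph as described in the ``weighted vs.\ unweighted'' discussion, keeping the added edge at unit weight so that near-planarity is preserved.

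The claim to prove is $\CR(G'' + uv) = kW^2$ iff $I$ is satisfiable. For the forward direction (easy): take the anchored drawing of $G$ with $k$ crossings guaranteed by Lemma~\ref{le:easydirection}, in which $R$ and $B$ are each embedded. Realize it inside a disk, let the disk boundary carry $C$ through the anchors in order $\pi_G$, and route $R$ on the inside as drawn; this already gives a drawing of $G'' $ with no $C$-crossing and $kW^2$ red-blue crossings. The added edge $uv$ runs from a red vertex to a blue vertex: since $R$ is embedded on one ``layer'' and $B$ on another, we can route $uv$ so that it does not cross $C$ and does not create new crossings beyond what is already forced — here one uses the structural ``moreover'' clauses of Lemmas~\ref{le:easydirection} and~\ref{le:harddirection} (that one of $R$, $B$ is embedded), following the technique of~\cite{M} that is explicitly invoked in the introduction, to reroute $uv$ through the face structure with zero additional crossings. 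So $\CR(G''+uv) \le kW^2$.

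For the reverse direction (the substantive part): suppose $\CR(G''+uv) \le kW^2$ and fix an optimal drawing $\D'$. Since each edge of $C$ has weight $\widehat{W} > kW^2$, no edge of $C$ is crossed, so $C$ is drawn as a simple closed curve $\gamma$ bounding two open disks $\Delta_{\mathrm{in}}, \Delta_{\mathrm{out}}$. Now the main argument is to show that all of $R$ and all of $B$ lie in a single one of these two disks, so that restricting $\D'$ to that disk yields an anchored drawing of $G$ with $\le kW^2$ crossings, whence $\CRA(G) \le k$ and Lemma~\ref{le:harddirection} gives satisfiability. To see the pieces cannot straddle $\gamma$: each connected piece of $R - A_R$ (resp. $B - A_B$) lies entirely in one side because crossing $\gamma$ would cross a $C$-edge; combined with the anchors being on $\gamma$ in order $\pi_G$, the only freedom is which side each ``bundle'' of $R$ and $B$ goes. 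But the added edge $uv$ joins an interior vertex $u$ of $R$ to an interior vertex $v$ of $B$: if $R$'s bundle and $B$'s bundle were on opposite sides of $\gamma$, then $uv$ would have to cross $\gamma$, hence cross a $C$-edge — impossible. Hence $R$ and $B$ are on the same side; discard the other side, delete the (now irrelevant) edge $uv$ and the cycle $C$, divide the crossing count back by $W^2$, and we have exactly an anchored drawing of $G$ in a disk with anchor order $\pi_G$ and at most $k$ crossings.

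The main obstacle I anticipate is not the side-separation argument but the bookkeeping needed to route the edge $uv$ in the easy direction without introducing crossings, and symmetrically to argue in the hard direction that the presence of $uv$ forces ``same side'' without itself being forced to cross something expensive — i.e.\ confirming that a single unit-weight edge suffices and that it can always be drawn inside the chosen face of the planar embedding of $R \cup B$. This is precisely where the ``one of $R,B$ is an embedding'' refinements in Lemmas~\ref{le:easydirection} and~\ref{le:harddirection} are used, together with the reduction technique of~\cite{M}; making that rerouting rigorous (and checking that it does not spuriously change the crossing count in either direction) is the crux. A secondary technical point is choosing the weight hierarchy $1 \ll W \ll W^2 \ll \widehat{W}$ with all quantities polynomially bounded so that $H_{G''+uv}$ is polynomial-size, and verifying that the multiplied threshold $kW^2$ is hit with equality exactly in the satisfiable case — which follows because the anchored analysis of Section~\ref{sec:hard1} already pins down the optimum to be \emph{exactly} $k$.
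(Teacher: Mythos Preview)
Your overall architecture matches the paper's: scale the weights, add a heavy cycle $C$ through the anchors in order $\pi_G$, observe the result is planar (one colour inside, one outside), then add a single unit-weight edge $rb$ between a red interior vertex and a blue interior vertex. The reverse direction is also essentially right: in an optimal drawing the heavy cycle $C$ is embedded and uncrossed, so $R-A_R$ and $B-A_B$ (each connected in the construction of Section~\ref{sec:hard1}) sit on one side each, and the edge $rb$ forces them to the same side; strip $C$ and $rb$ to recover an anchored drawing.

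The genuine gap is in your forward direction. You assert that the extra edge $uv$ can be routed with \emph{zero} additional crossings, and you flag this as the ``main obstacle'' to be handled by the ``moreover'' clauses of Lemmas~\ref{le:easydirection} and~\ref{le:harddirection} together with the technique of~\cite{M}. This is not how the paper proceeds, and I do not see that your plan works: in the anchored drawing of Lemma~\ref{le:easydirection} both $R$ and $B$ live in the \emph{same} disk and interleave through $k$ bichromatic crossings, so a path from a red vertex to a blue vertex will in general be blocked by red and blue edges alike; knowing that $R$ and $B$ are each embedded does not give you a crossing-free corridor between them. The reference~\cite{M} concerns dual shortest paths in a planar piece, which bounds the number of crossings needed by the planarizing edge---it does not make that number zero.

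The paper sidesteps this entirely. It does \emph{not} claim $\CR(G'+rb)=\lambda^2\CRA(G)$; instead it proves
\[
\CRA(G)=\bigl\lfloor \CR(G'+rb)/\lambda^2 \bigr\rfloor .
\]
In the forward direction the edge $rb$ is simply drawn optimally and allowed to cross every edge of $G'-E(C)$ at most once, contributing at most $\lambda W$ to the count (where $W$ is the total weight of $G$). Choosing $\lambda>2W$ (the paper takes $\lambda=320Wnm$) gives $\lambda W<\lambda^2/2$, so these stray crossings are swallowed by the floor. Thus the reduction asks whether $\lfloor \CR(G'+rb)/\lambda^2\rfloor\le k$, not whether $\CR(G'+rb)\le k\lambda^2$. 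Your proposal becomes correct once you replace the exact threshold $kW^2$ by this floored quantity and drop the attempt to route $uv$ crossing-free; the ``bookkeeping'' you anticipate then disappears.
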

\begin{proof}
	Consider an anchored graph $G$, possibly with weighted edges, that is 
	decomposed into two vertex-disjoint planar anchored graphs $R$ and $B$.
	Let $W$ be the sum of the weights of the edges in $G$; if $G$ is unweighted,
	then $W$ is the number of edges in $G$.
	The construction will use a parameter $\lambda = 320 Wnm > 2W$.
	(In this proof, setting $\lambda=2W+1$ would be enough, but
	we will need such larger $\lambda$ in the proof of Corollary~\ref{co:hard6} below.)
	
	Consider the weighted graph $G'=G'(G,\lambda)$, without anchors, 
	obtained from $G$ as follows: we start with $G$ and multiply the weight of each edge by $\lambda$. 
	For every two consecutive anchors $a$ and $a'$ of $G$ in the cyclic ordering $\pi_G$,
	we introduce in $G'$ an edge $aa'$ with weight $\lambda^4$. The set of added edges
	defines a cycle, which we denote by $C=C(G,\lambda)$. This completes the description of $G'$.
	We also fix an arbitrary vertex $r$ of $R$ that is not an anchor,
	an arbitrary vertex $b$ of $B$ that is not an anchor.
	(If all vertices of $R$ are anchors we just subdivide
	an edge and take $r$ as the new vertex. A similar procedure can be done with $B$.)
	We will study the crossing number of $G'+rb$.

	Firstly, we show that $G'$ is planar, and thus $G'+rb$ is a near-planar graph. 
	The graph $G'$ consists of
	the cycle $C$ connecting consecutive anchors of $G$, and two planar anchored graphs $R$ and $B$.
	We can thus embed $G'$ taking an embedding of $C$ in
	$\RR^2$, embedding $R$ in the disk bounded by $C$ in $\RR^2$,
	and embedding $B$ in the exterior of the disk. 
	Note that the embeddings of $R$ and $B$ exist because they are planar anchored graphs.
	Since the weight of the edge $rb$ is one, it follows that $G'+rb$ is a near-planar graph,
	even when replacing each edge of $G'$ by the corresponding
	number of subdivided parallel edges.
	
	We claim that $\CRA(G)=\lfloor \CR(G'+rb)/\lambda^2 \rfloor$. 
	Consider an optimal anchored drawing $\D$ of $G$ in a topological disk $\Omega$ with $\CR(\D)=\CRA(G)$. 
	We then extend the drawing $\D$ to obtain a drawing $\D'$ of $G'+rb$ as follows.
	Pushing the interior of the edges inside $\Omega$, 
	we may assume that $\D$ touches the boundary $\partial \Omega$ of $\Omega$
	only at the anchors. We then draw the edges $aa'$ of $G'$ between consecutive anchors 
	along the boundary of the disk $\Omega$. Finally, we draw the edge $rb$
	so as to minimize the number of crossings it contributes. Each crossing of $\D$
	contributes $\lambda^2$ crossings to $\D'$. The edge $rb$ can cross each edge of $G'-E(C)$ at most once 
	in the drawing $\D'$ because of optimality of $\D$ and the drawing of $rb$. 
	Therefore $\CR(\D')\le \lambda^2\cdot \CR(\D) + \lambda W$ because the sum of the edge weights in $G$ was $W$.
	Using that $\lambda > 2W$ we get 
	\begin{equation}\label{eq1}
		\CR(G'+rb)/\lambda^2 \le  \CRA(G) + W/\lambda \le \CRA(G)+ 1/2.
	\end{equation}
	
	Let $\D'$ be a drawing of $G'+rb$ in the plane such that no edge crosses
	itself and no two edges cross more than once.
	If the cycle $C$ is embedded and no edge of $C$ is involved in
	a crossing, then
	each pair of edges not in $C$ can cross at most once, and hence $\CR(D')$ is
	upper bounded by $\binom W2 \lambda^2 + 1\cdot W\cdot \lambda <\lambda^4$, where we have used again $\lambda > 2W$.
	If the restriction of $\D'$ to $C$ is not an embedding or some edge of $C$
	participates in a crossing, then $\CR(D')\ge \lambda^4$.	
	It follows that in every optimal drawing $\D'$ of $G'+rb$, the cycle $C$
	is embedded and no edge crosses $C$. 
	
	Consider now an optimal drawing $\D'$ of $G'+rb$ with $\CR(\D')=\CR(G'+rb)$.
	Let $\Omega$ be the disk bounded by the image of $C$ in $\D'$.
	Since no edge of $G'+rb$ can cross an edge of $C$, the drawing $\D'$ is contained
	in the closure of $\Omega$.
	If in $\D'$ we remove the image of $C$ and the image of the edge $rb$,
	then we obtain an anchored drawing of $G$, which we shall denote by $\D$. 
	Note that $\CR(\D)$ is equal to $\CR(\D')$ minus the number of crossings
	contributed by $rb$ and
	scaled down by $\lambda^2$ because of the weights introduced in $G'$. 
	We thus have 
	\[
		\CRA(G)\le \CRA(\D) \le \CR(\D')/\lambda^2=\CR(G'+rb)/\lambda^2.
	\]
	Combining with equation (\ref{eq1}) we get
	\begin{equation}\label{eq2}
			\CRA(G) \le \CR(G'+rb)/\lambda^2 \le  \CRA(G)+ 1/2.
	\end{equation}
	Since $\CRA(G)$ is an integer, this 
	finishes the \emph{proof of the claim $\CRA(G)=\lfloor \CR(G'+rb)/\lambda^2 \rfloor$}.
	
	The graph $G'+rb$ can be constructed from $G$ in polynomial time. 
	Moreover, since 
	since the weights of $G'$ are polynomially bounded, we can also replace
	each edge by parallel subdivided edges to obtain an unweighted graph $H_{G'}+rb$,
	as described in the introduction, that satisfies $\CR(G'+rb)=\CR(H_{G'}+rb)$.
	Since the graph $H_{G'}+rb$ is near-planar and $\CRA(G)=\lfloor \CR(H_{G'}+rb)/\lambda^2 \rfloor$,
	the result follows from Theorem~\ref{thm:hard1}.	
\end{proof}

\subsection{Extensions}
\label{sec:extensions}

Our NP-hardness proof in Section~\ref{sec:hard1}
has additional properties that imply certain extensions worth to be mentioned.
Let $G$, $R$ and $B$ be as defined in Section~\ref{sec:hard1}.
The key observation for our extensions is that
Lemmas~\ref{le:easydirection} and~\ref{le:harddirection} imply
that the following statements are equivalent:
\begin{itemize}
\item the instance $I$ is satisfiable;
\item graph $G$ has an anchored drawing with at most $k$ crossings;
\item graph $G$ has an anchored drawing with at most $k$ crossings 
	such that the restrictions to each of $R$ and $B$ are embeddings.
\end{itemize}

A direct consequence is the following.

\begin{corollary}
\label{co:hard2} 
	The following problem is NP-hard: given an anchored graph $G$ decomposed 
	into two vertex-disjoint graphs $R$ and $B$, compute the minimum of $\CRA(\D)$ 
	over the anchored drawings $\D$ of $G$ that are an embedding when restricted to $R$ and when restricted to $B$.
\end{corollary}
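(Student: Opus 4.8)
The plan is to recycle, essentially verbatim, the reduction built in the proof of Theorem~\ref{thm:hard1}. Given a {\sc SAT} instance $I$, I would form the same anchored graph $G=G(I)$, its vertex-disjoint decomposition into the planar anchored subgraphs $R=R(I)$ and $B=B(I)$, and the same threshold $k=k(n,m)$. Write $\mu(G)$ for the quantity the new problem asks us to compute, i.e.\ the minimum of $\CRA(\D)$ over all anchored drawings $\D$ of $G$ whose restrictions to $R$ and to $B$ are both embeddings (such drawings exist, since any anchored embedding of $R$ overlaid with any anchored embedding of $B$ is admissible). The goal is to prove $\mu(G)\le k$ if and only if $I$ is satisfiable, which is exactly the equivalence displayed just before the statement.

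The first step is the forward direction: if $I$ is satisfiable, the "Moreover" clause of Lemma~\ref{le:easydirection} furnishes an anchored drawing with $k$ crossings whose restriction to each of $R$ and $B$ is an embedding; this drawing is admissible for the restricted problem, so $\mu(G)\le k$. The second step is the converse: if $\mu(G)\le k$, pick an admissible drawing $\D$ with $\CR(\D)\le k$; then $\CRA(G)\le\CR(\D)\le k$, so Lemma~\ref{le:harddirection} applies and $I$ is satisfiable. (Note the admissibility hypothesis is not even needed in this direction, only the bound $\CR(\D)\le k$.) Thus $\mu(G)=k$ precisely when $I$ is satisfiable, and $\mu(G)>k$ otherwise.

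The third step packages this as a polynomial-time reduction in the usual way. As observed in the proof of Theorem~\ref{thm:hard1}, $G$, $R$, $B$ are constructed from $I$ in polynomial time, and replacing each weighted edge by the corresponding number of length-$2$ parallel paths yields an unweighted anchored graph $H_G$ with a vertex-disjoint decomposition into $H_R$ and $H_B$, without changing any anchored crossing number and without affecting which drawings restrict to embeddings on the two colour classes. Hence any oracle for the restricted-minimisation problem decides {\sc SAT}, giving NP-hardness.

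I do not expect a real obstacle here: the entire content is the two-line consequence of Lemmas~\ref{le:easydirection} and~\ref{le:harddirection} recorded in the displayed equivalence. The only point that deserves an explicit sentence is that the optimal drawing produced in Lemma~\ref{le:easydirection} genuinely restricts to an embedding on each of $R$ and $B$ --- but that is asserted outright in the statement of that lemma, so the argument is immediate.
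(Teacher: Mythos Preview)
Your proposal is correct and matches the paper's approach exactly: the paper states the three-way equivalence drawn from Lemmas~\ref{le:easydirection} and~\ref{le:harddirection} and then declares Corollary~\ref{co:hard2} a ``direct consequence,'' with no further proof given. Your write-up simply spells out that direct consequence in full, including the unweighting step already carried out in the proof of Theorem~\ref{thm:hard1}.
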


Since the embedding of $B$ and the embedding of $R$ are unique, we can modify the construction a bit
to obtain 3-connectivity. The approach can be summarized as follows: we multiply the weights by a large enough parameter
$\lambda$ and then add edges to increase connectivity. The contribution of the original edges
to the crossing number grows quadratically in $\lambda$ while the contribution of the new edges can be kept 
at $O(\lambda nm)$.

\begin{corollary}
\label{co:hard6} 
	The following problem is NP-hard: 
	given a simple planar, 3-connected graph $G$ and an additional edge $xy$,
	compute the crossing number of $G+xy$.
\end{corollary}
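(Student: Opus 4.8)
The plan is to take the weighted near-planar graph constructed in the proof of Theorem~\ref{thm:hard10} and push it through two further transformations: first make its underlying simple graph $3$-connected while it is still weighted, and then remove the weights by a vertex-expansion gadget that does not hurt $3$-connectivity. Concretely, let $G=R\cup B$ be the anchored graph of Section~\ref{sec:hard1}, let $W$ be its total weight, and run the construction of Theorem~\ref{thm:hard10} with $\lambda=320Wnm$; this produces the weighted planar graph $G'$ --- the cycle $C$ of weight $\lambda^4$ through the consecutive anchors, the $\lambda$-scaled copy of $R$ drawn inside $C$, the $\lambda$-scaled copy of $B$ drawn outside, and the unit-weight edge $rb$ --- for which $\CRA(G)=\lfloor\CR(G'+rb)/\lambda^2\rfloor$. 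I would then add to $G'$ a set $S$ of $O(nm)$ unit-weight edges, placed inside faces of the fixed planar embedding of $G'$ (roughly: diagonal chords of the $R$- and $B$-grids, together with a few chords in the annular faces adjacent to $C$), chosen so that $G'':=G'\cup S$ is simple and $3$-connected. Here one exploits that the embeddings of $R$ and of $B$ are the explicit, unique grid-like embeddings of Section~\ref{sec:hard1}, so that a bounded number of chords per grid cell kills every small separator and, crucially, $|S|$ does not depend on $\lambda$. Since every edge of $S$ lies inside $C$ or outside $C$, the graph $G''$ is still planar and both the inside-$C$ part and the outside-$C$ part remain anchored-planar, so the argument of Theorem~\ref{thm:hard10} carries over to $G''+rb$: deleting $S$ and $rb$ from an optimal drawing still leaves an anchored drawing of $G$ (this gives the lower bound), while for the upper bound one takes the optimal anchored drawing of $G$ in which both $R$ and $B$ are embedded --- which exists by the equivalences listed at the beginning of Section~\ref{sec:extensions} --- and routes every edge of $S$ close to the already-drawn structure so that it crosses only $O(1)$ edges, none of weight $\lambda^4$ or $\lambda w^4$. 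Hence $S\cup\{rb\}$ contributes only $O(\lambda\cdot\mathrm{poly}(n,m))=o(\lambda^2)$ crossings, and we still get $\CRA(G)=\lfloor\CR(G''+rb)/\lambda^2\rfloor$.

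The second transformation removes the weights. Substituting a weight-$\omega$ edge by $\omega$ subdivided parallel paths is hopeless here: it creates degree-$2$ vertices and makes the two endpoints of every edge a $2$-separator, and there are $\Omega(\lambda W)$ such places, far too many to repair with few edges. Instead, I would use the standard vertex expansion: replace each vertex $v$ of $G''$ by a cycle $C_v$ of length equal to the total weight of the edges at $v$, and replace each weight-$\omega$ edge $uv$ by $\omega$ pairwise non-parallel edges joining an arc of $\omega$ consecutive vertices of $C_v$ to an arc of $C_u$, respecting the rotation of $G''$ at $u$ and at $v$. The resulting graph $\hat G$ is simple, is planar because the gadgets respect the embedding of $G''$, and is $3$-connected because the cycle-expansion of a $3$-connected plane graph creates no small separators (a routine check). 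The unit-weight edge $rb$ survives as a single edge $xy$, so $\hat G+xy$ is near-planar and constructible in polynomial time. Drawing each $C_v$ as a tiny circle and using the standard fact that an optimal drawing may be assumed not to cross any $C_v$, one obtains $\CR(\hat G+xy)=\CR(G''+rb)$. Therefore $\CRA(G)=\lfloor\CR(\hat G+xy)/\lambda^2\rfloor$; since $I$ is satisfiable if and only if $\CRA(G)\le k$ (Lemmas~\ref{le:easydirection} and~\ref{le:harddirection}), deciding whether $\CR(\hat G+xy)\le \lambda^2 k+\lfloor\lambda^2/2\rfloor$ is NP-hard, and this is precisely a crossing-number question about a simple, planar, $3$-connected graph plus one edge.

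The step I expect to be the main obstacle is this last gadget analysis: checking simultaneously that the vertex expansion keeps the graph planar and $3$-connected and that contracting the gadget cycles in an optimal drawing of $\hat G+xy$ genuinely recovers a weighted drawing of $G''+rb$ with the same crossing number --- i.e. that crossing the cycles $C_v$ never helps. The reason the uniqueness of the embeddings of $R$ and $B$ is used is exactly that it keeps the intermediate $3$-connectivity step cheap: we reach $3$-connectivity by adding only polynomially-many, $\lambda$-independent unit edges, whereas triangulating $G'$ outright would add $\Omega(\lambda W)$ edges and inflate the crossing number by a factor polynomial in $\lambda$ once $B$ is forced to lie inside $C$. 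Finally, the $o(\lambda^2)$ bounds are needed only for $nm$ larger than an absolute constant; smaller SAT instances are decided in constant time, so this causes no difficulty.
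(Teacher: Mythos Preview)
Your two-stage plan diverges from the paper precisely at the point you flag as the main obstacle, and that step is a genuine gap rather than a routine check. Replacing each vertex by a cycle $C_v$ is \emph{not} known to preserve the crossing number of a weighted graph: cycles are far too flexible. Nothing prevents an optimal drawing of $\hat G+xy$ from drawing some $C_v$ as a long thin curve, routing the $\omega$ edges that replace a single weight-$\omega$ edge along homotopically different routes, or letting foreign edges cut through $C_v$ (each edge of $C_v$ has unit weight, so this is cheap). Contracting the $C_v$'s in such a drawing does not obviously yield a drawing of $G''+rb$ with the same number of crossings --- after contraction the $\omega$ curves become parallel edges that may cross each other and, worse, an edge that crossed $C_v$ now runs through the contracted vertex. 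The analogous equality $\CR(H_G)=\CR(G)$ for subdivided parallel paths works only because all parallel copies share both endpoints, so each can be rerouted along the cheapest one; this rerouting is exactly what fails when the endpoints are spread along a cycle. The rigid gadgets that \emph{are} known to work here are grids (as in Pelsmajer--Schaefer--\v{S}tefankovi\v{c}), not bare cycles.

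The paper avoids this difficulty by reversing your order of operations. It first removes the weights with the standard subdivided-parallel-paths trick, augmented by a path through the subdivision vertices of each bundle (a ``ladder''), and only then adds at most four chords per face between subdividing vertices to obtain $3$-connectivity. The point you dismissed as ``hopeless'' --- that parallel $2$-paths create $2$-separators --- is exactly what the ladder plus the face-chords repair, and cheaply: the unweighted graph $G''$ still has only $O(nm)$ non-triangular faces (independent of $\lambda$), so at most $80nm$ extra unit edges suffice. Each such chord sits inside a face of the already-drawn $B\cup C$ (with $R$ inside), so it contributes at most $\lambda W + 80nm$ crossings, and the total overhead stays below $\tfrac{3}{8}\lambda^2$. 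Thus the paper gets $\CRA(G)=\lfloor \CR(\tilde G+rb)/\lambda^2\rfloor$ by a bookkeeping argument, never needing anything like your cycle-contraction claim.
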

\begin{proof}
	Consider the anchored graph $G$ and the subgraphs $B$ and $R$, as constructed in Section~\ref{sec:hard1}.
	Recall the parameter $\lambda=320Wnm$, the planar graph $G'$, and the edge $rb$ used in the proof of Theorem~\ref{thm:hard10}.
		
	We now construct another planar graph $\tilde G$ from $G'$, as follows.
	See Figure~\ref{fig:replacement} for an example of the transformation locally.
	We start with a copy of $G'$ and replace each edge $uv$ of weight $w_{uv}$ with $w_{uv}$ edges, each subdivided once,
	and connect the subdividing vertices with a path. Let's call the resulting graph $G''$, which is also planar.
	We call the additional vertices \DEF{subdividing vertices}.
	Consider an embedding of $G''$, which is unique up to a permutation in each group of parallel paths. 
	Within each non-triangular face defined by blue edges and edges of $C$, 
	we add (at most 4) edges between subdividing vertices so that the subgraph of $G''$ induced by $B\cup C$ is $3$-connected.
	For the red graph $R$, we do a slightly different transformation:  within each non-triangular face with red edges 
	we add (at most 4) edges between the \emph{red subdividing vertices} so that the subgraph of $G''$ induced by $R$ is $3$-connected.
	(Thus, we do not add new edges connecting $R$ to $C$.)
	The resulting graph is $\tilde G$. 
	It is planar by construction and it is $3$-connected because $R$ has at least $3$ anchors.
	The edges added in the transformation from $G''$ into $\tilde G$ are called \DEF{additional edges}. 
	Note that the embedding of $G''$ has less than $20nm$ faces and therefore there are at most $80nm$ additional edges.
	
	The transformation from $G'$ to $G''$ is similar to the transformation from $G'$ to $H_{G'}$ discussed
	in the introduction, and thus $\CR(G''+rb)=\CR(G'+rb)$.
	From equation (\ref{eq2}) in Theorem~\ref{thm:hard10} we conclude
	\begin{equation}\label{eq3}
		\lambda^2 \CRA(G) \le \CR(G''+rb) \le \lambda^2 \CRA(G)+ \lambda^2 /2
	\end{equation}
	
	Since $G''$ is a subgraph of $\tilde G$, we have 
	\begin{equation}\label{eq4}
		\CR(G''+rb) \le \CR(\tilde G+rb)
	\end{equation}
	Consider an optimal drawing $\D''$ of $G''+rb$. 
	As discussed in the proof of Theorem~\ref{thm:hard10}, in $\D''$ the cycle $C$ is embedded
	and we may assume that $R$ and $B$ are drawn inside the disk bounded by $C$.
	We modify this drawing into a drawing of $\tilde G+rb$ in the following way.
	We redraw each edge $aa'$ in $C$ in such a way that vertices incident to any additional edge
	are in the interior of $C$. (This is the reason
	for the asymmetry treating $B$ and $R$ in the construction of $\tilde G$.
	If both $R$ and $B$ would have additional edges incident to subdividing vertices of $C$, this
	step would not be possible.) This step does not introduce any crossing.
	Then, we draw each of the $80nm$ additional edges optimally; each such edge is inside the disk bounded by $C$.
	Each additional edge can incur into at most $\lambda W+80nm$ new crossings 
	because the sum of the weights of $\tilde G-C$ is at most $\lambda W+80nm$.
	From the resulting drawing of $\tilde G+rb$ we obtain
	\[
		\CR(\tilde G+rb) \le \CR(G''+rb) + 80nm\cdot (\lambda W + 80nm) \le \CR(G''+rb) + (3/8)\lambda^2.
	\]
	Combining with equations (\ref{eq3}) and (\ref{eq4}) we get
	\[
		\lambda^2 \CRA(G)\le  \CR(\tilde G+rb) \le \CR(G''+rb) + (3/8)\lambda^2 \le \lambda^2 \CRA(G)+ (7/8)\lambda^2.
	\]
	It follows that
	\[
		\CRA(G) = \lfloor \CR(\tilde G+rb)/ \lambda^2 \rfloor,
	\]
	Since computing $\CRA(G)$ is NP-hard by Theorem~\ref{thm:hard1}, it is also NP-hard to compute $\CR(\tilde G+rb)$.
	The graph $\tilde G$ can be constructed in polynomial time, it is planar and $3$-connected,
	as desired.
\end{proof}

\begin{figure}
	\centering
	\includegraphics[width=.95\textwidth]{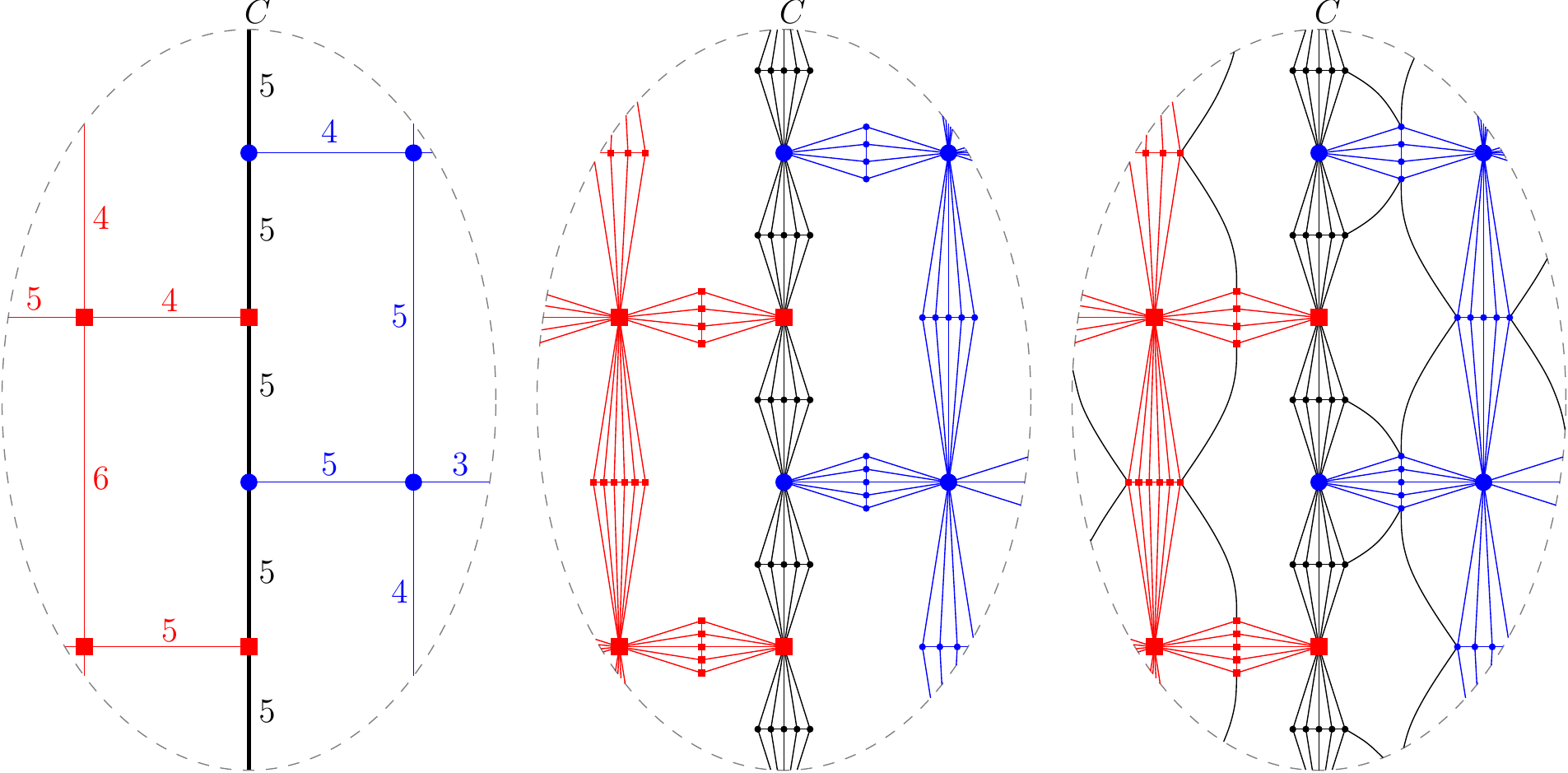}
	\caption{Example showing the transformation from $G'$ (left) to $G''$ (center) and to $\tilde G$ (right).
		The numbers on the left indicate the weights of the edges.}
	\label{fig:replacement}
\end{figure}

A \DEF{rotation system} in a graph $G$ is a list $\pi=(\pi_v)_{v\in V(G)}$,
where each $\pi_v$ is a cyclic permutation of the edges incident to vertex $v$.
A drawing of a graph $G$ with rotation system $\pi$ is a drawing where the 
the clockwise order of the edges incident to vertex $v$ in the drawing is given by the permutation $\pi_v$.
Pelsmajer et al.~\cite{PSS} have recently studied the crossing number with rotation systems.
Using a reduction from {\sc Linear Arrangement} they show that computing the crossing number
with rotation system is NP-hard. 
(As noted in~\cite{PSS}, we can keep working with weighted edges when studying this crossing number.)

Since the graphs $R$ and $B$ used in our construction have a unique anchored embedding, 
we already know a priori the rotation system in any drawing of $G$ whose restriction to $R$ or to $B$ is an embedding.
Thus Corollary~\ref{co:hard2} implies the following.

\begin{corollary}
\label{co:hard3} 
	The following problem is NP-hard: 
	given an anchored graph $G$ and a rotation system $\pi$ for $G$, 
	compute the minimum of $\CRA(\D)$
	over the anchored drawings $\D$ of $G$ with rotation system $\pi$.
	The problem remains hard when the graph $G$ is cubic.
\end{corollary}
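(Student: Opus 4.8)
The plan is to recycle the graphs $G$, $R$, $B$ and the threshold $k$ of Section~\ref{sec:hard1}, together with the three equivalent statements recorded just above. For the first assertion I would fix the rotation system $\pi$ obtained by superimposing the unique anchored embedding of $R$ and the unique anchored embedding of $B$: this is well defined because $R$ and $B$ are vertex-disjoint, so the cyclic order of the edges around any vertex of $G$ is read off from exactly one of the two embeddings. Every anchored drawing of $G$ whose restrictions to $R$ and to $B$ are embeddings automatically realizes $\pi$ (those embeddings being unique), so the drawings minimized over in Corollary~\ref{co:hard2} all respect $\pi$; hence, by Lemma~\ref{le:easydirection}, a satisfiable instance $I$ yields a $\pi$-respecting anchored drawing with exactly $k$ crossings. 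Conversely, Lemma~\ref{le:harddirection} shows that \emph{any} anchored drawing of $G$ with at most $k$ crossings forces $I$ to be satisfiable, in particular any $\pi$-respecting one. Thus the minimum of $\CRA(\D)$ over the anchored drawings $\D$ of $G$ that realize $\pi$ is at most $k$ if and only if $I$ is satisfiable, and since $(G,\pi,k)$ is computable from $I$ in polynomial time this is a valid reduction from {\sc SAT}. (One may pass to an unweighted graph by replacing each weighted edge by a bundle of subdivided parallel edges and ordering each bundle arbitrarily in $\pi$, which changes neither the anchored crossing number nor any step above.)

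For the cubic refinement I would rebuild the reduction in unweighted, bounded-degree form, exploiting that $\pi$ is prescribed so that the target rotation system is known a priori. Each weight-$w$ edge of $R\cup B$ is realized by a cubic planar ``strip'' gadget joining its two endpoints and containing $\Theta(w)$ parallel tracks, drawn so that an edge crossing the strip from one side to the other must cross at least $w$ of its edges; each vertex of degree $d$ --- in particular each anchor, whose incident $w^4$-edge blows up to a large hub --- is realized by a cubic planar ``hub'' gadget whose boundary presents $d$ ports in the cyclic order dictated by $\pi$, with the hubs of anchors placed along $\partial\Omega$ and all of their ports declared anchors (so that the prescribed cyclic order of $A_G$ is preserved, each anchor expanding into a contiguous run). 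Wiring these gadgets together along the incidence structure of $R$ and $B$ produces a simple cubic anchored graph $G^{\ast}$, and the rotation system $\pi^{\ast}$ is read off from the fixed plane shapes of the gadgets; all gadgets have size $\mathrm{poly}(n,m)$, so $G^{\ast}$ and $\pi^{\ast}$ are computable from $I$ in polynomial time.

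It remains to check that this bounded-degree rebuilding faithfully reflects the answer of the original instance --- that is, that the $\pi^{\ast}$-constrained anchored crossing number of $G^{\ast}$ is at most a computable threshold $k^{\ast}$ exactly when $I$ is satisfiable --- and this is where the work lies. One direction is routine: given an optimal $\pi$-respecting drawing of the weighted graph, replace each vertex and each edge by the planar picture of its gadget inside a thin neighbourhood, creating no new crossings. For the reverse direction I would choose every strip and hub gadget fine enough that routing any edge of $G^{\ast}$ through a gadget would cost strictly more crossings than an entire optimal drawing; then in any optimal $\pi^{\ast}$-respecting drawing no gadget is crossed, and contracting all gadgets back recovers a $\pi$-respecting drawing of the original graph with no more crossings. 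Together with the first part this makes computing the $\pi^{\ast}$-constrained anchored crossing number of a cubic anchored graph NP-hard. I expect the ``no gadget is crossed'' step to be the main obstacle: making it airtight requires a rerouting argument organized face by face of the fixed embeddings of $R$ and $B$, using that edges on gadget boundaries cannot be crossed cheaply --- everything else being a direct adaptation of the bookkeeping in Section~\ref{sec:hard1}.
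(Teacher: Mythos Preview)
Your first paragraph is correct and is exactly the paper's reasoning: the three equivalences just above the corollary, together with the uniqueness of the anchored embeddings of $R$ and $B$, let you fix $\pi$ in advance and then sandwich the $\pi$-respecting minimum between Lemma~\ref{le:easydirection} and Lemma~\ref{le:harddirection}.

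For the cubic refinement the paper takes a much shorter route than you do. Rather than designing bespoke strip and hub gadgets and then arguing that an optimal $\pi^{\ast}$-respecting drawing never enters them, the paper simply invokes the vertex-replacement technique from Pelsmajer--Schaefer--\v{S}tefankovi\v{c}: once the rotation system is fixed, replace each vertex by a large hexagonal grid and attach the incident edges along its boundary in the order dictated by $\pi$; the resulting graph is cubic, inherits a rotation system, and has the same crossing number. The point is that this equality is already proved in~\cite{PSS}, so there is nothing further to verify here beyond noting that anchors survive as boundary vertices of the replacing grids. Your proposal is essentially a from-scratch reconstruction of that machinery; the ``no gadget is crossed'' step you flag as the obstacle is precisely what the cited result establishes, and your size-threshold heuristic (``make gadgets fine enough that traversing one costs more than $k$'') is not quite the right mechanism --- an optimal drawing could in principle have a few crossings \emph{inside} a gadget without any single edge traversing it, and what actually rules this out is that the prescribed rotation at every gadget vertex matches a planar embedding, so the gadget can be locally redrawn crossing-free. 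That is the argument carried out in~\cite{PSS}, and the paper simply cites it.
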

\begin{proof}
	When the rotation system is fixed, we can replace 
	each vertex by a large hexagonal grid and attach
	the edges on the boundary of grid in the same order as in the rotation system.
	See Section 3 in~\cite{PSS} for the details. 
	(To keep the graph anchored, we just need to make several vertices in the boundary of
	the hexagonal grid anchors.)
\end{proof}

From Theorem~\ref{thm:hard10} we also obtain the following. 

\begin{corollary}
\label{co:hard4} 
	Computing the crossing number of a graph with fixed rotation system is NP-hard,
	even when the graph is planar and $3$-connected.
\end{corollary}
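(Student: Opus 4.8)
The plan is to reuse the planar, $3$-connected graph $\tilde G$ built in the proof of Corollary~\ref{co:hard6}: it contains the heavy cycle $C$ through the (cyclically ordered) anchors of $G$ together with the two anchored planar subgraphs $R$ and $B$ drawn on opposite sides of $C$, and it satisfies $\CRA(G)=\lfloor \CR(\tilde G+rb)/\lambda^2\rfloor$. The sole purpose of the extra edge $rb$ there was to force $R$ and $B$ onto the \emph{same} side of $C$ in an optimal drawing, and we simply transfer that job to the rotation system: we keep the graph $\tilde G$ (still planar and $3$-connected) and equip it with a rotation system $\pi$ that forbids $B$ from lying on the ``far'' side of $C$. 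The point that makes this clean is that, as recorded before Corollary~\ref{co:hard6}, $R$ and $B$ each have a \emph{unique} anchored embedding, so there is no internal ambiguity to argue about.

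Concretely, I would take $\pi$ to be the rotation system obtained from a planar embedding of $\tilde G$ by reflecting the part lying outside $C$ (the $B$-part together with the additional edges placed on that side) and gluing it back inside the disk bounded by $C$; equivalently, $\pi$ agrees with the planar rotation system of $\tilde G$ on $C$ and on the $R$-part, it is the reverse of that rotation system on the $B$-part, and at each anchor the pendant non-$C$ edge is placed on the same side of $C$ as the $R$-edges are. This $\pi$ is computable from the {\sc SAT} instance $I$ in polynomial time.

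Next I would rerun the heaviness argument of Theorem~\ref{thm:hard10} and Corollary~\ref{co:hard6}. Since some drawing of $\tilde G$ with rotation system $\pi$ has fewer than $\lambda^4$ crossings, every optimal one has $C$ embedded and uncrossed, so $\tilde G-C$ lies in the two closed disks bounded by $C$; because $R$ and $B$ are connected, meet $C$ only at the anchors, and every anchor has its non-$C$ edge on the same side of $C$ under $\pi$, both $R$ and $B$ lie in the same disk $\Omega$, with the anchors appearing on $\partial\Omega$ in the prescribed cyclic order. Deleting $C$ and the additional edges then leaves an anchored drawing of the $\lambda$-scaled copy of $G$, so $\CR(\tilde G,\pi)\ge \lambda^2\CRA(G)$. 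Conversely, when $I$ is satisfiable there is an anchored drawing of $G$ with $\CRA(G)=k$ crossings that is an embedding on each of $R$ and $B$ (the equivalence stated before Corollary~\ref{co:hard6}); realizing it inside an embedded $C$ and inserting the at most $80nm$ additional edges of $\tilde G$ as optimally as $\pi$ allows produces a drawing with rotation system $\pi$ and at most $\lambda^2 k+\tfrac38\lambda^2$ crossings (here we use $\lambda=320Wnm$, exactly as in Corollary~\ref{co:hard6}, and that in an optimal drawing with fixed rotation system no two edges cross twice). Since in the unsatisfiable case $\CRA(G)\ge k+1$ forces $\CR(\tilde G,\pi)\ge \lambda^2(k+1)>\lambda^2 k+\tfrac38\lambda^2$, the instance $I$ is satisfiable if and only if $\CR(\tilde G,\pi)\le \lambda^2 k+\tfrac38\lambda^2$; as computing $\CRA(G)$ decides satisfiability by Theorem~\ref{thm:hard1}, computing $\CR(\tilde G,\pi)$ is NP-hard, and $\tilde G$ is planar and $3$-connected.

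The step I expect to be the main obstacle is the verification that $\pi$ really does force $R$ and $B$ onto the same side of $C$: one must check that, once $C$ is embedded, the local cyclic order $\pi$ prescribes at an anchor determines on which side of $C$ the attached subgraph hangs, and that these local choices are mutually consistent around all of $C$. Uniqueness of the anchored embeddings of $R$ and $B$ is precisely what keeps this analysis from branching into cases, and the $\lambda^4$-weights on $C$ (inherited from the $\lambda$-scaling of Corollary~\ref{co:hard6}) are what guarantee that $C$ is embedded at all. The remaining bookkeeping — that the additional edges can be re-inserted so that the global rotation system is exactly $\pi$ and the crossing count stays below $\lambda^2 k+\tfrac38\lambda^2$ — is the same as the corresponding step in the proof of Corollary~\ref{co:hard6}.
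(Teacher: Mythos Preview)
Your proposal is correct and follows essentially the same approach as the paper: take the planar $3$-connected graph $\tilde G$ from Corollary~\ref{co:hard6}, drop the extra edge $rb$, and instead prescribe a rotation system that puts the $R$-part and the $B$-part on the same side of $C$. The paper's own proof is a three-sentence sketch of exactly this idea; you have simply filled in the details (the explicit description of $\pi$ as the planar rotation reflected on the $B$-side, the re-derivation of the $\lambda^2$-threshold, and the satisfiable/unsatisfiable dichotomy), and you correctly identify the one nontrivial point --- that the local rotation at each anchor determines on which side of the embedded cycle $C$ the attached subgraph must hang.
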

\begin{proof}
	Consider the graph $\tilde G$ used in the proof of Corollary~\ref{co:hard6}. 
	It is planar and $3$-connected. 
	For each non-anchor vertex of $(R\cup B)-V(C)$ we prescribe the rotation given by
	the unique combinatorial embedding of $R$ and of $B$ as anchored graphs.
	For each vertex of $C$, we prescribe the rotation system that
	forces $R\cup B$ to be drawn in the same side of $C$. Thus, the edge $rb$
	is not needed in the reduction.
\end{proof}

As shown in~\cite{PSS}, hardness of crossing number with rotation system implies
hardness of crossing number for cubic graphs by blowing up each vertex with a hexagonal
grid.
The described reduction yields a new, geometric 
proof of NP-completeness of the crossing number problem,
even when restricted to cubic graphs. Hardness of the crossing
number problem for cubic graphs was established by
Hlin\v{e}n\'y \cite{Hli}, who asked if one can prove this result 
by a reduction from an NP-complete geometric problem instead 
of the Optimal Linear Arrangement problem used in his proof.

A \DEF{rectilinear drawing} of a graph is a drawing where each edge is drawn using
a straight-line segment. The \DEF{rectilinear crossing number} of a graph $G$ is the minimum
number of crossings taken over all rectilinear drawings of $G$.
On the one hand, lower bounds for the crossing number of a graph $G$ 
are also lower bounds for the rectilinear crossing number of $G$.
On the other hand, the drawings used in Lemma~\ref{le:easydirection}
to show the upper bound on the crossing number are rectilinear drawings.
In fact, all the drawings we have described can be made straight-line drawings 
without increasing the number of crossings.
In particular, we can replace each vertex by a hexagonal grid and still keep a 
rectilinear drawing because in the drawing of Lemma~\ref{le:easydirection}
we know for most edges whether they are horizontal or vertical,
and for a few edges we have to decide whether they are diagonal or horizontal. 
From Corollaries~\ref{co:hard6} and~\ref{co:hard4} we
conclude the following:

\begin{corollary}
	The following problem is NP-hard: 
	given a simple planar, 3-connected graph $G$ and an additional edge $xy$,
	compute the \emph{rectilinear} crossing number of $G+xy$.
\end{corollary}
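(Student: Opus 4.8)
The plan is to reuse, without any change, the simple planar $3$-connected graph $\tilde G$ and the extra edge $rb$ built in the proof of Corollary~\ref{co:hard6}, and to check that the numerical equivalence proved there survives replacing $\CR$ by the rectilinear crossing number $\overline{\CR}$. One inequality is free: every straight-line drawing is in particular a drawing, so $\CR(\tilde G+rb)\le\overline{\CR}(\tilde G+rb)$, and dividing by $\lambda^2$ and invoking Corollary~\ref{co:hard6} gives
\[
	\CRA(G)=\bigl\lfloor \CR(\tilde G+rb)/\lambda^2\bigr\rfloor\ \le\ \bigl\lfloor \overline{\CR}(\tilde G+rb)/\lambda^2\bigr\rfloor .
\]
Hence, if $\bigl\lfloor \overline{\CR}(\tilde G+rb)/\lambda^2\bigr\rfloor\le k$ then $\CRA(G)\le k$, and then $I$ is satisfiable by Lemma~\ref{le:harddirection}.

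For the converse I would show that a satisfiable $I$ produces a straight-line drawing of $\tilde G+rb$ with fewer than $\lambda^2(k+1)$ crossings, hence $\bigl\lfloor \overline{\CR}(\tilde G+rb)/\lambda^2\bigr\rfloor\le k$. Start from the anchored drawing $\D_0$ of $G$ furnished by Lemma~\ref{le:easydirection}: it has exactly $k$ crossings and is already rectilinear, since every red and blue vertex can be placed on the integer grid and every blue edge and red edge is a horizontal or vertical unit segment, the sole exceptions being the edges $r_{(2t(j)-1,j)}r_{(2t(j),j)}$ of the horizontal paths, which merely jump from the lower to the upper row of the blue grid and can be realized as short diagonal segments. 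Now run $\D_0$ through the transformations used in the proofs of Theorem~\ref{thm:hard10} and Corollary~\ref{co:hard6}: draw the bounding cycle $C$ as the boundary of a large convex polygon containing the grid; replace each weighted segment by a once-subdivided bundle of parallel edges joined by a path, all confined to a thin strip around the original segment; inside each bounded face add the (at most four) connectivity edges within a small convex subcell; and draw $rb$ as a single segment crossing only edges of $\tilde G-C$. None of these steps introduces a crossing beyond those already counted in Corollary~\ref{co:hard6}, so the resulting straight-line drawing has at most $\lambda^2 k+\tfrac78\lambda^2<\lambda^2(k+1)$ crossings, as required.

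Combining the two directions, $I$ is satisfiable if and only if $\bigl\lfloor \overline{\CR}(\tilde G+rb)/\lambda^2\bigr\rfloor\le k$; since $\tilde G$ is simple, planar and $3$-connected and $\tilde G+rb$ is constructible from $I$ in polynomial time, computing the rectilinear crossing number of $\tilde G+rb$ is NP-hard, with $rb$ playing the role of $xy$. The only real work is the bookkeeping in the second paragraph: one must verify that every gadget in the chain — the parallel-edge bundles, the path joining the subdividing vertices, and the face-closing edges — admits a straight-line realization confined to a convex cell disjoint from all other gadgets, so that the straight-line drawing realizes exactly the crossing pattern of the drawing built in Corollary~\ref{co:hard6}. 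There is no new combinatorial ingredient; it is the rigidity of the grid-based drawing of Lemma~\ref{le:easydirection} that makes the straightening routine.
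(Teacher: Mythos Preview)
Your proposal is correct and follows essentially the same approach as the paper: use the trivial inequality $\CR\le\overline{\CR}$ for one direction, and for the other direction observe that every drawing built along the chain Lemma~\ref{le:easydirection} $\to$ Theorem~\ref{thm:hard10} $\to$ Corollary~\ref{co:hard6} can be realized with straight-line segments without increasing the number of crossings. The paper states this much more tersely (``all the drawings we have described can be made straight-line drawings''), whereas you spell out the straightening for each transformation step; but the content is the same.
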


\begin{corollary}
	Computing the \emph{rectilinear} crossing number of cubic, 3-connected graphs is NP-hard.
\end{corollary}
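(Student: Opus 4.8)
The plan is to combine the rectilinear strengthening of the construction from Section~\ref{sec:hard1} with the hexagonal-grid blow-up that turns a fixed rotation system into a cubic graph. First I would recall the planar, $3$-connected graph $\tilde G$ together with the rotation system $\pi$ produced in the proof of Corollary~\ref{co:hard4}; by that proof (and the analysis behind Corollary~\ref{co:hard6}) the crossing number of $\tilde G$ with respect to $\pi$ equals $\lambda^2\CRA(G)$ up to the additive slack already controlled, so this quantity decides satisfiability of the {\sc SAT} instance $I$. Following Pelsmajer et al.~\cite{PSS}, replace each vertex $v$ of $\tilde G$ by a sufficiently large hexagonal grid $\Gamma_v$ and attach the edges formerly incident to $v$ to the boundary of $\Gamma_v$ in the cyclic order prescribed by $\pi_v$. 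The resulting graph $\hat G$ is cubic, it is $3$-connected because $\tilde G$ is, and for grids chosen large enough $\CR(\hat G)$ equals the crossing number of $\tilde G$ with rotation $\pi$ plus a fixed constant depending only on $\tilde G$ and $\pi$: each grid forces its attached edges to leave in the prescribed rotation and contributes a predictable, drawing-independent number of internal crossings. Hence deciding whether $\CR(\hat G)$ attains a prescribed value is equivalent to deciding satisfiability of $I$.

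The remaining point is that this equivalence survives replacing $\CR$ by the rectilinear crossing number. One inequality is immediate: every rectilinear drawing is a drawing, so the rectilinear crossing number of $\hat G$ is at least $\CR(\hat G)$, and the lower bound on $\CR(\hat G)$ obtained from Lemmas~\ref{le:property1}--\ref{le:harddirection} (transported through the blow-up) applies verbatim. For the matching upper bound in the satisfiable case, recall that the drawing built in Lemma~\ref{le:easydirection} is essentially a grid drawing: the blue graph sits on an integer grid, the red vertical paths run along grid columns, the horizontal paths along grid rows, and the only non-axis-parallel edges are the single ``diagonal'' edges $r_{(2t-1,j)}r_{(2t,j)}$ that cross from a lower row to an upper row. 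All of these can be realized by straight segments with exactly the same crossing pattern. Likewise each hexagonal grid $\Gamma_v$, and the attachment of its boundary edges realizing the optimal internal routing, forms (after an affine map) a grid-like sub-drawing with only a bounded number of edges that must be routed diagonally rather than axis-parallel; these local choices are made inside convex cells and do not conflict with one another or with the global layout. Assembling these pieces yields a straight-line drawing of $\hat G$ with exactly $\CR(\hat G)$ crossings in the satisfiable case, which closes the argument; the statement then follows exactly as non-rectilinear cubic hardness does.

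The main obstacle is this last realizability claim. Rectilinear drawings are rigid, and a combinatorially specified drawing need not admit a straight-line realization with the same set of crossings. What saves us is that the target drawings are built almost entirely from axis-parallel segments on a coarse integer grid, with only $O(nm)$ segments deviating from that pattern, and each deviation is a local ``diagonal versus axis-parallel'' decision inside a convex cell. The careful step is to argue, cell by cell and grid by grid, that all these decisions can be made consistently --- perturbing vertex coordinates within their cells when needed --- so that no crossing is created or destroyed. Once this bookkeeping is carried out, the corollary follows from Corollary~\ref{co:hard4} (equivalently Corollary~\ref{co:hard6}) in the same way the non-rectilinear cubic case does.
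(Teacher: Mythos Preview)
Your proposal is correct and follows essentially the same route as the paper: blow up vertices into hexagonal grids (via Corollary~\ref{co:hard4} and the technique of~\cite{PSS}) to obtain a cubic graph, transfer the crossing-number lower bound to the rectilinear setting automatically, and realize the upper-bound drawing of Lemma~\ref{le:easydirection} by straight-line segments, using that almost all edges are axis-parallel on a coarse grid and only $O(nm)$ local ``diagonal versus horizontal'' choices remain. One small inaccuracy: the hexagonal grids are planar and contribute \emph{zero} internal crossings in an optimal drawing (not a positive constant), so $\CR(\hat G)$ equals the rotation-system crossing number of $\tilde G$ exactly, not up to an additive offset; this does not affect your reduction, but it simplifies the bookkeeping you describe.
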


It should be noted that in our reduction for near-planar graphs
we need high-degree vertices along the cycle $C$. 
We do not know the computational complexity of computing the crossing number of 
the graph $G+xy$ when $G$ is planar and has bounded degree, or even degree 4.
As discussed in the introduction, when $G$ has degree $3$ such crossing
number can be computed in polynomial time; see~\cite{Ri} and~\cite{CM}.

\section{1-planarity}
\label{S:1-planar}

In this section we show that it is NP-hard to decide whether a given near-planar graph 
is 1-planar. The approach is very similar to the one used to crossing number.
However, while crossing numbers are global, the concept of $1$-planarity is more local.
Thus, we first define a gadget and study its $1$-planar drawings.
Afterward we show that deciding $1$-planarity for anchored graphs is hard.
An example showing the eventual reduction for a small example is shown in Figure~\ref{fig:1-planar-example}.
Finally, we consider near-planar graphs. 

For any natural number $t$, a \DEF{$t$-path} is a path with $t$ edges.
A \DEF{$5$-thick edge} connecting vertices $u$ and $v$ is a set of five $2$-paths 
$u v_1 v, u v_2 v,\dots, u v_5 v$, where $v_1,\dots, v_5$ are distinct vertices of degree $2$.
Alternatively, such $5$-thick edge is a complete bipartite
graph $K_{2,5}$ with $u$ and $v$ defining one of its parts.
When a $5$-thick edge is drawn without any crossings, it defines 4 \DEF{inner faces}.
Those are the interior faces of the thick edge, each bounded by four edges.
(The unbounded face of the drawing is not an inner face.) In our drawings, we will represent
a $5$-thick edge with a filled in lens-like shape.

\begin{figure}
	\centering
	\includegraphics[width=.95\textwidth]{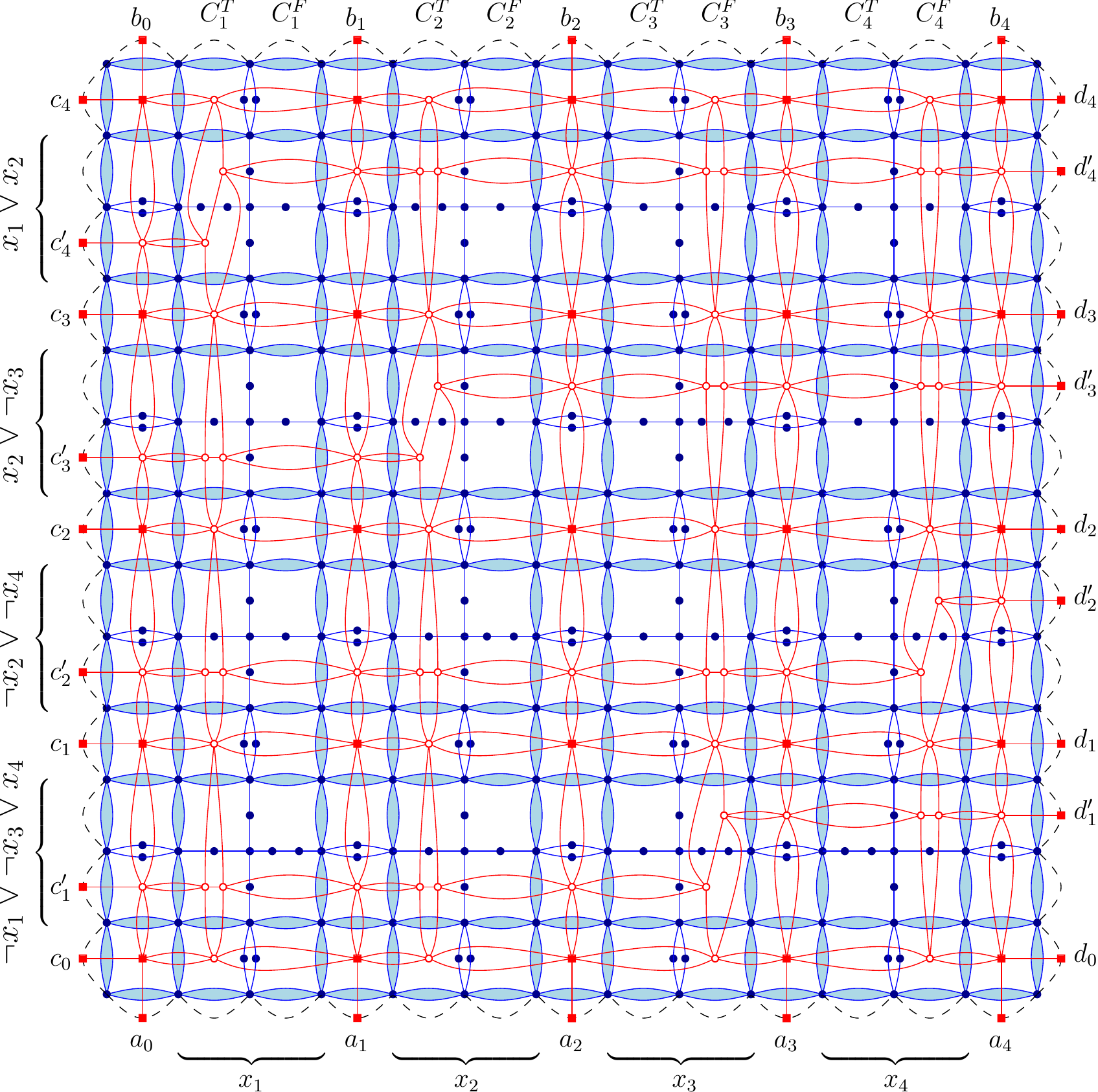}
	\caption{Example of the resulting reduction for anchored 1-planarity for the formula on 4 variables $x_1,x_2,x_3,x_4$
		and clauses $\neg x_1\vee \neg x_3\vee x_4$, $\neg x_2 \vee \neg x_4$,
		$x_2\vee \neg x_3$, $x_1\vee x_2$.
		The anchored 1-drawing in the figure corresponds to the boolean assignment $x_1=x_2=T$ and $x_3=x_4=F$.
		The dashed line around the figure indicates the boundary of the disk.}
		\label{fig:1-planar-example}
\end{figure}

\subsection{Basics and gadget}

We first provide some basic properties of $1$-planar graphs with $5$-thick edges.
If we have some $5$-thick edges in a graph $G$, we view each of them a single ($5$-thick) edge and
we no longer consider the vertices of degree $2$ inside these $5$-thick edges as vertices of $G$. 
After isolating as many edge-disjoint $5$-thick edges as possible, we cover all remaining vertices
of degree $2$ by maximal $t$-paths (of various lengths) whose internal vertices have degree $2$ in $G$.

To keep control over the drawings
it will be convenient to restrict our attention to $1$-drawings with the \emph{minimum number of crossings}.
For any drawing $\D$ of a graph $G$ and any subgraph $H$ of $G$, we use $\D_H$ for the
restriction of the drawing $\D$ to $H$.

\begin{lemma}
\label{le:5-thick}
	Let $G$ be a $1$-planar graph and let $H$ be its subgraph
	consisting of $5$-thick edges. 
	Let $\D$ be a $1$-drawing of $G$ with the minimum number of crossings.
	\begin{itemize}
	\item[\textup{(a)}] The restriction $\D_H$ of the drawing $\D$ to $H$ is an embedding.
	\item[\textup{(b)}] Let $z\in V(G)$ and $A\subset V(G)\setminus \{ z\}$. 
		Let $uv$ be a $5$-thick edge in $G$. 
		Suppose that there are $5$ edge-disjoint paths in $G- \{ u,v \}$ from $z$ 
		to vertices in $A$. 
		If in $\D_H$ none of the vertices in $A$ is in an inner face of the $5$-thick edge $uv$,
		then in $\D_H$ the vertex $z$ cannot be in an inner face of the $5$-thick edge $uv$.		
	\end{itemize}
	The same properties hold if $G$ is an anchored $1$-planar graph, 
	where no internal part of a $5$-thick edge of $H$ is an anchor, 
	and $\D$ is an anchored $1$-drawing of $G$ with the minimum number of crossings.
\end{lemma}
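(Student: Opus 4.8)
The plan is to establish both parts by a counting argument based on the structure of a planar embedding of a single $5$-thick edge (i.e.\ a $K_{2,5}$), exploiting that in an optimal $1$-drawing the crossings involving $H$ can be locally ``uncrossed'' without creating new crossings elsewhere.

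\medskip

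\noindent\textbf{Part (a).} First I would argue that $\D_H$ is crossing-free. Suppose some edge of a $5$-thick edge $uv$ is crossed in $\D$. An edge $e$ of $H$ lies on a $2$-path $uv_\ell v$; since each of $u,v,v_\ell$ has very restricted incident edges, I would show that any crossing on $e$ can be removed by a local rerouting of the $2$-path $uv_\ell v$ close to the lens, decreasing the total number of crossings, contradicting minimality of $\D$. The key point is that the five $2$-paths of a $5$-thick edge are ``parallel'' and interchangeable: if $uv_\ell v$ is crossed, one can slide it to run alongside an uncrossed sibling $2$-path $uv_{\ell'}v$ (there is always an uncrossed one among the five since, in a $1$-drawing, a single edge meets at most one other edge, and a short case check shows not all five $2$-paths can be simultaneously crossed in a way that blocks every local rerouting). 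Because $\D$ is a $1$-drawing with the minimum number of crossings, this rerouting cannot strictly help, so no such crossing exists; hence $\D_H$ is an embedding. I expect the mild subtlety here to be making the ``local rerouting does not create new crossings'' claim precise, which is why one needs the minimum-crossing hypothesis rather than merely $1$-planarity.

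\medskip

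\noindent\textbf{Part (b).} Given (a), $\D_H$ is a plane embedding of the disjoint union of $K_{2,5}$'s, so the inner faces of the $5$-thick edge $uv$ are exactly the four quadrilateral faces bounded by consecutive pairs of the five $2$-paths, and the boundary of their union is the ``outer lens'' formed by the two extreme $2$-paths. Suppose for contradiction that $z$ lies in an inner face $f$ of the $5$-thick edge $uv$ in $\D_H$. Consider the five edge-disjoint paths $P_1,\dots,P_5$ in $G-\{u,v\}$ from $z$ to $A$. Since $z\in f$ and (by hypothesis) no vertex of $A$ is in any inner face of $uv$, each $P_t$ must leave the union of the four inner faces, hence must cross the topological boundary of that union in the drawing $\D$. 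That boundary, in $\D_H$, is made up of edges of the $5$-thick edge $uv$ (the two outer $2$-paths), together with possibly passing through the vertices $u$ or $v$ — but $P_t$ avoids $u$ and $v$, so $P_t$ must actually cross an \emph{edge} of the $5$-thick edge $uv$. Thus each $P_t$ contributes at least one crossing on an edge of $uv$. The five paths being edge-disjoint, these are (at least) five distinct crossings, distributed among the $10$ edges of the $5$-thick edge; by pigeonhole some $2$-path of $uv$ (which has only $2$ edges) receives a crossing, and in fact at least one edge of the $5$-thick edge is crossed. But that contradicts part (a), which says $\D_H$ — and more to the point, every edge of $H$ in $\D$ — is uncrossed. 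Hence $z$ cannot lie in an inner face of $uv$.

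\medskip

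\noindent\textbf{Anchored version and main obstacle.} The anchored statement follows verbatim: the arguments above are purely local and topological inside the disk $\Omega$, and the hypothesis that no internal part of a $5$-thick edge is an anchor guarantees the degree-$2$ vertices $v_1,\dots,v_5$ lie in the interior of $\Omega$, so the local reroutings in part (a) stay inside $\Omega$ and respect the anchored structure; the boundary-crossing count in part (b) is unchanged. The step I expect to be the real obstacle is part (a): one must show carefully that an optimal $1$-drawing forces $\D_H$ to be crossing-free, i.e.\ that every crossing on a $5$-thick edge can be eliminated by a local modification that does not increase the number of crossings elsewhere. This requires a short but genuine case analysis of how at most one edge can cross a given edge of a $2$-path, together with an argument that among five parallel $2$-paths there is always a ``free lane'' along which to reroute — once that is in hand, part (b) is an easy pigeonhole on edge-disjoint paths against the now-uncrossed lens boundary.
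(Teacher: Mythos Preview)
Your argument for part (b) has a genuine gap in the final step. You correctly set up that the five edge-disjoint paths $P_1,\dots,P_5$ must each cross the boundary of the union of inner faces, and that this boundary consists of the four edges of the outer lens. But then you conclude ``at least one edge of the $5$-thick edge is crossed\ldots\ that contradicts part (a).'' This is wrong: part (a) only asserts that $\D_H$ is an embedding, i.e.\ there are no $H$--$H$ crossings. It says nothing about edges of $H$ being uncrossed by \emph{non-$H$} edges of $G$ in $\D$, and indeed in the paper's gadgets the red paths routinely cross $5$-thick blue edges. So the contradiction you invoke does not exist. The correct way to finish is the one you set up but did not use: the boundary has only four edges, and in a $1$-drawing each of them is crossed at most once, so at most four of the $P_t$ can escape --- a direct contradiction with having five. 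This is exactly the paper's argument, except the paper applies it to the single inner face $f_z$ containing $z$ (also bounded by four edges) rather than to the union of inner faces.

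Your sketch for part (a) is also shakier than the paper's. You seem to aim at showing every edge of $H$ is uncrossed in $\D$, which is both stronger than stated and false in general. The paper instead argues in two steps: first, two parallel $2$-paths inside the \emph{same} $5$-thick edge cannot cross (a local uncrossing move strictly reduces crossings); second, two \emph{different} $5$-thick edges $uv$ and $u'v'$ cannot cross, because $u'$ and $v'$ must lie in the same face of $\D_{uv}$ (otherwise the five $2$-paths of $u'v'$ would need to cross the four edges bounding that face), and then any $2$-path of $u'v'$ that leaves and re-enters that face can be uncrossed. Your ``free lane'' rerouting handles the first case but does not address the second; you would need a separate argument for inter-thick-edge crossings.
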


\begin{proof}
	We first show that in $\D$ no two parallel paths in the same $5$-thick edge intersect.
	Indeed, whenever any two such paths cross in a $1$-drawing, we can make a local
	change to get another $1$-drawing with strictly smaller total number of crossings.
	See Figure~\ref{fig:minimal} for the transformation.
	Observe that the same transformation can be used for anchored drawings as long as
	degree-2 vertices in $5$-thick edges are not anchors.
	
	We next argue that there are no crossings involving different $5$-thick edges.
	Consider a $5$-thick edge $uv$ and another $5$-thick edge $u'v'$.
	We already know that $\D_{uv}$ is an embedding. 
	The vertices $u'$ and $v'$ have to be in the same face of $\D_{uv}$:
	if they are in different faces, each of the $5$-edge disjoint paths forming $u'v'$ have
	to cross some of the edges of the face containing $u'$, which is not possible because such face
	has degree $4$.
	Let us consider the $2$-paths connecting $u'$ to $v'$.
	At least one of them, say $u' v_1 v'$, does not cross the boundary of the face $F$ of $\D_{uv}$
	that contains $u'$ and $v'$. If another $2$-path $u'v_2 v'$ crosses the boundary of $F$,
	then the face bounded by the $4$-cycle $u' v_1 v' v_2 u'$ either contains both $u$ and $v$ or none of them.
	It is easy to see that the $2$-path $u' v_2 v'$ is crossing twice a $2$-path $utv$ of
	the boundary of $F$	and we can unmake the crossings. 
	We conclude that no two $5$-edges of $H$ participate in a mutual crossing and thus $\D_H$ is an embedding.
	This completes the proof of item (a).
		
	To prove item (b) we use an argument similar to before.
	Assume for the sake of contradiction, that 
	the face $f_z$ in $\D_H$ that contains $z$ is an inner face of $uv$, and thus has degree $4$.
	Since we have $5$ edge-disjoint paths in $G-\{ u, v\}$ from $z$ to $A$, and $A$ has no vertex in $f_z$,
	each of those paths has to cross a distinct edge on the boundary of $f_z$, which is not possible.
\end{proof}

\begin{figure}
	\centering
	\includegraphics[width=.6\textwidth]{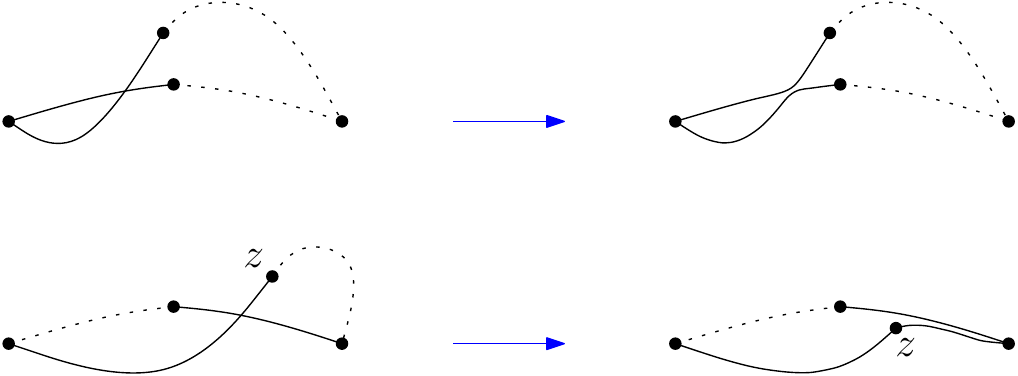}
	\caption{Transformation in Lemma~\ref{le:5-thick}(a). 
		Applying these transformations to a $1$-drawing produces another $1$-drawing with fewer crossings
		as long as $\deg(z)=2$ (and $z$ is not an anchor).}
		\label{fig:minimal}
\end{figure}

Our construction is based on a gadget consisting of a pair of graphs $X$ and $Y$.
We first describe the gadget and then analyze certain types of 1-drawings for $X\cup Y$.
 
Let $X$ be the \emph{embedded} graph constructed as follows (see Figure~\ref{fig:1-planar-notation}, left):
\begin{itemize}
	\item[(i)] The vertex set of $X$ is grid-like with vertices $u_{(\alpha,\beta)}$, $(\alpha,\beta)\in[4]\times[4]$.
	\item[(ii)] For each $(\alpha,\beta)\in[3]\times[4]$, $\beta\not= 2$
		we connect $u_{(\alpha,\beta)}$ and $u_{(\alpha+1,\beta)}$ with a $5$-thick edge.
	\item[(iii)] For each $(\alpha,\beta)\in[4]\times[3]$, $\alpha\not= 2$
		we connect $u_{(\alpha,\beta)}$ and $u_{(\alpha,\beta+1)}$ with a $5$-thick edge.
	\item[(iv)] We add $2$-thick edges between $u_{(0,2)}$ and $u_{(1,2)}$, between $u_{(3,2)}$ and $u_{(4,2)}$,
		between $u_{(2,0)}$ and $u_{(2,1)}$, and between $u_{(2,3)}$ and $u_{(2,4)}$.
	\item[(v)] We add $2$-path edges connecting $u_{(2,2)}$ to $u_{(1,2)}$, $u_{(2,3)}$, $u_{(3,2)}$ and to $u_{(2,1)}$.
	\item[(vi)] The embedding is the one obtained by assigning vertex $u_{(\alpha,\beta)}$ 
		to the point $(\alpha,\beta)\in[4]\times[4]$ in the Euclidean plane,
		and drawing the edges with almost straight curves. 
		Thus, the embedding makes $X$ look like a grid
	\item[(v)] We use $f_{(\alpha,\beta)}$ to denote the square-like face with $u_{(\alpha,\beta)}$ in its bottom left.
		The other faces, defined by edges within one single $5$-thick edge, are called \DEF{internal} faces.
		(Each $5$-thick edge has $4$ internal faces.)
\end{itemize}

\begin{figure}
	\centering
	\includegraphics[width=.8\textwidth]{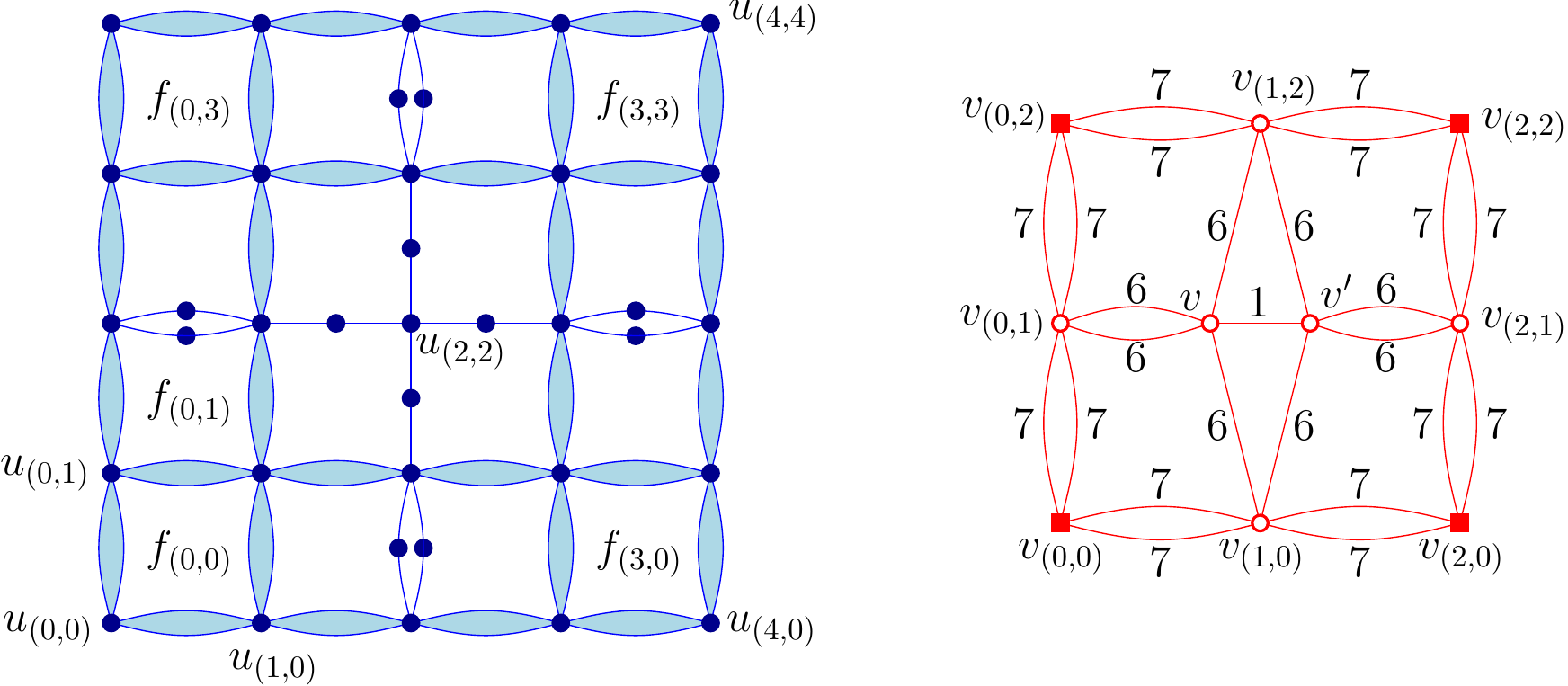}
	\caption{Left: the embedded graph $X$ with some of its vertices and faces labeled. The $5$-thick edges are shaded.
			Right: the graph $Y$ with its vertices labeled; each edge is actually a path whose length is annotated with the edge.}
	\label{fig:1-planar-notation}
\end{figure}

We call $X_\positive$ the variant of $X$ where the $2$-path connecting $u_{(1,2)}$ to $u_{(2,2)}$ 
is replaced with a $3$-path ($X_\positive$ is part of Figure~\ref{fig:1-planar-gadget-positive}, right).
We call $X_\negative$ the variant of $X$ where the $2$-path connecting $u_{(2,2)}$ to $u_{(3,2)}$ 
is replaced with a $3$-path. 

Let $Y$ be the graph, without a specified embedding, constructed as follows (see Figure~\ref{fig:1-planar-notation}, right):
\begin{itemize}
	\item[(i)] Start with a grid-like vertex set with vertices $v_{(\alpha,\beta)}$, $(\alpha,\beta)\in[2]\times[2]$,
		and add two disjoint $7$-paths between $v_{(\alpha,\beta)}$ and $v_{(\alpha',\beta')}$ if and only if 
		$|\alpha-\alpha'|+|\beta-\beta'|=1$.
	\item[(ii)] Remove $v_{(2,2)}$ and the $7$-paths incident to it.
	\item[(iii)] Add two new vertices $v$ and $v'$ and the edge $vv'$. 
		Next add two $6$-paths connecting $v$ to $v_{(0,1)}$ and two $6$-paths connecting $v'$ to $v_{(2,1)}$.
		Finally add $6$-paths connecting $v$ and $v'$ to $v_{(1,2)}$ and $v_{(1,0)}$.
\end{itemize}

Let $Z\in \{X, X_\positive,X_\negative\}$. 
A drawing of $Z\cup Y$ is \DEF{compliant} if it extends the embedding of $Z$,
no part of $Y$ is drawn in the outer face of $Z$,
vertex $v_{(0,0)}$ is in the face $f_{(0,0)}$ of $Z$, 
vertex $v_{(0,2)}$ is in the face $f_{(0,3)}$ of $Z$,
vertex $v_{(2,2)}$ is in the face $f_{(3,3)}$ of $Z$, and 
vertex $v_{(2,0)}$ is in the face $f_{(3,0)}$ of $Z$.
Let us call the vertices $v_{(0,0)}$, $v_{(0,2)}$, $v_{(2,2)}$, $v_{(2,0)}$ the \DEF{corners} of $Y$.
Thus a compliant drawing has a fixed position for the corners.
In our figures the corners are drawn with squares, 
while the other vertices of $Y$ are drawn with empty circles.
In the following we will only consider compliant $1$-drawings of $Z\cup Y$.
See Figure~\ref{fig:1-planar-gadget-positive} for some examples of compliant $1$-drawings.

For each non-corner vertex of $Y$ there are at least 5 edge-disjoint paths connecting it
to the corners. Since the position of the corners is fixed in a compliant $1$-drawing,
Lemma~\ref{le:5-thick}(b) implies that no non-corner vertex of $Y$ can be in an inner face of a $5$-thick edge of $Z$. 
Thus, in a compliant $1$-drawing of $Z\cup Y$ each vertex of $Y$ is in some face $f_{(\alpha,\beta)}$.

\begin{lemma}
\label{le:1-planar-gadget-positive}
	For any $Z\in \{X, X_\positive,X_\negative\}$,
	there is a compliant $1$-drawing of $Z\cup Y$ 
	satisfying any combination of a property stated in \textup(a\textup) and a property stated in \textup(b\textup) below:
	\begin{itemize}
		\item[\textup(a\textup)] $v_{(0,1)}$ is in the face $f_{(0,1)}$ and $v_{(2,1)}$ is in the face $f_{(3,1)}$,
			or $v_{(0,1)}$ is in the face $f_{(0,2)}$ and $v_{(2,1)}$ is in the face $f_{(3,2)}$.
		\item[\textup(b\textup)] $v_{(1,0)}$ is in the face $f_{(1,0)}$ and $v_{(1,2)}$ is in the face $f_{(1,3)}$,
			or $v_{(1,0)}$ is in the face $f_{(2,0)}$ and $v_{(1,2)}$ is in the face $f_{(2,3)}$.
	\end{itemize}
\end{lemma}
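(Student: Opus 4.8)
Lemma~\ref{le:1-planar-gadget-positive} asserts the existence of compliant $1$-drawings, so the plan is purely constructive: exhibit the drawings explicitly and check they are both compliant and $1$-drawings. The statement offers $2\times 2 = 4$ combinations for each choice of $Z\in\{X,X_\positive,X_\negative\}$, but the structure of $X$ is highly symmetric under the four-fold rotation of the grid $[4]\times[4]$ and under reflection, so I would first exploit these symmetries to cut down the number of cases that need an independent description. Concretely, I would fix one ``canonical'' drawing---say, for $Z = X$ with $v_{(0,1)}\in f_{(0,1)}$, $v_{(2,1)}\in f_{(3,1)}$, $v_{(1,0)}\in f_{(1,0)}$, $v_{(1,2)}\in f_{(1,3)}$---and obtain the other three $X$-combinations by applying the horizontal reflection (swapping columns $0\leftrightarrow 4$, $1\leftrightarrow 3$) and the vertical reflection (swapping rows $0\leftrightarrow 4$, $1\leftrightarrow 3$) of the grid; these reflections fix $X$ as an embedded graph and permute the faces $f_{(\alpha,\beta)}$ in the obvious way, so they carry compliant $1$-drawings to compliant $1$-drawings realizing the other options in (a) and (b).

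The core of the proof is therefore the description of the canonical drawing. Here I would follow the picture in Figure~\ref{fig:1-planar-gadget-positive}: by Lemma~\ref{le:5-thick}, the $5$-thick edges of $Z$ are embedded and receive no crossings, and every vertex of $Y$ lives in some square face $f_{(\alpha,\beta)}$, so the task reduces to (1) assigning each vertex of $Y$ to a square face of $Z$ consistently with the prescribed corner positions and the chosen options in (a)--(b), and (2) routing the $7$-paths and $6$-paths of $Y$ through the ``thin'' part of $Z$ (the $2$-thick edges around the central cross and the central $2$-paths at $u_{(2,2)}$), incurring at most one crossing per edge. The natural routing sends each $7$-path or $6$-path of $Y$ between two adjacent square faces by crossing exactly one of the $2$-thick or $2$-path edges of $Z$ that separates them, or by going through a shared corner region without crossing anything; since every internal path of $Y$ has length $\ge 6$ it has plenty of subdividing vertices to be threaded through consecutive faces, and only its first and last edges (the ones actually leaving a face of $Z$) need to cross, each at most once. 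One then checks that distinct paths of $Y$ can be kept disjoint from one another---this is where the two parallel $7$-paths between adjacent $v_{(\alpha,\beta)}$'s are used: they go around opposite sides of the ``hub'' vertex sitting between the two square faces---so $Y$ itself is drawn without self-crossings, and the only crossings in the whole drawing are the single crossings of $Y$-edges with the thin $Z$-edges.

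The adaptation to $Z = X_\positive$ and $Z = X_\negative$ is then a local modification: replacing the $2$-path $u_{(1,2)}u_{(2,2)}$ (resp.\ $u_{(2,2)}u_{(3,2)}$) by a $3$-path changes only the boundary of the faces $f_{(1,2)}$ and $f_{(1,1)}$ (resp.\ $f_{(2,2)}$ and $f_{(2,1)}$) by one extra edge and one extra subdividing vertex, and the same routing of $Y$ goes through unchanged---the extra edge of the $3$-path can simply be left uncrossed, or absorbed into the at-most-one-crossing budget exactly as the old $2$-path edge was. So once the canonical $X$-drawing and its three reflected variants are described, the $X_\positive$ and $X_\negative$ cases follow by the identical recipe with this cosmetic change, giving all $12$ required drawings.

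The main obstacle I expect is bookkeeping: there are many vertices and paths in $Z\cup Y$, and convincingly arguing that the $Y$-paths can be mutually disjoint while each crossing a $Z$-edge at most once requires a careful, picture-driven case analysis of how the four ``arms'' of $Y$ (through $v$, $v'$, $v_{(1,0)}$, $v_{(1,2)}$, and the boundary $7$-paths through the corners) interleave around the central region of $Z$. The cleanest way to handle this is to present the explicit Figure~\ref{fig:1-planar-gadget-positive} as the certificate and verify compliance and the $1$-drawing property face by face, rather than to give a fully combinatorial description; the symmetry reduction above keeps the number of figures that must be inspected small.
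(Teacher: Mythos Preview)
Your proposal is correct and matches the paper's approach essentially line for line: exhibit one explicit compliant $1$-drawing of $X\cup Y$ (the paper simply points to Figure~\ref{fig:1-planar-gadget-positive}), derive the other three combinations by horizontal and/or vertical mirror symmetry, and handle $X_\positive$, $X_\negative$ by the trivial observation that subdividing one of the central $2$-paths once does not disturb the drawing. One small remark: you invoke Lemma~\ref{le:5-thick} to justify that the $5$-thick edges are embedded, but that lemma goes the other way (it constrains minimum-crossing $1$-drawings); here the embedding of $Z$ is already part of the definition of ``compliant,'' so no lemma is needed.
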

\begin{proof}
	See the left side of Figure~\ref{fig:1-planar-gadget-positive} for the drawing of $X\cup Y$.
	The remaining drawings of $X\cup Y$ are obtained by vertical and/or horizontal mirror symmetries.
	For $Z=X_\positive$ or $Z=X_\negative$ just subdivide once an edge incident to $u_{(2,2)}$.
\end{proof}

\begin{figure}
	\centering
	\includegraphics[width=.8\textwidth]{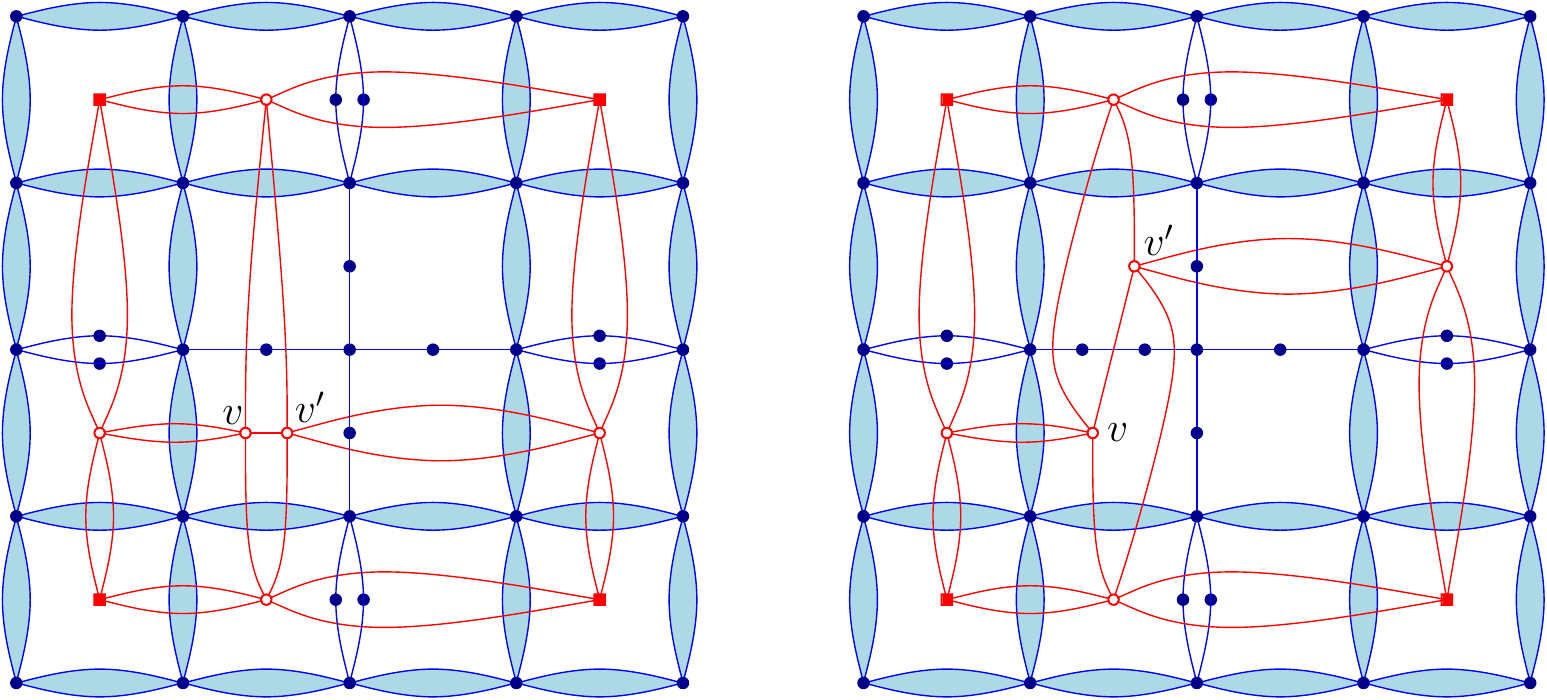}
	\caption{Left: A compliant drawing of $X\cup Y$ with the properties of Lemma~\ref{le:1-planar-gadget-positive}.
			Right: A compliant drawing of $X_\positive\cup Y$ with the properties of Lemma~\ref{le:1-planar-gadget-positive2}.}
	\label{fig:1-planar-gadget-positive}
\end{figure}

\begin{lemma}
\label{le:1-planar-gadget-positive2}
	There is a compliant $1$-drawing of $X_\positive\cup Y$ with the following two properties:
	\begin{itemize}
		\item $v_{(0,1)}$ is in the face $f_{(0,1)}$ and $v_{(2,1)}$ is in the face $f_{(3,2)}$;
		\item $v_{(1,0)}$ is in the face $f_{(1,0)}$ and $v_{(1,2)}$ is in the face $f_{(1,3)}$.
	\end{itemize}
	There is a compliant $1$-drawing of $X_\negative\cup Y$ with the following two properties:
	\begin{itemize}
		\item $v_{(0,1)}$ is in the face $f_{(0,1)}$ and $v_{(2,1)}$ is in the face $f_{(3,2)}$;
		\item $v_{(1,0)}$ is in the face $f_{(2,0)}$ and $v_{(1,2)}$ is in the face $f_{(2,3)}$.
	\end{itemize}
\end{lemma}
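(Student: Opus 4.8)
The plan is to proceed exactly as in the proof of Lemma~\ref{le:1-planar-gadget-positive}, namely by exhibiting the required drawings explicitly. Note first that $X_\positive$ and $X_\negative$ are mirror images of each other (a horizontal reflection swaps $u_{(1,2)}$ with $u_{(3,2)}$, hence swaps the lengthened edge), and the two prescribed drawings in the statement differ exactly by the same horizontal reflection: the target faces $f_{(1,0)},f_{(1,3)}$ for $v_{(1,0)},v_{(1,2)}$ become $f_{(2,0)},f_{(2,3)}$ under $\alpha\mapsto 4-\alpha$, while $f_{(0,1)}$ and $f_{(3,2)}$ are swapped (and $f_{(3,2)}$ and $f_{(0,1)}$ return, after composing with the vertical reflection $\beta\mapsto 3-\beta$ that is \emph{not} applied, so one must be careful here). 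So the first step is to settle the $X_\positive$ case and then argue that the $X_\negative$ case follows by symmetry — checking that the composite reflection carries the first bulleted drawing of $X_\positive\cup Y$ to the claimed drawing of $X_\negative\cup Y$, which is a finite face-bookkeeping check against Figure~\ref{fig:1-planar-notation}.

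For the $X_\positive$ case itself, the idea is that in $X_\positive$ the $2$-path from $u_{(1,2)}$ to $u_{(2,2)}$ has been replaced by a $3$-path, which gives one extra subdividing vertex of degree $2$ along that central horizontal connector; this is precisely the extra ``room'' needed to route part of $Y$ so that $v_{(2,1)}$ lands in $f_{(3,2)}$ rather than in $f_{(3,1)}$. Concretely, I would start from the compliant $1$-drawing of $X\cup Y$ depicted on the left of Figure~\ref{fig:1-planar-gadget-positive} in the configuration where $v_{(0,1)}\in f_{(0,1)}$ and $v_{(2,1)}\in f_{(3,1)}$ and $v_{(1,0)}\in f_{(1,0)}$, $v_{(1,2)}\in f_{(1,3)}$ (this is one of the combinations allowed by Lemma~\ref{le:1-planar-gadget-positive}), and then locally reroute the $Y$-paths incident to $v_{(2,1)}$ — the two $7$-paths joining $v_{(2,1)}$ to $v_{(2,0)}$ and to $v_{(2,2)}$, together with the $6$-path joining it to $v'$ — so that $v_{(2,1)}$ migrates from $f_{(3,1)}$ up into $f_{(3,2)}$, crossing the $2$-thick central connector on the right side. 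The extra subdivision vertex of the lengthened central path on the left is what lets the $Y$-edge incident to $v'$ (or to $v_{(1,2)}$) cross cleanly, keeping each edge in at most one crossing; the right side of Figure~\ref{fig:1-planar-gadget-positive} already displays exactly this drawing, so the proof is essentially ``see the figure'' plus a verification that every edge is crossed at most once and that the stated corner and boundary conditions hold.

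The step I expect to be the main obstacle — or at least the only one requiring genuine care rather than bookkeeping — is verifying the $1$-drawing property after the rerouting: I must check that pushing $v_{(2,1)}$ across the central band does not force any edge of $X_\positive$ or of $Y$ to be crossed twice, in particular that the $5$-thick edges bordering $f_{(3,1)}$ and $f_{(3,2)}$ and the central $2$-thick edge $u_{(3,2)}u_{(4,2)}$ each still absorb at most one crossing per edge, and that the paths of $Y$ crossing them are long enough to reach their destinations with only the one crossing. This is where the precise path lengths ($7$-paths, $6$-paths) and the ``$3$-path instead of $2$-path'' distinction are used, and it can only really be confirmed by tracing the curves in Figure~\ref{fig:1-planar-gadget-positive}; accordingly I would keep the written argument short, pointing to the figure and to Lemma~\ref{le:1-planar-gadget-positive}, and simply assert that the displayed drawing is compliant, is a $1$-drawing, and realizes the two bulleted conditions — then invoke the $X_\positive\leftrightarrow X_\negative$ symmetry for the second half.

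\begin{proof}
	See the right side of Figure~\ref{fig:1-planar-gadget-positive} for the claimed $1$-drawing of $X_\positive\cup Y$;
	it is compliant by construction, it is easily checked to be a $1$-drawing (each edge of $X_\positive$ and each path of $Y$ is crossed at most once),
	and it places the corners of $Y$ as required together with $v_{(0,1)}\in f_{(0,1)}$, $v_{(2,1)}\in f_{(3,2)}$, $v_{(1,0)}\in f_{(1,0)}$ and $v_{(1,2)}\in f_{(1,3)}$.
	The extra subdividing vertex on the $3$-path joining $u_{(1,2)}$ to $u_{(2,2)}$ is what allows $v_{(2,1)}$ to be placed in $f_{(3,2)}$ instead of $f_{(3,1)}$ without creating a second crossing on any edge.
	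The drawing of $X_\negative\cup Y$ with the stated properties is obtained from the previous one by the horizontal mirror symmetry $\alpha\mapsto 4-\alpha$ composed with the vertical mirror symmetry $\beta\mapsto 3-\beta$, which carries $X_\positive$ to $X_\negative$ and the four face conditions above to $v_{(0,1)}\in f_{(0,1)}$, $v_{(2,1)}\in f_{(3,2)}$, $v_{(1,0)}\in f_{(2,0)}$ and $v_{(1,2)}\in f_{(2,3)}$.
\end{proof}
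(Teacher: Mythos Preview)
Your proposal is correct and follows essentially the same approach as the paper: point to the right side of Figure~\ref{fig:1-planar-gadget-positive} for the $X_\positive$ case, then obtain the $X_\negative$ case by a mirror symmetry. In fact your symmetry argument is more precise than the paper's: the paper invokes only a ``vertical mirror symmetry'', whereas you correctly observe that the composition of both reflections (equivalently, the $180^\circ$ rotation $u_{(\alpha,\beta)}\mapsto u_{(4-\alpha,4-\beta)}$, $f_{(\alpha,\beta)}\mapsto f_{(3-\alpha,3-\beta)}$) is what carries the stated $X_\positive$ face conditions exactly to the stated $X_\negative$ face conditions while sending $X_\positive$ to $X_\negative$.
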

\begin{proof}
	See the right side of Figure~\ref{fig:1-planar-gadget-positive} for the drawing of $X_\positive\cup Y$.
	For $X_\negative\cup Y$ apply a vertical mirror symmetry.
\end{proof}

\begin{lemma}
\label{le:1-planar-gadget-pattern}
	For any $Z\in \{X, X_\positive,X_\negative\}$,
	any compliant $1$-drawing of $Z\cup Y$ has the following properties:
	\begin{itemize}
		\item[\textup{(a)}] vertex $v_{(0,1)}$ is in the face $f_{(0,1)}$ or $f_{(0,2)}$ of $Z$;
		\item[\textup{(b)}] vertex $v_{(2,1)}$ is in the face $f_{(3,1)}$ or $f_{(3,2)}$ of $Z$;
		\item[\textup{(c)}] vertex $v_{(1,0)}$ is in the face $f_{(1,0)}$ or $f_{(2,0)}$ of $Z$;
		\item[\textup{(d)}] vertex $v_{(1,2)}$ is in the face $f_{(1,3)}$ or $f_{(2,3)}$ of $Z$;
		\item[\textup{(e)}] if $v_{(1,0)}$ is in the face $f_{(1,0)}$ of $Z$,  
			then $v_{(1,2)}$ is in the face $f_{(1,3)}$ of $Z$;
		\item[\textup{(f)}] if $v_{(1,0)}$ is in the face $f_{(2,0)}$ of $Z$,  
			then $v_{(1,2)}$ is in the face $f_{(2,3)}$ of $Z$.
	\end{itemize}
\end{lemma}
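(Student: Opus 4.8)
The plan is to pin down, in every compliant $1$-drawing of $Z\cup Y$, the face of $Z$ occupied by each of $v_{(0,1)},v_{(2,1)},v_{(1,0)},v_{(1,2)}$, using a handful of path-counting arguments against the ``light'' barriers of $Z$. Recall that in a compliant drawing the drawing extends the given embedding of $Z$, and (from the paragraph preceding Lemma~\ref{le:1-planar-gadget-positive}, via Lemma~\ref{le:5-thick}(b) and the fixed corners) every vertex of $Y$ lies in one of the square faces $f_{(\alpha,\beta)}$ of $Z$. The workhorse is a \emph{capacity-$2$ property}: since the drawing is a $1$-drawing, each edge of $Z$ is crossed at most once, so any single barrier of $Z$ — a $t$-thick edge (which consists of $2t$ edges, and any path crossing from one side to the other spends $\ge t$ of them) or a lone $2$-path edge (which has $2$ edges, and a crossing path spends $\ge 1$) — is traversed by at most $2$ pairwise edge-disjoint paths of $Y$. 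Since $Y$ never enters the outer face of $Z$, this yields the tool I will use repeatedly: if $p,q\in V(Y)$ sit in faces $f,f'$ of $Z$ and $Y$ has $c$ pairwise edge-disjoint $p$--$q$ paths, then any cut of $Z$ separating $f$ from $f'$ contains at least $\lceil c/2\rceil$ barriers.

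I would establish (a)--(d) first and simultaneously, since they are interlocked. For (a), $v_{(0,1)}$ is joined by two edge-disjoint paths to each of the pinned corners $v_{(0,0)}\in f_{(0,0)}$ and $v_{(0,2)}\in f_{(0,3)}$, and by further edge-disjoint bundles to $v_{(1,1)}$ and $v$; placing $v_{(0,1)}$ anywhere outside $f_{(0,1)}\cup f_{(0,2)}$ either forces more than $8$ edge-disjoint paths of $Y$ across the column-$0$/column-$1$ cut (four $5$-thick barriers, capacity $2$ each), or else forces more than $8$ paths of $Y$ out across the eight edges bounding a square face incident to $f_{(0,0)}$ or $f_{(0,3)}$. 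Parts (b)--(d) are the three analogous localizations (for (c) and (d) the bottom and top rows play the role of the left column and the row-$0$/row-$3$ strip is the relevant cut), with the bookkeeping linked because confining $v_{(1,0)}$ uses that $v_{(0,1)}$ and $v_{(2,1)}$ are already confined, and conversely.

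The heart of the lemma is (e), and (f) follows by the left--right mirror of the same argument. Suppose $v_{(1,0)}\in f_{(1,0)}$ but, contrary to (e), $v_{(1,2)}\notin f_{(1,3)}$; by (d) then $v_{(1,2)}\in f_{(2,3)}$. Call a square face \emph{west} if its first index is $\le1$ and \emph{east} otherwise. The vertical ``cross line'' of $Z$ separating west from east faces is the union of exactly four barriers — the $2$-thick edge $u_{(2,0)}u_{(2,1)}$, the $2$-paths $u_{(2,1)}u_{(2,2)}$ and $u_{(2,2)}u_{(2,3)}$, and the $2$-thick edge $u_{(2,3)}u_{(2,4)}$ — so by the capacity-$2$ property at most $2\cdot 4=8$ pairwise edge-disjoint paths of $Y$ can cross it. But in this configuration it is crossed by at least $9$: the two $7$-paths between the west corner $v_{(0,2)}$ and the east vertex $v_{(1,2)}$ give $2$; the vertex $v_{(1,1)}$ is joined by two $7$-paths each to $v_{(0,1)},v_{(2,1)},v_{(1,0)},v_{(1,2)}$, of which two are west and two east, so whichever side $v_{(1,1)}$ lies on, four of these eight paths cross the line; and a short check on the possible locations of $v$ and $v'$ — using that $v$ is joined to the west vertex $v_{(0,1)}$ by two paths, $v'$ to the east vertex $v_{(2,1)}$ by two paths, that both are joined to $v_{(1,0)}$ and to $v_{(1,2)}$, and that $v,v'$ are joined only by the single edge $vv'$ — yields at least $3$ more, on edges disjoint from all the previous ones. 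These $2+4+3=9$ paths are pairwise edge-disjoint, contradicting the bound $8$. This proves (e); for (f) one interchanges west and east, so that the bundle $v_{(0,0)}v_{(1,0)}$ now plays the role of $v_{(0,2)}v_{(1,2)}$.

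The main obstacle is the part compressed above: making (a)--(d) airtight. One must check that a path crossing a $5$-thick barrier cannot economize by lingering in its inner faces (allowed for internal degree-$2$ vertices of a path, forbidden for vertices of $Y$ by Lemma~\ref{le:5-thick}(b)), that no path can slip past a degree-$4$ vertex $u_{(\alpha,\beta)}$ without crossing a barrier (every edge at such a vertex is part of some barrier), and, for the ``face-boundary overload'' step, that the eight edges bounding a square face cannot carry more than eight paths of $Y$ leaving it. Once this barrier bookkeeping is set up cleanly, each of (a)--(f) reduces to comparing a demand — a sum of $2$'s, one per edge-disjoint $Y$-bundle forced across a given cut — with a capacity — a sum of $2$'s, one per barrier in that cut.
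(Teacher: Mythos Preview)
Your capacity/flow argument for (e)--(f) is valid and is genuinely different from the paper's proof. The paper instead uses a single path-length bound throughout: since each edge of $Y$ can cross at most one edge of $Z$ in a $1$-drawing, a $t$-path of $Y$ connects faces at crossing-distance at most $t$. For (e) the paper simply observes that $Y$ contains a $12$-path from $v_{(1,0)}$ to $v_{(1,2)}$ (namely $v_{(1,0)}\text{--}v\text{--}v_{(1,2)}$) and asserts that any curve from $f_{(1,0)}$ to $f_{(2,3)}$ must cross at least $13$ edges of $Z$. Your cut argument ($9$ pairwise edge-disjoint west--east paths against capacity $8$ on the column-$2$ line) is more robust, since it does not depend on that distance count being tight; you could also have counted the two $7$-paths $v_{(1,0)}\text{--}v_{(2,0)}$, which already brings the demand to $11$. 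One small correction: your blanket ``capacity $2$'' statement fails for the lone $3$-path in $X_\positive$ or $X_\negative$ (it has capacity $3$), but this does not affect the column-$2$ cut you actually use.

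For (a)--(d), however, you are making the problem much harder than it is, and your proposal itself flags the interdependence as the ``main obstacle''. The paper's proof of (a) is a single sentence with no dependence on (b)--(d): $v_{(0,1)}$ is joined by a $7$-path to $v_{(0,0)}\in f_{(0,0)}$ and by a $7$-path to $v_{(0,2)}\in f_{(0,3)}$, so $v_{(0,1)}$ must lie in a face at crossing-distance $\le 7$ from each of $f_{(0,0)}$ and $f_{(0,3)}$; a direct check shows these are exactly $f_{(0,1)}$ and $f_{(0,2)}$. Items (b)--(d) then follow by symmetry, each independently. There is no need for your column cuts, face-boundary overload counts, or simultaneous bookkeeping---the right invariant for localizing a single vertex is path \emph{length}, not path \emph{multiplicity}. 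Your flow machinery only becomes the natural tool once two unlocalized vertices are being compared, as in (e)--(f).
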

\begin{proof}
	Consider any compliant $1$-drawing of $Z\cup Y$.
	The vertex $v_{(0,1)}$ has a path of length $7$ to $v_{(0,0)}$ and a path of length $7$ to $v_{(0,2)}$.
	This implies that $v_{(0,1)}$ must be in the face $f_{(0,1)}$ or $f_{(0,2)}$ of $Z$
	and proves item (a). Items (b)--(d) follow by rotational symmetry.
	
	To prove item (e), we note that there is a path of length $12$ connecting $v_{(1,0)}$ to $v_{(1,2)}$.
	By item (d), $v_{(1,2)}$ is in the face $f_{(1,3)}$ or $f_{(2,3)}$.
	However, any path from $f_{(1,0)}$ to $f_{(2,3)}$ must cross $13$ edges of $Z$.
	Thus $v_{(1,2)}$ must be in face $f_{(1,3)}$, and item (e) is proved. Item (f) follows from (e)
	by a vertical mirror symmetry. 
\end{proof}

\begin{lemma}	
\label{le:1-planar-gadget-impossible}
	If a compliant $1$-drawing of $X_\negative\cup Y$ has vertex $v_{(0,1)}$ in the face $f_{(0,1)}$ of $X_\negative$
	and $v_{(1,0)}$ is in the face $f_{(1,0)}$ of $X_\negative$,
	then $v_{(2,1)}$ must be in face $f_{(3,1)}$ of $X_\negative$.

	If a compliant $1$-drawing of $X_\positive\cup Y$ has vertex $v_{(0,1)}$ in the face $f_{(0,1)}$ of $X_\positive$
	and $v_{(1,0)}$ is in the face $f_{(2,0)}$ of $X_\positive$,
	then $v_{(2,1)}$ must be in face $f_{(3,1)}$ of $X_\positive$.

	If a compliant $1$-drawing of $X\cup Y$ has vertex $v_{(0,1)}$ in the face $f_{(0,1)}$ of $X$,
	then $v_{(2,1)}$ must be in face $f_{(3,1)}$ of $X$.
\end{lemma}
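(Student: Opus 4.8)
The plan is to assume, for contradiction, that $v_{(2,1)}\in f_{(3,2)}$; by Lemma~\ref{le:1-planar-gadget-pattern}(b) this is the only alternative to the claimed conclusion, so in each of the three cases it suffices to exclude it. I would first invoke Lemma~\ref{le:1-planar-gadget-pattern}(e)--(f): together with the hypothesis on $v_{(1,0)}$ it fixes the faces of \emph{both} $v_{(1,0)}$ and $v_{(1,2)}$, either $f_{(1,0)},f_{(1,3)}$ or $f_{(2,0)},f_{(2,3)}$; in the statement about $X$, which has no hypothesis on $v_{(1,0)}$, these two configurations are the only possibilities and both must be dealt with. Recall that, by Lemma~\ref{le:5-thick}(b) (as already used for the other vertices of $Y$), neither $v$ nor $v'$ can lie in an internal face of a $5$-thick edge, so each of them occupies some face $f_{(\alpha,\beta)}$ of $Z$.

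The engine of the proof is the crossing-distance bound already used in Lemma~\ref{le:1-planar-gadget-pattern}. For faces $g,g'$ of $Z$ let $d(g,g')$ be the least number of edges of $Z$ that a curve from $g$ to $g'$ must cross, a $k$-thick edge counting with multiplicity $k$ and a subdivided single path with multiplicity $1$; this is a finite computation on the unique embedding of $Z$. In a $1$-drawing, a path of $Y$ with $\ell$ edges whose endpoints are placed in $g$ and $g'$ crosses at least $d(g,g')$ edges of $Z$ and at most $\ell$ edges altogether, so $d(g,g')\le \ell$, with equality forcing the path to cross no edge of $Y$ and to follow a shortest $g$--$g'$ route. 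Applying this with $\ell=6$ to the $6$-paths issuing from $v$ (towards $v_{(0,1)}$, $v_{(1,0)}$, $v_{(1,2)}$) and symmetrically from $v'$, and with $\ell=1$ to the edge $vv'$, I would localise $v$ and $v'$: in the configuration $v_{(1,0)}\in f_{(1,0)}$ one gets $v\in\{f_{(1,1)},f_{(1,2)}\}$ and $v'=f_{(1,2)}$, and in the configuration $v_{(1,0)}\in f_{(2,0)}$ one gets $v=f_{(2,1)}$ and $v'\in\{f_{(2,1)},f_{(2,2)}\}$.

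In each resulting subcase I would obtain a contradiction by a congestion argument on a single short segment $s$ of the separating structure that runs along $u_{(0,2)}u_{(1,2)}u_{(2,2)}u_{(3,2)}u_{(4,2)}$ and cuts $Z$ into its top and bottom halves: take $s=u_{(1,2)}u_{(2,2)}$ when $v$ lies in $f_{(1,1)}$ or $f_{(1,2)}$, and $s=u_{(2,2)}u_{(3,2)}$ when $v$ lies in $f_{(2,1)}$ or $f_{(2,2)}$. With the positions pinned down, the faces of $v$, $v'$, $v_{(0,1)}$, $v_{(1,0)}$, $v_{(1,2)}$, $v_{(2,1)}$ are at crossing-distance \emph{exactly} equal to the lengths of the $6$-paths (and of $vv'$) that link them through that part of $Z$, so those paths have no slack and each follows its unique shortest route, which passes through $s$. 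Inspecting these routes shows that at least three of them — two tight $6$-paths together with the edge $vv'$, or, when $v$ and $v'$ fall on the same side of the separator, two such pairs — must each cross a distinct edge of $s$; but in each of the three cases the hypotheses have been arranged so that $s$ is a subdivided path with only two edges, and every edge is crossed at most once, a contradiction. (When $v$ and $v'$ end up on the same side there is also a shorter alternative: one exhibits ten pairwise edge-disjoint paths in $Y$ from the vertices forced below the separator to those forced above it, exceeding the maximum number — eight for $X$, nine for $X_\positive$ and $X_\negative$ — of such curves the separator can carry.)

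This is also the reason the three statements need different hypotheses. Making $u_{(1,2)}u_{(2,2)}$ (in $X_\positive$) or $u_{(2,2)}u_{(3,2)}$ (in $X_\negative$) one edge longer raises the capacity of that segment from two to three, which is precisely the extra room used by the flip drawing of Lemma~\ref{le:1-planar-gadget-positive2}; the assumption on $v_{(1,0)}$ is what steers the overloaded bundle of $6$-paths onto the \emph{other} short segment, which remains a two-edge path, so the contradiction survives. Accordingly, the hard part is not a new idea but the bookkeeping: pinning down the finite crossing-distance table of $Z$, checking that the relevant distances are attained (so the offending $6$-paths really are forced), and running through the small number of subcases for $X$, $X_\positive$, $X_\negative$ together with the two configurations permitted by Lemma~\ref{le:1-planar-gadget-pattern}(e)--(f).
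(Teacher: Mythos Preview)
Your approach is essentially the paper's: localise $v$ and $v'$ via the $6$-path distance bounds and then overload the $2$-edge segment $s=u_{(1,2)}u_{(2,2)}$ (resp.\ $u_{(2,2)}u_{(3,2)}$) with three forced crossings. The paper organises the case split slightly differently---it branches first on the face of $v'$ and disposes of $v'\in f_{(1,1)}$ directly by the distance-$7$ bound to $f_{(3,2)}$, which is exactly your use of the contradiction hypothesis to pin $v'$ in $f_{(1,2)}$---but the congestion argument on $s$ is identical. One caution: your parenthetical ``global'' alternative has the arithmetic wrong---the separator $u_{(0,2)}u_{(1,2)}u_{(2,2)}u_{(3,2)}u_{(4,2)}$ carries $12$ edges in $X$ (each $2$-thick piece contributes four), not eight, and the number of edge-disjoint below--above paths in $Y$ in the relevant subcase is eight, not ten, so that shortcut does not close; stick with the local argument on $s$.
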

\begin{proof}
	We first consider the case for $X_\negative\cup Y$.	
	Consider any compliant $1$-drawing of $X_\negative\cup Y$ with the properties stated.
	Because of Lemma~\ref{le:1-planar-gadget-pattern}(e),
	vertex $v_{(1,2)}$ must be in the face $f_{(1,3)}$.
	The vertices $v$ and $v'$ have paths of length $6$ connecting to $v_{(1,0)}$ and $v_{(1,2)}$.
	This implies that $v$ and $v'$ can only be in faces $f_{(1,1)}$ or $f_{(1,2)}$, as any
	other face $f_{(\alpha,\beta)}$ requires at least $7$ crossings to connect to $v_{(1,0)}$ or $v_{(1,2)}$.
	We distinguish three cases:
	\begin{itemize}
	\item If the vertex $v'$ is in the face $f_{(1,1)}$ (Figure~\ref{fig:1-planar-gadget-impossible} left), 
		any path connecting $v'$ to the face $f_{(3,2)}$
		requires $7$ crossings. Since there is a path of length $6$ connecting $v'$ and $v_{(2,1)}$,
		the vertex $v_{(2,1)}$ cannot be in face $f_{(3,2)}$.
		Thus by Lemma~\ref{le:1-planar-gadget-pattern}(b) $v_{(2,1)}$ must be in the face $f_{(3,1)}$.
	\item If the vertex $v'$ is in the face $f_{(1,2)}$ and $v$ is in the face $f_{(1,1)}$,
		we argue as follows; see Figure~\ref{fig:1-planar-gadget-impossible}, right. 
		The $6$-path connecting $v$ to $v_{(1,2)}$, the $6$-path connecting $v'$ to $v_{(1,0)}$,
		and the edge $vv'$ must all cross the $2$-path connecting $u_{(1,2)}$ to $u_{(2,2)}$. Thus, no such
		compliant $1$-drawing with $v$ in $f_{(1,1)}$ and $v'$ in $f_{(1,2)}$ is possible.
	\item If the vertices $v'$ and $v$ are both in the face $f_{(1,2)}$, we note that
		the two $6$-paths connecting $v$ to $v_{(0,1)}$ and the $6$-path connecting $v$ to $v_{(1,0)}$
		must cross the $2$-path connecting $u_{(1,2)}$ to $u_{(2,2)}$. Thus, no such
		compliant $1$-drawing with $v'$ and $v$ in $f_{(1,2)}$ is possible.
	\end{itemize}
	This finishes the proof for $X_\negative\cup Y$.
	The claim for $X_\positive\cup Y$ follows from the claim for $X_\negative\cup Y$ by a vertical and horizontal mirror symmetry.
	For the case of $X\cup Y$, we note that $v_{(1,0)}$ must be in the face $f_{(1,0)}$ or $f_{(2,0)}$ by
	Lemma~\ref{le:1-planar-gadget-pattern}(c).
    If $v_{(1,0)}$ is in the face $f_{(1,0)}$ the claim follows from the case $X_\negative\cup Y$,
	and if $v_{(1,0)}$ is in the face $f_{(1,0)}$ then it follows from the case $X_\negative\cup Y$.
\end{proof}

\begin{figure}
	\centering
	\includegraphics[width=.8\textwidth]{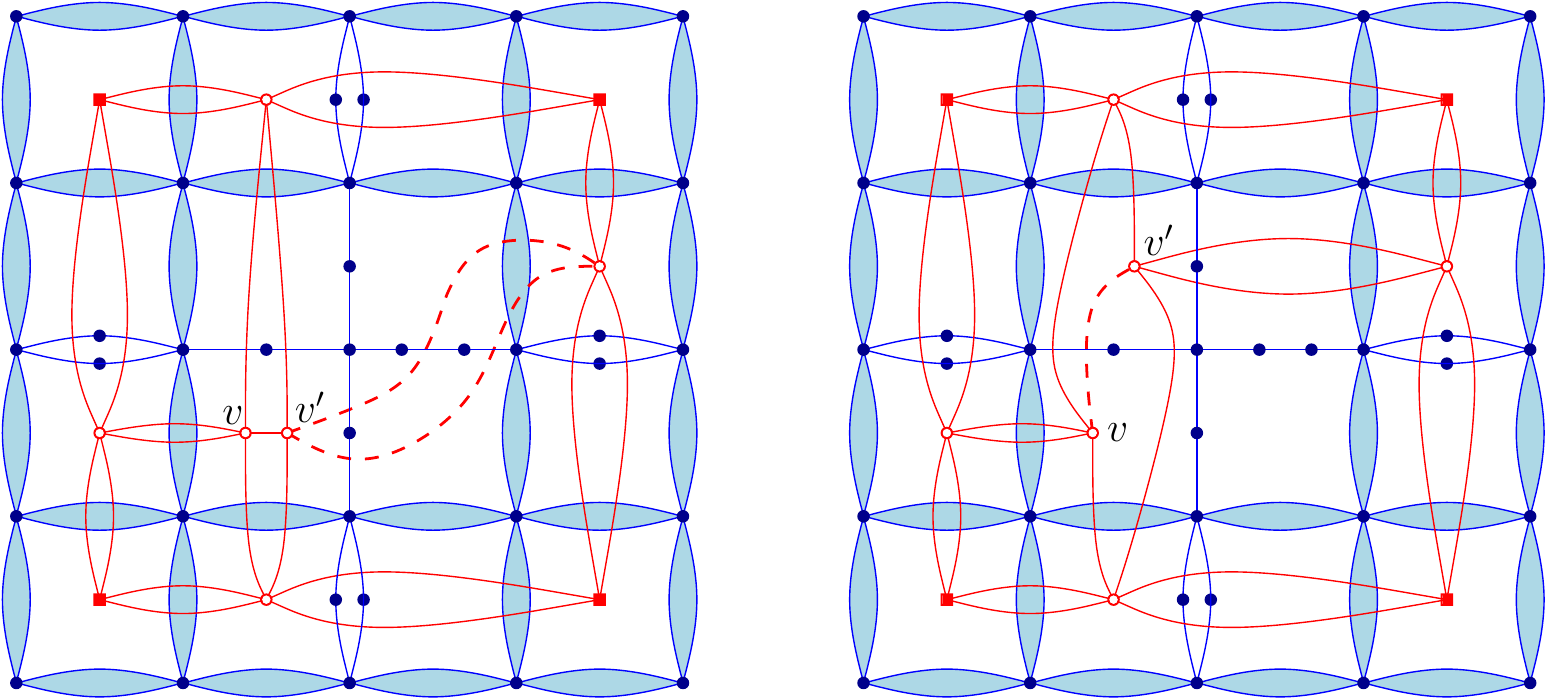}
	\caption{Analysis of the gadget $X_\negative\cup Y$ treated in Lemma~\ref{le:1-planar-gadget-impossible}.}
	\label{fig:1-planar-gadget-impossible}
\end{figure}

\subsection{1-planarity of anchored graphs}
\label{sec:anchored-1-planar}
In this section we prove the following result.

\begin{theorem}
\label{thm:1-planar-hard1} 
	Deciding if a given anchored graph is an anchored $1$-planar graph is NP-hard, 
	even if the input graph is decomposed
	into two vertex-disjoint planar anchored subgraphs
	(and the decomposition is part of the input).
\end{theorem}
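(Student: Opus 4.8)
The plan is to mimic the structure of the crossing-number reduction in Section~\ref{sec:hard1}, but replace the weighted grid graphs by the unweighted gadgets $X, X_\positive, X_\negative$ and $Y$ studied in the previous subsection, so that the global constraints formerly enforced by heavy weights are now enforced by the rigidity of $5$-thick edges in any compliant $1$-drawing. First I would, for each variable $x_i$ of the {\sc SAT}-instance $I$, take one copy of the gadget graph $X$ (or its variant $X_\positive$/$X_\negative$, chosen according to how $x_i$ occurs in the clauses) together with its paired graph $Y$; I will designate the ``red'' graph $R(I)$ to be the disjoint union of the $Y$-copies (plus connector paths between consecutive variable gadgets and between the clause machinery) and the ``blue'' graph $B(I)$ to be the disjoint union of the embedded $X$-copies, arranged in a horizontal strip exactly as in Figure~\ref{fig:1-planar-example}. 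The anchor set $A_B\cup A_R$ and its cyclic order along $\partial\Omega$ are chosen so that, in any anchored $1$-drawing, each $X$-copy is forced to lie in a fixed annular region with a fixed orientation, the corners of each $Y$-copy are pinned to the four faces $f_{(0,0)}, f_{(0,3)}, f_{(3,3)}, f_{(3,0)}$ of the corresponding $X$-copy (i.e.\ the drawing restricted to a variable gadget is compliant in the sense defined above), and the clause-connector paths of $R$ are forced to thread horizontally through the strip at prescribed heights.

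The heart of the argument is then a two-directional equivalence, proved just as Lemmas~\ref{le:easydirection} and~\ref{le:harddirection} were. For the ``satisfiable $\Rightarrow$ $1$-planar'' direction I would, given a satisfying assignment $b\colon \{x_i\}\to\{T,F\}$, build an explicit compliant anchored $1$-drawing: for each variable $x_i$, use Lemma~\ref{le:1-planar-gadget-positive} to place $v_{(1,0)},v_{(1,2)}$ in faces $f_{(1,0)},f_{(1,3)}$ if $b_i=T$ and in $f_{(2,0)},f_{(2,3)}$ if $b_i=F$ (this is the ``column'' choice), and place $v_{(0,1)},v_{(2,1)}$ in $f_{(0,1)},f_{(3,1)}$ or in $f_{(0,2)},f_{(3,2)}$ according to whether the clause being routed at that level is satisfied by $x_i$; the connector paths between gadgets are routed through the outer faces without crossings, and the fact that a satisfied clause always admits a consistent left-to-right routing is exactly the content built into Lemma~\ref{le:1-planar-gadget-positive2}. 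For the converse, suppose a compliant anchored $1$-drawing exists. By Lemma~\ref{le:5-thick} all $5$-thick edges are embedded and no $Y$-vertex sits inside an inner face, so each variable gadget is in one of the states classified by Lemma~\ref{le:1-planar-gadget-pattern}; parts (e)--(f) of that lemma force the ``top'' and ``bottom'' face-choices of each $Y$ to agree, which is precisely what makes the variable's truth value well defined. Reading off $b_i=T$ if $v_{(1,0)}\in f_{(1,0)}$ and $b_i=F$ otherwise, I would argue clause by clause: the horizontal connector path for clause $C_j$ must pass from the ``$f_{(\cdot,1)}$-level'' to the ``$f_{(\cdot,2)}$-level'' somewhere, and by Lemma~\ref{le:1-planar-gadget-impossible} this switch is impossible at a variable gadget whose state does not satisfy $C_j$ (the relevant gadget being $X_\positive$, $X_\negative$, or $X$ depending on the sign of the literal), so some literal of $C_j$ is true. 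Finally, reducing to the unweighted anchored setting with a vertex-disjoint planar decomposition is immediate, since $X, X_\positive, X_\negative, Y$ are already unweighted planar graphs and $B=\bigsqcup X$-copies, $R=\bigsqcup Y$-copies are vertex-disjoint and planar by construction.

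The main obstacle, and where the real work lies, is the careful bookkeeping that glues together the individual gadget lemmas into a global ``$1$-drawing iff satisfiable'' statement: one must verify that the connector paths linking consecutive variable gadgets and linking them to the clause rows have enough length (and the anchors are placed so) that they cannot escape into unintended faces or into inner faces of $5$-thick edges, that routing choices forced at one gadget do not over-constrain a neighbouring gadget, and that the per-clause ``level switch'' can be localized to a single variable gadget rather than smeared across several. I expect this to be handled by a counting argument in the spirit of Lemma~\ref{le:1-planar-gadget-pattern}: each long connector path of length $\ell$ can cross at most $\ell$ edges of the embedded blue graph $B$, so its endpoints are confined to faces at bounded combinatorial distance, and choosing the path lengths large enough (but still polynomial) pins every relevant vertex to the intended face. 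Once this rigidity is established, the equivalence with {\sc SAT} follows exactly as sketched, and Theorem~\ref{thm:1-planar-hard1} is proved.
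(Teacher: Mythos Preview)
Your high-level strategy matches the paper's, and you correctly identify the roles of Lemmas~\ref{le:1-planar-gadget-positive}--\ref{le:1-planar-gadget-impossible} in the two directions of the equivalence. However, the construction you describe has a genuine structural gap.

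You propose one $(X,Y)$-gadget pair \emph{per variable}, arranged in a horizontal strip. This cannot work. A single $X$-gadget has only one horizontal $v_{(0,1)}$--$v_{(2,1)}$ channel, so it can test the interaction of its variable with at most one clause; and the choice among $X$, $X_\positive$, $X_\negative$ depends on how $x_i$ occurs in a \emph{specific} clause $C_j$---since a variable may appear positively in one clause and negatively in another, no single choice of variant for $x_i$ is possible. The paper instead builds an $n\times m$ \emph{grid} of gadgets: for every pair $(i,j)$ there is a blue gadget $B^{(i,j)}$ (a copy of $X_\positive$, $X_\negative$, or $X$ according to whether $x_i$, $\neg x_i$, or neither appears in $C_j$) and a red gadget $R^{(i,j)}$ (a copy of $Y$), and adjacent gadgets are \emph{glued} along their shared boundary rows/columns rather than taken as a disjoint union with extra connector paths. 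Row $j$ of this grid carries clause $C_j$: the horizontal level switch from $f_{(\cdot,1)}$ to $f_{(\cdot,2)}$ must happen at some column $i$, and Lemma~\ref{le:1-planar-gadget-impossible} says this is only possible when $B^{(i,j)}$ is of the type matching the current truth value of $x_i$. Column $i$ carries the truth value of $x_i$ vertically through all $m$ rows by Lemma~\ref{le:1-planar-gadget-pattern}(e)--(f). The anchors on $\partial\Omega$ pin the corners of the outermost gadgets and fix the starting and ending heights of each clause row (via the extra anchors $c'_j,d'_j$). Once you replace your one-gadget-per-variable strip with this $n\times m$ grid and glue rather than juxtapose, the remainder of your argument---both directions, and the appeal to Lemma~\ref{le:5-thick} to force compliance---goes through essentially as you wrote it, and is indeed the paper's proof.
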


We use a reduction from SAT and a grid-like construction similar to the one used for anchored crossing numbers in Section~\ref{sec:hard1}.
Like before, we consider an instance $I$ for SAT with variables $x_1,\dots, x_n$ and clauses $C_1,\dots,C_m$,
and construct two graphs $B=B(I)$ and $R=R(I)$.

The blue graph $B=B(I)$ is constructed as follows (see Figure~\ref{fig:1-planar-example}):
\begin{itemize}
\item[(i)] For each variable $x_i$ and each clause $C_j$, 
	we make an embedded graph $B^{(i,j)}$ such that 
	$B^{(i,j)}$ is a copy of $X_\positive$ if the literal $x_i$ appears in $C_j$, 
	a copy of $X_\negative$ if the literal $\neg x_i$ appears in $C_j$,
	and a copy of $X$ otherwise. 
	We use $u^{(i,j)}_{(\alpha,\beta)}$ and $f^{(i,j)}_{(\alpha,\beta)}$ for the vertex $u_{(\alpha,\beta)}$ 
	and face $f_{(\alpha,\beta)}$ of $B^{(i,j)}$, respectively.
\item[(ii)] We identify parts of the graphs $B^{(i,j)}$ as follows. 
	For each clause $C_j$, $j< m$, and each $\alpha\in [3]$, we identify
	the faces and $5$-thick edges of $f^{(i,j)}_{(\alpha,3)}$ and of $f^{(i,j+1)}_{(\alpha,0)}$ ($i=1,\dots,n$).
	For each variable $x_i$, $i< n$, and each $\beta\in [3]$, we identify
	the faces and $5$-thick edges of $f^{(i,j)}_{(3,\beta)}$ and of $f^{(i+1,j)}_{(0,\beta)}$ ($j=1,\dots,m$).
\item[(iii)] We disregard the embedding of the graph, and consider it as an abstract graph.
\item[(iv)] The anchors of $B$ are the vertices $u^{(1,j)}_{(0,\beta)}$,
	$u^{(n,j)}_{(4,\beta)}$, $u^{(i,1)}_{(\alpha,0)}$, and $u^{(i,m)}_{(\alpha,4)}$, where
	$\alpha,\beta\in [4]$, $i=1,\dots,n$ and $j=1,\dots, m$.
\end{itemize}

For each variable $x_i$, we define two \DEF{columns}.
The column $C_i^T$ is formed by the faces $f^{(i,j)}_{(1,\beta)}$, where $j=1,\dots ,m$ and $\beta\in [2]$.
The column $C_i^F$ is formed by the faces $f^{(i,j)}_{(2,\beta)}$, where $j=1,\dots ,m$ and $\beta\in [2]$.
In Figure~\ref{fig:1-planar-example}, the columns are annotated on the top.

The red graph $R=R(I)$ is constructed as follows (see Figure~\ref{fig:1-planar-example}):
\begin{itemize}
\item[(i)] For each variable $x_i$ and each clause $C_j$,
	we make a graph $R^{(i,j)}$ such that 
	$R^{(i,j)}$ is a copy of $Y$. 
	We use $v^{(i,j)}_{(\alpha,\beta)}$ for the vertex $v_{(\alpha,\beta)}$ of $R^{(i,j)}$.
\item[(ii)] We identify parts of the graphs $R^{(i,j)}$ as follows. 
	For each clause $C_j$, $j< m$, and each $\alpha\in [2]$, we identify
	the vertex $v^{(i,j)}_{(\alpha,2)}$ with $v^{(i,j+1)}_{(\alpha,0)}$ ($i=1,\dots,n$).
	For each variable $x_i$, $i< n$, and each $\beta\in [2]$, we identify
	the vertex $v^{(i,j)}_{(3,\beta)}$ and of $v^{(i+1,j)}_{(0,\beta)}$ ($j=1,\dots,m$).
	We also identify, pairwise, the two $7$-paths that connect identified vertices.
\item[(iii)] For each $i\in [n]$ we create two vertices, $a_i$ and $b_i$. 
	For each variable $x_i$, vertex $a_i$ is connected to $v^{(i,1)}_{(2,0)}$ with a $5$-path
	and vertex $b_i$ is connected to $v^{(i,m)}_{(2,2)}$ with a $5$-path.
	Vertex $a_0$ is connected to $v^{(1,1)}_{(0,0)}$ with a $5$-path
	and vertex $b_0$ is connected to $v^{(1,m)}_{(0,2)}$ with a $5$-path.
\item[(iv)] For each $j=1,\dots, m$ we create four vertices, $c_j,c'_j,d_j,d'_j$. 
	For each clause $C_j$, vertex $c_j$ is connected to $v^{(1,j)}_{(0,2)}$ with a $5$-path,
	vertex $c'_j$ is connected to $v^{(1,j)}_{(0,1)}$ with a $5$-path,
	vertex $d_j$ is connected to $v^{(n,j)}_{(2,2)}$ with a $5$-path,
	and vertex $d'_j$ is connected to $v^{(n,j)}_{(2,1)}$ with a $5$-path.
	We also create two vertices $c_0$ and $d_0$.
	Vertex $c_0$ is connected to $v^{(1,1)}_{(0,0)}$ with a $5$-path
	and vertex $d_0$ is connected to $v^{(n,1)}_{(2,0)}$ with a $5$-path.	
	The vertices created in this step are the anchors of $R$.
\end{itemize}

Let $G = G(I)$ be the anchored graph obtained by taking the union of the red graph $R$ and the blue
graph $B$. The clockwise circular ordering of the anchors along the boundary of the disk is defined
by the following properties:
\begin{itemize}
	\item For each variable $x_i$, we have the subsequences of anchors
		$a_{i}, u^{(i,1)}_{(3,0)}u^{(i,1)}_{(2,0)},u^{(i,1)}_{(1,0)} , a_{i-1}$
		and $b_{i-1}, u^{(i,m)}_{(1,4)},u^{(i,m)}_{(2,4)},u^{(i,m)}_{(3,4)} , b_{i}$.
	\item For each clause $C_j$, we have the subsequences of anchors
		$c_{j-1}, u^{(1,j)}_{(0,1)},c'_{j},u^{(1,j)}_{(0,2)},u^{(1,j)}_{(0,3)}, c_{j}$
		and $d_{j}, u^{(n,j)}_{(4,3)},d'_j, u^{(n,j)}_{(4,2)},u^{(n,j)}_{(4,1)}, d_{j-1}$.
	\item We have the subsequence $a_0,u^{(1,1)}_{(0,0)},c_0$, the subsequence $c_m,u^{(1,m)}_{(0,4)},b_0$, 
		the subsequence $b_n,u^{(n,m)}_{(4,4)},d_m$, and the subsequence $d_0,u^{(n,1)}_{(4,0)},a_n$.
\end{itemize}

This concludes the description of the graph $G(I)$.

\begin{lemma}
\label{le:1-planar-easy}
	If the instance $I$ is satisfiable, then $G(I)$ has an anchored $1$-drawing.
\end{lemma}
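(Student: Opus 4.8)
The plan is to imitate the easy direction of the crossing-number reduction (Lemma~\ref{le:easydirection}): fix a satisfying assignment, embed the blue graph $B$, and then draw each copy of $Y$ inside the blue grid using the gadget drawings of Lemmas~\ref{le:1-planar-gadget-positive} and~\ref{le:1-planar-gadget-positive2}, routing the ``vertical pair'' of each variable through the column matching its value and letting the ``horizontal path'' of each clause make a single level-change inside the block of a literal that satisfies the clause (compare Figure~\ref{fig:1-planar-example}). Concretely, I would first fix a satisfying assignment, write $b_i\in\{T,F\}$ for the value of $x_i$, and for every clause $C_j$ pick a true literal $\ell_j$, with variable $x_{t(j)}$; note that $\ell_j=x_{t(j)}$ forces $b_{t(j)}=T$ and $\ell_j=\neg x_{t(j)}$ forces $b_{t(j)}=F$.

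I would then take the planar embedding of $B$ obtained by gluing the embedded pieces $B^{(i,j)}$ along the prescribed identifications of faces and $5$-thick edges, realized in a disk $\Omega$ so that the blue anchors appear on $\partial\Omega$ in the prescribed cyclic order, with a free boundary arc reserved beside every red-anchor slot. Inside this embedding, for each block $(i,j)$ I would insert a compliant $1$-drawing of $B^{(i,j)}\cup R^{(i,j)}$ whose \emph{vertical} configuration is ``column $C_i^{b_i}$'' (that is, $v^{(i,j)}_{(1,0)}\in f^{(i,j)}_{(1,0)}$ and $v^{(i,j)}_{(1,2)}\in f^{(i,j)}_{(1,3)}$ when $b_i=T$, and $v^{(i,j)}_{(1,0)}\in f^{(i,j)}_{(2,0)}$, $v^{(i,j)}_{(1,2)}\in f^{(i,j)}_{(2,3)}$ when $b_i=F$) and whose \emph{horizontal} configuration is level~$1$ ($v_{(0,1)}\in f_{(0,1)}$, $v_{(2,1)}\in f_{(3,1)}$) for $i<t(j)$, level~$2$ ($v_{(0,1)}\in f_{(0,2)}$, $v_{(2,1)}\in f_{(3,2)}$) for $i>t(j)$, and ``enters at level~$1$, leaves at level~$2$'' at the switch block $i=t(j)$. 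For every block with $i\ne t(j)$ such a drawing exists for any $Z\in\{X,X_\positive,X_\negative\}$ by Lemma~\ref{le:1-planar-gadget-positive}. For the switch block we have $B^{(t(j),j)}=X_\positive$ if $\ell_j=x_{t(j)}$ and $B^{(t(j),j)}=X_\negative$ if $\ell_j=\neg x_{t(j)}$, and Lemma~\ref{le:1-planar-gadget-positive2} supplies the level-changing drawing; one checks that the vertical configuration it forces—level~$1$ (column $T$) for $X_\positive$, level~$2$ (column $F$) for $X_\negative$—is exactly the one dictated by $b_{t(j)}$, so the choices are compatible.

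Next I would check that these per-block drawings glue. The corners of $R^{(i,j)}$ and the vertices $v_{(0,1)},v_{(2,1)},v_{(1,0)},v_{(1,2)}$ are all placed in faces of the form $f_{(\cdot,0)},f_{(\cdot,3)},f_{(0,\cdot)},f_{(3,\cdot)}$, which are precisely the faces identified with neighbouring blocks; since the horizontal level is constant along each row except at the single switch block, and the column $C_i^{b_i}$ is constant down each column, adjacent block drawings assign the same face to every shared vertex. One then verifies that a blue edge lying on a block boundary is crossed in at most one of its two adjacent block drawings, so gluing yields a valid $1$-drawing of $B\cup\bigcup_{i,j}R^{(i,j)}$. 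Finally I would attach the $5$-paths joining the anchors of $R$ to their prescribed endpoints; by Lemma~\ref{le:1-planar-gadget-pattern} together with our choices, each such endpoint lies in a blue face on the outer boundary of the grid, incident to the segment of $\partial\Omega$ carrying its anchor (the clause-anchors reaching the appropriate left/right boundary faces, the $a_i,b_i$ reaching the corner faces, and so on), so each $5$-path can be routed across only a bounded number of blue thick edges, each at most once and none of which is already crossed inside a block drawing. This produces an anchored $1$-drawing of $G(I)$.

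The main obstacle is the global bookkeeping: verifying that the local drawings of Lemmas~\ref{le:1-planar-gadget-positive} and~\ref{le:1-planar-gadget-positive2} can genuinely be aligned face by face along every interface of the blue grid without forcing any blue boundary edge—or any red edge of an attaching $5$-path—to be crossed twice, and that the prescribed cyclic order of $A_B\cup A_R$ on $\partial\Omega$ is respected everywhere. The remainder is a routine unwinding of the gadget lemmas and parallels Lemma~\ref{le:easydirection}.
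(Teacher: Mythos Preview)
Your proposal is correct and follows essentially the same approach as the paper's own proof: fix a satisfying assignment, embed $B$ planarly, and for each block $(i,j)$ insert the compliant $1$-drawing from Lemma~\ref{le:1-planar-gadget-positive} when $i\neq t(j)$ and from Lemma~\ref{le:1-planar-gadget-positive2} when $i=t(j)$, with the vertical configuration determined by the column $C_i^{b_i}$ and the horizontal level determined by whether $i<t(j)$ or $i>t(j)$. The paper's proof is terser about the gluing step (it simply asserts that shared vertices and paths receive the same drawing in adjacent blocks), whereas you spell out the interface-matching and the anchor-path routing more explicitly; but the underlying argument and case split are identical.
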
 
\begin{proof}
	We draw the graph $B$ without crossings. The corresponding embedding is unique,
	up to permutations of parallel $2$-paths.
	Thus, the embedding of $B$ corresponds to the one used during its construction,
	and we can talk about the subgraphs $B^{(i,j)}$ and their faces $f^{(i,j)}_{(\alpha,\beta)}$.
	
	Let $q_i\in \{T,F \}$ be an assignment for each variable $x_i$ that satisfies all clauses.  
	For each clause $C_j$, let $x_{t(j)}$ be the first variable
	whose value $q_{t(j)}$ makes the clause $C_j$ true. 
	For each variable $x_i$ and each clause $C_j$, 
	the graph $R^{(i,j)}$ is drawn according to the following cases:
	\begin{description}
	\item[Case $i=t(j)$.] We use the compliant $1$-drawing of $B^{(i,j)}\cup R^{(i,j)}$ 
		given in Lemma~\ref{le:1-planar-gadget-positive2}.
	\item[Case $i<t(j)$.] We use a drawing of $B^{(i,j)}\cup R^{(i,j)}$ 
		given by Lemma~\ref{le:1-planar-gadget-positive}
		with $v^{(i,j)}_{(0,1)}$ in $f^{(i,j)}_{(0,1)}$,
		$v^{(i,j)}_{(2,1)}$ in $f^{(i,j)}_{(3,1)}$, and with both
		$v^{(i,j)}_{(1,0)}$ and $v^{(i,j)}_{(1,2)}$ in $C_i^{q_i}$.
	\item[Case $i>t(j)$.] We use a drawing of $B^{(i,j)}\cup R^{(i,j)}$ 
		given by Lemma~\ref{le:1-planar-gadget-positive}
		with $v^{(i,j)}_{(0,1)}$ in $f^{(i,j)}_{(0,2)}$,
		$v^{(i,j)}_{(2,1)}$ in $f^{(i,j)}_{(3,2)}$, and with both
		$v^{(i,j)}_{(1,0)}$ and $v^{(i,j)}_{(1,2)}$ in $C_i^{q_i}$.
	\end{description}	
	It is easy to check that, whenever a vertex or a path
	appears in more than one subgraph $R^{(i,j)}$, 
	the drawing we have described in both subgraphs is the same. 
	Therefore, we have described a drawing for $R$ minus its set of anchors.
	The drawing of the $5$-paths incident to the anchors is straightforward.
\end{proof}

\begin{lemma}
\label{le:1-planar-hard}
	If $G=G(I)$ is an anchored $1$-planar graph, then $I$ is satisfiable.
\end{lemma}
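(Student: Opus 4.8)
The plan is to show that any anchored $1$-drawing of $G=G(I)$ forces a globally consistent truth assignment, and that this assignment must satisfy every clause. First I would invoke Lemma~\ref{le:5-thick}: in a minimum-crossing anchored $1$-drawing, the blue subgraph $B$ (which consists of $5$-thick edges, together with the $2$-thick edges and short $2$-paths used to bind them) is embedded, and since $B$ is essentially $3$-connected its embedding is the one used in the construction. Hence I may speak of the faces $f^{(i,j)}_{(\alpha,\beta)}$ and of the columns $C_i^T, C_i^F$. Next I would argue that the drawing of $B$ is ``compliant'' block-by-block: the anchors $a_i,b_i,c_j,c'_j,d_j,d'_j$ are pinned on the boundary of the disk in the prescribed cyclic order, and each is joined by a $5$-path to one of the corner-type vertices $v^{(i,j)}_{(0,0)}, v^{(i,j)}_{(0,2)}, v^{(i,j)}_{(2,2)}, v^{(i,j)}_{(2,0)}$ of the corresponding block. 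Combining the boundary positions with Lemma~\ref{le:5-thick}(b) (and the length of these $5$-paths) shows that each corner vertex of each $R^{(i,j)}$ lies in the correct corner face $f^{(i,j)}_{(0,0)}$, etc., so that the restriction of the drawing to $B^{(i,j)}\cup R^{(i,j)}$ is a compliant $1$-drawing in the sense of the gadget section. This is the step that converts a global drawing into a conjunction of local gadget drawings, and it is where the precise cyclic ordering of anchors is used.

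With compliance in hand, the gadget lemmas do the combinatorial work. For each block $(i,j)$, Lemma~\ref{le:1-planar-gadget-pattern}(e)--(f) says that $v^{(i,j)}_{(1,0)}$ and $v^{(i,j)}_{(1,2)}$ lie in the \emph{same} column, $C_i^T$ or $C_i^F$; since these vertices are identified across clauses (the $v^{(i,j)}_{(\alpha,2)}=v^{(i,j+1)}_{(\alpha,0)}$ gluings), the choice of column is the same for all $j$, and it is consistent with the column faces because columns were built precisely from the faces $f^{(i,j)}_{(1,\beta)}$ and $f^{(i,j)}_{(2,\beta)}$. Thus each variable $x_i$ gets a well-defined value $q_i\in\{T,F\}$: set $q_i=T$ iff $V_i$ (the vertical pair) lies in $C_i^T$. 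Similarly, parts (a)--(d) of Lemma~\ref{le:1-planar-gadget-pattern} say that each ``side'' vertex $v^{(i,j)}_{(0,1)}, v^{(i,j)}_{(2,1)}$ lies in one of two vertically adjacent faces, and the horizontal gluings $v^{(i,j)}_{(3,\beta)}=v^{(i+1,j)}_{(0,\beta)}$ mean the position of $v^{(i,j)}_{(2,1)}$ in the $(i,j)$ block equals the position of $v^{(i+1,j)}_{(0,1)}$ in the $(i{+}1,j)$ block. So along clause row $j$ we get a left-to-right sequence of ``states'' (upper row $f_{(\cdot,2)}$ or lower row $f_{(\cdot,1)}$), with the leftmost state pinned by the anchor $c'_j$ and the rightmost by $d'_j$.

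The satisfaction argument is then a parity/monotonicity count along each clause row. The anchors force $v^{(1,j)}_{(0,1)}$ into $f^{(1,j)}_{(0,1)}$ (the lower position) and $v^{(n,j)}_{(4,1)}$ into $f^{(n,j)}_{(4,1)}$ (again the lower position) — this is the exact analogue of the horizontal path $H_j$ in Section~\ref{sec:hard1} starting in $L_j$ and ending in $U_j$... wait, here both ends are forced \emph{low}. The point is that Lemma~\ref{le:1-planar-gadget-impossible} says a block in state $X$, or a block in state $X_\negative$ entered ``low on the bottom'' or a block $X_\positive$ entered ``high on the bottom'', cannot flip the side state from low to high. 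Only a block whose type matches the literal \emph{and} whose column choice matches the satisfying value can produce the flip (that is exactly the extra drawing afforded by Lemma~\ref{le:1-planar-gadget-positive2} versus Lemma~\ref{le:1-planar-gadget-positive}). Since the side-state enters block $(1,j)$ low and must leave block $(n,j)$ low, but the gluings also involve the top/bottom boundary anchors $c_j,c_{j-1},d_j,d_{j-1}$ forcing an actual change somewhere — here I would carefully track that the vertex $v^{(1,j)}_{(0,1)}$ is pinned low while $v^{(1,j)}_{(0,2)}$ is pinned to the clause-boundary face, so the column of states cannot be constant — there must be at least one block $(t,j)$ that performs the forbidden-for-generic-blocks flip, which by Lemma~\ref{le:1-planar-gadget-impossible} forces $B^{(t,j)}\in\{X_\positive,X_\negative\}$ with its column equal to the satisfying side. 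That block witnesses that the literal of $x_t$ in $C_j$ is true under $q$, so $C_j$ is satisfied. Doing this for every $j$ shows $I$ is satisfiable.

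The main obstacle I expect is the third paragraph: setting up the ``state along the clause row'' bookkeeping so that Lemma~\ref{le:1-planar-gadget-impossible} genuinely forces a type-$X_{\positive}/X_{\negative}$ block with matching column, rather than merely a block that happens to change state. This requires pinning down, from the anchor ordering, exactly which boundary face each of $v^{(i,j)}_{(0,1)},v^{(i,j)}_{(2,1)},v^{(i,j)}_{(1,0)},v^{(i,j)}_{(1,2)}$ can occupy at the extreme blocks, and then a clean monotonicity statement: a generic block preserves the side-state, a matching block may toggle it, and consistency of the toggle with the column assignment is exactly the condition ``the literal is satisfied.'' Once that invariant is isolated, the rest is the same easy counting as in Lemma~\ref{le:harddirection}.
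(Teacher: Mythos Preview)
Your first two paragraphs are essentially the paper's argument: minimum-crossing $1$-drawing, Lemma~\ref{le:5-thick} gives an embedded $B$ with the canonical face structure, the boundary anchors and their $5$-paths pin the corners so that each $B^{(i,j)}\cup R^{(i,j)}$ is compliant, and then Lemma~\ref{le:1-planar-gadget-pattern}(e)--(f) plus the vertical gluings give a well-defined $q_i$.

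The third paragraph, however, misreads the right-hand boundary condition and then tries to compensate with an unnecessary parity argument. Look again at the cyclic ordering: $d'_j$ sits between $u^{(n,j)}_{(4,3)}$ and $u^{(n,j)}_{(4,2)}$, so its $5$-path forces $v^{(n,j)}_{(2,1)}$ (not ``$v^{(n,j)}_{(4,1)}$'', which does not exist) into the \emph{upper} face $f^{(n,j)}_{(3,2)}$. Thus the row state enters low on the left and is forced \emph{high} on the right --- exactly the analogue of $H_j$ going from $L_j$ to $U_j$ in Section~\ref{sec:hard1}, as you first suspected before talking yourself out of it.

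With the correct boundary conditions the finish is a one-line contradiction, not a monotonicity count. Suppose $C_j$ is not satisfied under $q$. Then for every $i$, either $B^{(i,j)}=X$, or $B^{(i,j)}=X_\negative$ with $q_i=T$ (so $v^{(i,j)}_{(1,0)}\in f^{(i,j)}_{(1,0)}$), or $B^{(i,j)}=X_\positive$ with $q_i=F$ (so $v^{(i,j)}_{(1,0)}\in f^{(i,j)}_{(2,0)}$). In each of these three cases Lemma~\ref{le:1-planar-gadget-impossible} says: if $v^{(i,j)}_{(0,1)}$ is low then $v^{(i,j)}_{(2,1)}$ is low. Since $v^{(i,j)}_{(2,1)}=v^{(i+1,j)}_{(0,1)}$, this propagates from $i=1$ to $i=n$ and gives $v^{(n,j)}_{(2,1)}\in f^{(n,j)}_{(3,1)}$, contradicting the anchor $d'_j$. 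So drop the ``both ends low / forced change somewhere'' discussion entirely; once you fix the position of $d'_j$, Lemma~\ref{le:1-planar-gadget-impossible} already encodes the needed invariant and there is nothing further to set up.
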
 
\begin{proof}
	Assume that $G$ is an anchored $1$-planar graph and consider
	an anchored $1$-drawing $\D$ of $G$ with the minimum number of crossings.
	Let $H$ be the subgraph of $G$ consisting of $5$-thick edges.
	By Lemma~\ref{le:5-thick}(a), the restriction $\D_H$ is an embedding.
	The embedding of $\D_H$ is unique, up to permutation of parallel $2$-paths,
	because $H$ is essentially a subdivision of a grid.
	We can further argue that $\D_B$ is an embedding. 
	Indeed, if any of the $2$- or $3$-paths contained in $B^{(i,j)}$ would
	participate in some crossing in $\D_B$, 
	they can be redrawn to obtain another $1$-drawing with fewer crossings. 

	The embedding $\D_B$ is unique,
	up to permutations of parallel $2$-paths.
	Thus, $\D_B$ corresponds to the embedding used during the previous discussion,
	and we can talk about the subgraphs $B^{(i,j)}$ and their faces $f^{(i,j)}_{(\alpha,\beta)}$.
	By Lemma~\ref{le:5-thick}(b), the vertices of $R$ cannot be in any inner face of $\D_B$.
	Thus, each vertex of $R$ is in some face $f^{(i,j)}_{(\alpha,\beta)}$. 
	
	For each variable $x_i$ and each clause $C_j$, the anchors $a_{i}$, $b_{i}$, $c_j$, $d_j$ 
	force that the vertex $v^{(i,j)}_{(2,2)}$ is in the face $f^{(i,j)}_{(3,3)}$. 
	Indeed, if $v^{(i,j)}_{(2,2)}$ would be in any other face, at least one of the paths connecting it
	to the anchors cannot be drawn with its vertices being in non-inner faces.
	Similar statements hold for each corner of each $R^{(i,j)}$ and, as a consequence,
	the restriction of $\D$ to any $B^{(i,j)}\cup R^{(i,j)}$ is a compliant $1$-drawing.
	Furthermore, for each clause $C_j$ the anchors $c'_j$ and $d'_j$ force that
	$v^{(1,j)}_{(0,1)}$ is in $f^{(1,j)}_{(0,1)}$ and $v^{(n,j)}_{(2,1)}$ is in $f^{(n,j)}_{(3,2)}$.
	
	Consider any variable $x_i$.
	Because of Lemma~\ref{le:1-planar-gadget-pattern}(e) and (f), the vertices $f^{(i,j)}_{(1,0)}$, $j=1,\dots,m$,
	are all in the column $C_i^T$ or the column $C_i^F$. In the former case, we define $q_i=T$, and
	in the latter we define $q_i=F$.
	
	We next argue that the assignment $\{ x_i=q_i \}_i$ satisfies all clauses $C_1,\dots,C_m$,
	which implies that $I$ is satisfiable.
	To see this assume, for the sake of contradiction, that some $C_j$ is not satisfied.
	This means that, whenever $x_i=T$, $B^{(i,j)}$ is a copy of $X$ or $X_\negative$,
	and whenever $x_i=F$, $B^{(i,j)}$ is a copy of $X$ or $X_\positive$.
	Since $v^{(1,j)}_{(0,1)}$ is in the face $f^{(1,j)}_{(0,1)}$,
	consecutive applications of Lemma~\ref{le:1-planar-gadget-impossible} imply
	that $v^{(i,j)}_{(2,1)}$ is in the face $f^{(i,j)}_{(3,1)}$ for each $i=1,\dots,n$,
	and therefore $v^{(n,j)}_{(2,1)}$ is in the face $v^{(n,j)}_{(3,1)}$. This contradicts
	the fact that in $\D$ the vertex $v^{(n,j)}_{(2,1)}$ is in the face $f^{(n,j)}_{(3,2)}$, as discussed before.
\end{proof}

Theorem~\ref{thm:1-planar-hard1} follows from Lemmas~\ref{le:1-planar-easy} and~\ref{le:1-planar-hard}
because the construction of $G$ takes linear time and has maximum degree $20$.

\subsection{1-planarity of near-planar graphs}

\begin{theorem}
\label{thm:hard3} 
	Deciding whether a given near-planar graph is $1$-planar is NP-complete,
	even when the graph has bounded maximum degree, has bounded crossing number, and can
	be drawn in the plane so that one edge crosses two other edges, but every other edge participates in at most one crossing.
\end{theorem}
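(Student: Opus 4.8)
I would prove this by reduction from anchored $1$-planarity (Theorem~\ref{thm:1-planar-hard1}), in close analogy with the passage from Theorem~\ref{thm:hard1} to Theorem~\ref{thm:hard10} in Section~\ref{sec:hard2}, but using $5$-thick edges in place of heavy weights so that Lemma~\ref{le:5-thick} does the job the weights did there. First, NP membership is clear: a $1$-drawing is a polynomial certificate checkable in polynomial time, so it suffices to show NP-hardness.

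Given a SAT instance $I$, take the anchored graph $G=G(I)$ from Section~\ref{sec:anchored-1-planar}, decomposed into vertex-disjoint planar anchored graphs $R=R(I)$ and $B=B(I)$ with anchor order $\pi_G$; recall $\Delta(G)\le 20$ and that (by Lemmas~\ref{le:1-planar-easy} and~\ref{le:1-planar-hard}) $G$ is anchored $1$-planar iff $I$ is satisfiable. I would build a near-planar graph $\hat G=\hat G(I)$ by ``closing the disk'': connect consecutive anchors of $G$ (in the order $\pi_G$) along a frame $F$ assembled from $5$-thick edges, attach $R$ on one side and $B$ on the other so that $F\cup R\cup B$ is planar, and then add a single ordinary edge $e$ whose two endpoints are forced, by the rigidity of $F$, to lie on opposite sides of $F$ in every planar embedding of $F\cup R\cup B$. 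This makes $\hat G=(F\cup R\cup B)+e$ near-planar, of bounded degree, and of bounded crossing number; moreover, routing $e$ across $F$ yields a drawing in which $e$ crosses exactly two other edges while every other edge is crossed at most once, which is the extra property claimed in the statement.

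The crux is the equivalence $\hat G$ is $1$-planar $\iff$ $I$ is satisfiable. For the easy direction, a satisfying assignment yields, by Lemma~\ref{le:1-planar-easy}, an anchored $1$-drawing of $G$ in a disk $\Omega$; drawing $F$ just along $\partial\Omega$ and routing $e$ across it gives a $1$-drawing of $\hat G$, provided the frame and the choice of $e$ are arranged so that $e$ needs only one crossing in this configuration. For the hard direction, consider a $1$-drawing $\D$ of $\hat G$ with the fewest crossings. By Lemma~\ref{le:5-thick}(a) the frame $F$ is embedded; the heart of the matter is to show that, because $e$ participates in at most one crossing, $R$ and $B$ cannot ``straddle'' $F$ and must in fact lie in one disk bounded by $F$, with their anchors appearing along the boundary of that disk in the cyclic order $\pi_G$ — so that $\D$ restricted to $R\cup B$ is an anchored $1$-drawing of $G$, whence $I$ is satisfiable by Lemma~\ref{le:1-planar-hard}. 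Here one uses Lemma~\ref{le:5-thick}(b), applied to the many edge-disjoint paths inside $R\cup B$ that reach the anchors, together with the rigidity of $F$, to rule out any vertex of $R$ or $B$ escaping to the wrong region; the role of $e$ is precisely to tie together the otherwise-independent possibilities for $R$ and for $B$.

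The main obstacle I anticipate is exactly what the weighted construction in Section~\ref{sec:hard2} sidestepped by fiat: with no global budget on the number of crossings in a $1$-drawing, the statement ``$F$ is uncrossed and $R,B$ share a disk'' must be forced using only the local obstructions of Lemma~\ref{le:5-thick} and the single extra edge $e$. Designing $F$ so that it is (i) rigid enough that straddling it is locally impossible in any $1$-drawing, (ii) planar together with $R$ and $B$, and (iii) still of bounded degree — while keeping the only non-planarity of $\hat G$ confined to the one edge $e$ — is the delicate engineering step. Once that is in place, the rest of the argument is bookkeeping parallel to Section~\ref{sec:hard2} and to the proof of Lemma~\ref{le:1-planar-hard}.
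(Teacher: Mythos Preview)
Your approach is the paper's: the paper also replaces the disk boundary by a cycle $C$ of $5$-thick edges between consecutive anchors and adds a connection between a blue and a red non-anchor to tie $R$ and $B$ to the same side of $C$. The hard direction you flag as the main obstacle is handled more briefly than you anticipate: Lemma~\ref{le:5-thick}(a), applied to \emph{all} $5$-thick edges of $G'$ (not just those of $C$), embeds $C$ together with the rigid $5$-thick skeleton of $B$, pinning $B$ to one side of $C$; the connecting path then forces $R$ to the same side.

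Two engineering choices in the paper are worth pointing out against your sketch. First, the paper adds a $9$-\emph{path} between $u^{(1,1)}_{(1,1)}$ and $v^{(1,1)}_{(0,0)}$, taking $e$ to be one of its edges, rather than a single edge. This is what secures the clause ``one edge crosses two other edges, every other edge at most one'': in the near-planar drawing with $B$ inside $C$ and $R$ outside, the two stubs of the path absorb most of the crossings through $C$ and leave $e$ itself with only two; a single edge going from inside $C$ to outside would already have to cross five edges of one $5$-thick edge of $C$. Second, your easy direction routes $e$ across the frame, but the paper's specific endpoints are chosen so that $u^{(1,1)}_{(1,1)}$ lies on the boundary of the very face that contains $v^{(1,1)}_{(0,0)}$ in any anchored $1$-drawing of $G$ --- so when $I$ is satisfiable the $9$-path and $C$ can be added with \emph{zero} new crossings, sidestepping the danger that the edge you cross is already at its one-crossing limit.
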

\begin{proof}
	The proof is very similar to the proof of Theorem~\ref{thm:hard10},
	but in this case it will be convenient 
	to build on the precise construction used in Section~\ref{sec:anchored-1-planar}.
	
	Let $G$ be the anchored graph constructed in Section~\ref{sec:anchored-1-planar}.
	Consider the graph $G'$ obtained from $G$ as follows.
	For every two consecutive anchors $a$ and $a'$ of $G$ in the cyclic ordering $\pi_G$,
	we introduce in $G'$ a $5$-thick edge connecting $a$ to $a'$. 
	The set of added edges defines a $5$-thick cycle, which we denote by $C$.
	Finally, we add a $9$-path between $u^{(1,1)}_{(1,1)}$ and $v^{(1,1)}_{(0,0)}$,
	and let $e$ denote an edge on this path.
	
	The graph $G'$ is a near-planar graph because $G'-e$ is planar: we can draw $C$ as a cycle,
	draw $B$ inside $C$ planarly, and draw $R$ outside $C$ planarly. 
	Adding the edge $e$ to this drawing shows that $\CR(G')\le 10$. All edges except $e$ in this drawing obey
	the $1$-planarity condition and it can be achieved that $e$ crosses only two other edges.
	
	The instance $I$ is satisfiable if and only if $G'$ is $1$-planar.
	Indeed, when $I$ is satisfiable, the drawing of $G$
	described in Lemma~\ref{le:1-planar-easy} can be extended to a drawing $G'$ because
	$e$ and $C$ can be added without using additional crossings.
	When $G'$ is $1$-planar, its $1$-drawing $\D$ with the minimum number of crossings
	has the property that $\D_C$ is an embedding because of Lemma~\ref{le:5-thick}(a).
	The edge $e$ forces that $G$ is drawn inside or outside $C$ because $e$ can
	only participate in one crossing. Therefore, the restriction of $\D$ to $G$
	is an anchored $1$-drawing of $G$, 
	which implies that $I$ is satisfiable by Lemma~\ref{le:1-planar-hard}.
	Since $G'$ can be constructed from $G$ in linear time and $G'$ has maximum
	degree 20, the result follows.
\end{proof}

\section*{Acknowledgments} We thank anonymous reviewers of~\cite{cm-socg-2010} for 
several helpful suggestions.

\bibliographystyle{abbrv}
\bibliography{biblio}

\begin{thebibliography}{10}

\bibitem{BPT}
K.~J. B{\"o}r{\"o}zky, J.~Pach, and G.~T{\'o}th.
\newblock Planar crossing numbers of graphs embeddable in another surface.
\newblock {\em Int. J. Found. Comput. Sci.}, 17(5):1005--1016, 2006.

\bibitem{cm-socg-2010}
S.~Cabello and B.~Mohar.
\newblock Adding one edge to planar graphs makes crossing number hard.
\newblock In {\em Proc. SoCG 2010}, pages 68--76, 2010.

\bibitem{CM}
S.~Cabello and B.~Mohar.
\newblock Crossing and weighted crossing number of near-planar graphs.
\newblock {\em Algorithmica}, 60(3):484--504, 2011.

\bibitem{CH}
M.~Chimani and P.~Hlin\v{e}n\'y.
\newblock Approximating the crossing number of graphs embeddable in any
  orientable surface.
\newblock In {\em Proc. SODA 2010}, pages 918--927, 2010.

\bibitem{CHM-12}
M.~Chimani, P.~Hlin\v{e}n\'y, and P.~Mutzel.
\newblock Vertex insertion approximates the crossing number of apex graphs.
\newblock {\em Eur. J. Comb.}, 33(3):326--335, 2012.

\bibitem{chuzhoy}
J.~Chuzhoy.
\newblock An algorithm for the graph crossing number problem.
\newblock In {\em Proc. STOC 2011}, pages 303--312, 2011.
\newblock See \url{http://arxiv.org/abs/1012.0255} for the full version.

\bibitem{DV}
H.~Djidjev and I.~Vrto.
\newblock Planar crossing numbers of genus {\it g} graphs.
\newblock In {\em ICALP 2006}, volume 4051 of {\em LNCS}, pages 419--430, 2006.

\bibitem{GJ}
M.~R. Garey and D.~S. Johnson.
\newblock Crossing number is {NP}-complete.
\newblock {\em SIAM J. Alg. Discr. Meth.}, 4:312--316, 1983.

\bibitem{G}
M.~Grohe.
\newblock Computing crossing numbers in quadratic time.
\newblock {\em J. Comput. Syst. Sci.}, 68(2):285--302, 2004.

\bibitem{GMW}
C.~Gutwenger, P.~Mutzel, and R.~Weiskircher.
\newblock Inserting an edge into a planar graph.
\newblock {\em Algorithmica}, 41:289--308, 2005.

\bibitem{Hli}
P.~Hlin\v{e}n\'y.
\newblock Crossing number is hard for cubic graphs.
\newblock {\em J. Comb. Theory, Ser. B}, 96(4):455--471, 2006.

\bibitem{HS2}
P.~Hlin\v{e}n\'y and G.~Salazar.
\newblock Approximating the crossing number of toroidal graphs.
\newblock In T.~Tokuyama, editor, {\em ISAAC 2007}, volume 4835 of {\em LNCS},
  pages 148--159, 2007.

\bibitem{HS}
P.~Hlin\v{e}n\'y and G.~Salazar.
\newblock On the crossing number of almost planar graphs.
\newblock In M.~Kaufmann and D.~Wagner, editors, {\em GD 2006}, volume 4372 of
  {\em LNCS}, pages 162--173. Springer, 2007.

\bibitem{KR}
K.-I. Kawarabayashi and B.~Reed.
\newblock Computing crossing number in linear time.
\newblock In {\em Proc. STOC 2007}, pages 382--390, 2007.

\bibitem{km-1-planar-08}
V.~P. Korzhik and B.~Mohar.
\newblock Minimal obstructions for 1-immersions and hardness of 1-planarity
  testing.
\newblock In {\em Graph Drawing}, volume 5417 of {\em LNCS}, pages 302--312.
  Springer, 2008.

\bibitem{km-1-planar-appear}
V.~P. Korzhik and B.~Mohar.
\newblock Minimal obstructions for 1-immersions and hardness of 1-planarity
  testing.
\newblock {\em J. Graph Theory}, to appear.

\bibitem{M}
B.~Mohar.
\newblock On the crossing number of almost planar graphs.
\newblock {\em Informatica}, 30:301--303, 2006.

\bibitem{PSS}
M.~J. Pelsmajer, M.~Schaefer, and D.~{\v S}tefankovic.
\newblock Crossing numbers of graphs with rotation systems.
\newblock {\em Algorithmica}, 60(3):679--702, 2011.

\bibitem{ringel}
G.~Ringel.
\newblock {Ein Sechsfarbenproblem auf der Kugel}.
\newblock {\em Abhandlungen aus dem Mathematischen Seminar der Universit\"{a}t
  Hamburg}, 29(1--2):107--117, 1965.

\bibitem{Ri}
A.~Riskin.
\newblock The crossing number of a cubic plane polyhedral map plus an edge.
\newblock {\em Studia Sci. Math. Hungar.}, 31:405--413, 1996.

\bibitem{W}
S.~Werner.
\newblock On the complexity of the planar edge-disjoint paths problem with
  terminals on the outer boundary.
\newblock {\em Combinatorica}, 29(1):121--126, 2009.

\end{thebibliography}
\end{document}